\newtheorem{vor}{Assumption}[section]
\newtheorem{theorem}[vor]{Theorem}
\newtheorem{lem}[vor]{Lemma}
\newtheorem{cor}[vor]{Corollary}
\theoremstyle{definition}
\newtheorem{defi}[vor]{Definition}
\numberwithin{equation}{section}
\begin{document}
\title{The Fr\"ohlich Polaron at Strong Coupling -- Part II: Energy-Momentum Relation and Effective Mass}
\author{\vspace{.3cm} \textsc{Morris Brooks and Robert Seiringer}\\ IST Austria, Am Campus 1, 3400 Klosterneuburg, Austria}
\date{October 20, 2022}
\maketitle

\begin{abstract} 
\textsc{Abstract}. We study the Fr\"ohlich polaron model in $\mathbb{R}^3$, and prove a lower bound on its ground state energy as a function of the total momentum. The bound is asymptotically sharp at large coupling. In combination with a corresponding upper bound proved earlier \cite{MMS}, it shows that the energy is approximately parabolic below the continuum threshold, and that the polaron's effective mass (defined as the semi-latus rectum of the parabola) is given by the celebrated Landau--Pekar formula. In particular, it diverges as $\alpha^4$ for large coupling constant $\alpha$.
\end{abstract}

\section{Introduction and Main Results}
\label{Section: Introduction}
This is the second part of a study of the Fr\"ohlich polaron \cite{F37} in the regime of strong coupling between the electron and the phonons, which are the optical modes of a polar crystal. Our goal is to quantify the heuristic picture that the mass of an electron in a polarizable medium effectively increases due to an emerging phonon cloud attached to it. We are going to verify that the energy-momentum relation of a polaron is asymptotically given by the semi-classical formula $E(P)-E(0)=\frac{|P|^2}{2\alpha^4 m}$, which agrees with the energy-momentum relation of a particle having mass $\alpha^4 m$, where $\alpha^4 m$ is the  asymptotic formula conjectured by Landau and Pekar  \cite{Lpekar} for the mass of a polaron in the regime where the coupling parameter $\alpha$ goes to infinity. 

Following the notation of the first part \cite{BS1}, where a second order expansion for the absolute ground state energy of a polaron was verified, we are going to use creation and annihilation operators satisfying the semi-classical rescaled canonical commutation relations $[a(f),a^\dagger(g)]=\alpha^{-2}\braket{f|g}$ for $f,g\in L^2\! \left(\mathbb{R}^3\right)$, in order to introduce the Fr\"ohlich Hamiltonian acting on the Fock space $L^2\!\left(\mathbb{R}^3\right)\otimes \mathcal{F}\left(L^2\!\left(\mathbb{R}^3\right)\right)$ as
\begin{align*}
\mathbb{H}:=-\Delta_x-a\left(w_x\right)-a^\dagger\left(w_x\right)+\mathcal{N},
\end{align*}
where $w_x(x'):=\pi^{-\frac{3}{2}}|x'-x|^{-2}$ and the (rescaled) particle number operator $\mathcal{N}$ equals $\mathcal{N}:=\sum_{n=1}^\infty a^\dagger(\varphi_n)a(\varphi_n)$ for an orthonormal basis $\{\varphi_n:n\in \mathbb{N}\}$ of $L^2(\mathbb{R}^3)$. The Fr\"ohlich Hamiltonian $\mathbb{H}$ commutes with the components $\left(\mathbb{P}_1,\mathbb{P}_2,\mathbb{P}_3\right)$ of the total momentum operator
%
$$
\mathbb{P}:=\frac{1}{i}\nabla+\alpha^2\int_{\mathbb{R}^3}k\,  a_k^\dagger a_k\mathrm{d}k \,,
$$
 where we use the standard notation $\int_{\mathbb{R}^3}f(k) a_k^\dagger a_k\mathrm{d}k$ as a symbolic expression for the operator $\sum_{n,m=1}^\infty \Big\langle \varphi_n\Big|f\left(\frac{1}{i}\nabla \right)\Big|\varphi_m\Big\rangle a^\dagger(\varphi_n)a(\varphi_m)$. 
 Hence we can study  their joint spectrum $\sigma\left(\mathbb{P},\mathbb{H}\right)\subseteq \mathbb{R}^4$, and  define the ground state energy $E_\alpha(P)$ of $\mathbb{H}$ at total momentum $P$ as 
 $E_\alpha(P):=\inf\big\{E:(P,E)\in \sigma\left(\mathbb{P},\mathbb{H}\right)\big\}$. Our main result below is the proof of  the asymptotic energy-momentum relation
\begin{align}
\label{Equation-EnergyMomentum}
E_\alpha(P)=E_\alpha(0)+\min\bigg\{\frac{|P|^2}{2\alpha^4 m}, \alpha^{-2}\bigg\}+O_{\alpha\rightarrow \infty}\left(\alpha^{-(2+w)}\right),
\end{align}
where $w>0$ is a suitable constant and $m$ is the conjectured constant by Landau and Pekar. In order to provide an explicit expression for $m$, let us first define the Pekar functional $\mathcal{F}^\mathrm{Pek}\! \left(\varphi\right):=\|\varphi\|^2+\inf \sigma\left(-\Delta+V_\varphi\right)$ for $\varphi\in L^2\! \left(\mathbb{R}^3\right)$, where we define the potential $V_\varphi:=-2\left(-\Delta\right)^{-\frac{1}{2}}\mathfrak{Re}\, \varphi$. If follows from the analysis in \cite{Li} that there exists a unique radial minimizer $\varphi^\mathrm{Pek}$ of the functional $\mathcal{F}^\mathrm{Pek}$. With this minimizer at hand, we can  introduce the constant $m:=\frac{2}{3}\left\|\nabla \varphi^\mathrm{Pek}\right\|^2$ in Eq.~(\ref{Equation-EnergyMomentum}). 

In order to formulate our main Theorem \ref{Theorem: Parabolic Lower Bound}, let us further introduce the minimal Pekar energy $e^\mathrm{Pek}:=\inf_\varphi \mathcal{F}^\mathrm{Pek}\! \left(\varphi\right)$ as well as the Hessian $H^\mathrm{Pek}$ of $\mathcal{F}^\mathrm{Pek}$ at the minimizer $\varphi^\mathrm{Pek}$ restricted to real-valued functions $\varphi\in L^2_\mathbb{R}\! \left(\mathbb{R}^3\right)$, i.e. we define $H^\mathrm{Pek}$ as the unique self-adjoint operator on $L^2\! \left(\mathbb{R}^3\right)$  satisfying 
\begin{align*}
\braket{\varphi|H^\mathrm{Pek}|\varphi}=\lim_{\epsilon\rightarrow 0}\frac{1}{\epsilon^2}\left(\mathcal{F}^\mathrm{Pek}\! \left(\varphi^\mathrm{Pek}+\epsilon \varphi\right)-e^\mathrm{Pek}\right)
\end{align*}
for all $\varphi\in L^2_\mathbb{R}\! \left(\mathbb{R}^3\right)$. With this notation at hand, we can state our main new result in Theorem \ref{Theorem: Parabolic Lower Bound}. It provides a sharp asymptotic lower bound on the ground state energy $E_\alpha(P)$ of the operator $\mathbb{H}$ as a function of the total momentum $\mathbb{P}$.

\begin{theorem}
\label{Theorem: Parabolic Lower Bound}
There exists a constant $w>0$ such that 
\begin{align}
\label{Equation-Main}
E_\alpha(P)\geq e^\mathrm{Pek}-\frac{1}{2\alpha^2}\mathrm{Tr}\left[1-\sqrt{H^\mathrm{Pek}}\, \right]+\min\bigg\{\frac{|P|^2}{2\alpha^4 m}, \alpha^{-2}\bigg\}-\alpha^{-(2+w)}
\end{align}
for all $P\in \mathbb{R}^3$ and for all $\alpha\geq \alpha_0$, where $\alpha_0$ is a suitable constant.
\end{theorem}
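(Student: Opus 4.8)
The plan is to first eliminate the conserved total momentum by a Lee--Low--Pines transformation, which identifies $E_\alpha(P)$ with $\inf\sigma(\mathbb{H}_P)$, where the fibre Hamiltonian
\begin{align*}
\mathbb{H}_P := \left(P - \mathbb{P}_f\right)^2 - a\!\left(w_0\right) - a^\dagger\!\left(w_0\right) + \mathcal{N}, \qquad \mathbb{P}_f := \alpha^2\!\int_{\mathbb{R}^3} k\, a_k^\dagger a_k\, \mathrm{d}k,
\end{align*}
acts on $\mathcal{F}\!\left(L^2(\mathbb{R}^3)\right)$ and $w_0(x') = \pi^{-3/2}|x'|^{-2}$. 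Since Part~I \cite{BS1} already supplies the two--term expansion $E_\alpha(0) = e^\mathrm{Pek} - \frac{1}{2\alpha^2}\mathrm{Tr}[1-\sqrt{H^\mathrm{Pek}}\,] + O(\alpha^{-(2+w)})$, it is enough to prove $\inf\sigma(\mathbb{H}_P) \ge E_\alpha(0) + \min\{|P|^2/(2\alpha^4 m),\alpha^{-2}\} - C\alpha^{-(2+w)}$. The cut--off $\alpha^{-2}$ is the energy of a single free phonon, and since $P\mapsto E_\alpha(P)$ is minimised at the origin, an HVZ--type analysis gives $\inf\sigma_\mathrm{ess}(\mathbb{H}_P) = E_\alpha(0) + \alpha^{-2}$ for every $P$. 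I would then argue by dichotomy: either $\inf\sigma(\mathbb{H}_P) \ge E_\alpha(0) + \alpha^{-2}$, and the bound is immediate; or $\inf\sigma(\mathbb{H}_P)$ is an eigenvalue strictly below the essential spectrum, and -- since then $\min\{|P|^2/(2\alpha^4 m),\alpha^{-2}\} \le |P|^2/(2\alpha^4 m)$ -- it suffices to prove the \emph{parabolic} lower bound $\inf\sigma(\mathbb{H}_P) \ge E_\alpha(0) + |P|^2/(2\alpha^4 m) - C\alpha^{-(2+w)}$ for the associated ground state. In particular no separate treatment of large $|P|$ is needed.

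The core of the matter is therefore the parabolic lower bound for a ground state $\psi_P$ of $\mathbb{H}_P$ below the continuum. The first ingredient consists of the $P$--dependent analogues of the a priori estimates of \cite{BS1}: using the coercivity of $\mathcal{F}^\mathrm{Pek}$ modulo translations (which follows from \cite{Li} together with the strict positivity of $H^\mathrm{Pek}$ on the orthogonal complement of $\ker H^\mathrm{Pek} = \mathrm{span}\{\partial_j\varphi^\mathrm{Pek}\}$), one shows that $\psi_P$ is concentrated near the Pekar state, that the number of excitations transversal to the soft sector $\mathrm{span}\{\partial_j\varphi^\mathrm{Pek}\}$ is of order $\alpha^{-2}$, and, crucially, that $\langle\psi_P,\mathbb{P}_f^2\,\psi_P\rangle$ remains bounded. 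One then carries out a Bogoliubov--type expansion around the Pekar state in which the soft sector is handled separately from the transversal modes. The mechanism producing the effective mass is the following: exciting the direction $i\,(\hat P\!\cdot\!\nabla)\varphi^\mathrm{Pek}$ with amplitude $s$ shifts the field momentum $\mathbb{P}_f$ by $\alpha^2 m\,s$ along $\hat P$ while costing only $\tfrac12 m\,s^2$ in $\mathcal{N}$ (with $m = \tfrac23\|\nabla\varphi^\mathrm{Pek}\|^2 = 2\!\int k_j^2\,|\widehat{\varphi^\mathrm{Pek}}|^2\,\mathrm{d}k$), so that optimising in $s$ the $s$--dependent part of $(P-\mathbb{P}_f)^2 + \mathcal{N}$ -- i.e.\ letting the phonon cloud absorb the momentum $P$ -- yields exactly $|P|^2/(2\alpha^4 m)$; the transversal fluctuations, after Bogoliubov diagonalisation, then contribute $-\frac{1}{2\alpha^2}\mathrm{Tr}[1-\sqrt{H^\mathrm{Pek}}\,]$ precisely as in the case $P=0$ of \cite{BS1}.

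The step I expect to be the main obstacle is to carry this out with all error terms controlled \emph{uniformly for $|P|$ up to order $\alpha$} and with a genuine power--law rate $\alpha^{-(2+w)}$. The difficulty is that the effective mass is as large as $\alpha^4 m$, so a collective momentum of order $\alpha$ carries kinetic energy only of order $\alpha^{-2}$ -- exactly the size of the Bogoliubov correction, and far above the target precision. Consequently every cross term coupling the soft sector to the transversal excitations, every cubic and quartic remainder of the expansion of $\mathbb{H}_P$ about the Pekar state (in particular the genuinely quartic operator $(P-\mathbb{P}_f)^2$ restricted to the excitation subspace, which comes with the large prefactor $|P|^2 \lesssim \alpha^2$), and every commutator error incurred in extracting and completing the square in the soft sector must be shown to be $o(\alpha^{-2})$. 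This forces quantitative versions of the concentration and condensation estimates with explicit rates, a careful treatment of the non--commutativity between the soft coordinates, their conjugate momenta, $\mathbb{P}_f$ and $\mathcal{N}$, and a sharp analysis of the suppression of $\mathbb{P}_f$--fluctuations in low--energy states.
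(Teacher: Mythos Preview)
Your high-level dichotomy (above vs.\ below the continuum threshold $E_\alpha(0)+\alpha^{-2}$) and your identification of the soft direction $i\nabla\varphi^{\mathrm{Pek}}$ as the source of the effective mass are both correct. However, the paper's route differs substantially from your Lee--Low--Pines fibre approach, and the difference matters precisely at the step you flag as the main obstacle.

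The paper does \emph{not} eliminate the electron coordinate. Instead it introduces an ultraviolet cutoff $\Lambda=\alpha^{4(1+\sigma)/5}$ in both the interaction and the field momentum and first proves the parabolic bound for the truncated energy $E_{\alpha,\Lambda}(P)$ with $|P|\le C\alpha$ (Theorem~\ref{Theorem: Parabolic Lower Bound for the truncated Model}). The passage back to $E_\alpha$ exploits that the joint spectrum of the high-momentum piece $(\mathbb{P}_1-\mathbb{P}_1',\mathbb{H}_\Lambda-\mathbb{H}_\Lambda')$ is exactly $\{(0,0)\}\cup\bigcup_{\ell\ge1}\mathbb{R}\times\{\ell/\alpha^2\}$; this is the paper's implementation of your dichotomy, but it is a purely algebraic fact about the number operator rather than an HVZ argument. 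For $|P|>\sqrt{2m}\,\alpha$ the paper invokes the monotonicity of $P\mapsto E_\alpha(P)$ from \cite{Po}; your claim that no separate treatment of large $|P|$ is needed is logically consistent, but then the parabolic bound must be established without any a priori restriction on $|P|$, which the paper does not attempt and your a priori estimates would not support (see below).

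The decisive technical difference concerns exactly the difficulty you single out: the quartic operator $(P-\mathbb{P}_f)^2$ with a prefactor of order $\alpha^2$. The paper sidesteps it entirely. Keeping $-\Delta_x$ instead of $(P-\mathbb{P}_f)^2$, the momentum constraint is enforced by the Lagrange-multiplier term $\tfrac{p}{m}(p-\Upsilon_\Lambda)$ with $p=P/\alpha^2$. This perturbation is \emph{linear} in $a,a^\dagger$ and, thanks to the cutoff, controlled by $\|\tfrac{p}{m}\widetilde\nabla_{x_1}\|_{\mathrm{op}}\lesssim |p|\Lambda\lesssim\alpha^{-1/5+O(\sigma)}$; it can therefore be absorbed into the harmonic-oscillator lower bound of \cite{BS1} as a small shift of the frequencies (see Eq.~(\ref{Equation-FinalEstimateUpsilon}) and the operators $\mathbb{Q}^\kappa_{t,\epsilon}$). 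In your scheme one instead faces a genuinely quartic term in the excitation variables, and the proposal contains no mechanism to make it $o(\alpha^{-2})$. Relatedly, the a priori estimate ``$\langle\psi_P,\mathbb{P}_f^2\,\psi_P\rangle$ remains bounded'' cannot hold as stated: for a ground state below the continuum one has $\langle(P-\mathbb{P}_f)^2\rangle=O(1)$, hence $\langle\mathbb{P}_f^2\rangle\sim|P|^2$, which is of order $\alpha^2$ at the edge of the parabolic regime. This is the reason the paper insists on a cutoff: without it, the Lagrange-multiplier operator $\mathbb{H}+\lambda(P-\mathbb{P})$ is unbounded below for any $\lambda\neq0$, and the analogous obstruction reappears in your quartic expansion.
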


That the lower bound in Eq.~(\ref{Equation-Main}) is indeed sharp follows from the corresponding asymptotic upper bound established in \cite{MMS}, given by 
\begin{align}
\label{Equation-Parabolic Upper Bound}
E_\alpha(P)\leq e^\mathrm{Pek}-\frac{1}{2\alpha^2}\mathrm{Tr}\left[1-\sqrt{H^\mathrm{Pek}}\, \right]+\min\bigg\{\frac{|P|^2}{2\alpha^4 m}, \alpha^{-2}\bigg\}+C_{\epsilon}\alpha^{-\frac{5}{2}+\epsilon},
\end{align} 
where $\epsilon>0$ is arbitrary and $C_{\epsilon}$ a suitable constant. In combination with Eq.~(\ref{Equation-Main}) this shows that 
\begin{align*}
E_\alpha(P)=e^\mathrm{Pek}-\frac{1}{2\alpha^2}\mathrm{Tr}\left[1-\sqrt{H^\mathrm{Pek}}\, \right]+\min\bigg\{\frac{|P|^2}{2\alpha^4 m}, \alpha^{-2}\bigg\}+O_{\alpha\rightarrow \infty}\left(\alpha^{-(2+w)}\right)
\end{align*}
for all $P\in \mathbb{R}^3$, which in particular proves Eq.~(\ref{Equation-EnergyMomentum}). Note that  $\alpha^{-2}$ corresponds to the continuum threshold; i.e., $\sigma(\mathbb{P},\mathbb{H}) \supset \mathbb{R}^3\times [E_\alpha(0) + \alpha^{-2},\infty)$, the latter corresponding to states describing free phonons on top of the polaron ground state \cite{JSM,LMM}. 

In particular, $E_\alpha(P)$ has an approximate parabolic shape below the continuum threshold, i.e., for $|P| < \sqrt{2m} \alpha$. The Landau--Pekar formula for the effective mass appears in the limit $\alpha\to \infty$ as the semi-latus rectum of the parabola, in the sense that  for any $0<|P|< \sqrt{2m}$
\begin{align}
\label{Equation: Alternative Effective Mass}
m= 
\lim_{\alpha\rightarrow \infty}\alpha^{-4}\frac{|\alpha P|^2}{2\left(E_\alpha(\alpha P)-E_\alpha(0)\right)} \,.
\end{align}
It is common the define the polaron's effective mass for fixed $\alpha$ as $$
M_\mathrm{eff}(\alpha):=\lim_{P\rightarrow 0}\frac{|P|^2}{2\left(E_\alpha(P)-E_\alpha(0)\right)}\,.
$$
The quantity on the right hand side of Eq.~(\ref{Equation: Alternative Effective Mass}) is clearly related to the large $\alpha$ limit of $\alpha^{-4}M_\mathrm{eff}(\alpha)$, with the difference being that the limit $P\rightarrow 0$ is taken before the limit $\alpha\rightarrow \infty$. While it is not clear at this point how to obtain the lower bound $\lim_{\alpha\rightarrow \infty}\alpha^{-4}M_\mathrm{eff}(\alpha)\geq m$, we can make use of the inequality $E_\alpha(P)\leq E_\alpha(0)+\frac{|P|^2}{2 M_\mathrm{eff}(\alpha)}$ recently proved in \cite{Po} in order to verify the upper bound $\lim_{\alpha\rightarrow \infty}\alpha^{-4}M_\mathrm{eff}(\alpha)\leq m$. In fact, by applying Eq.~(\ref{Equation-EnergyMomentum}) in the special case of $P$ satisfying $|P| = \sqrt{2m}\alpha$ we have
\begin{align*}
E_\alpha(0)+\frac{1}{\alpha^2 }+O_{\alpha\rightarrow \infty}\left(\alpha^{-(2+w)}\right)=E_\alpha(P)\leq E_\alpha(0)+\frac{m\alpha^2}{M_\mathrm{eff}(\alpha)},
\end{align*}
which yields the claimed upper bound on $M_\mathrm{eff}(\alpha)$. We formulate it as the subsequent Corollary. 

\begin{cor}
\label{Corollary: Parabolic Lower Bound}
There exists a constant $w>0$ such that $M_\mathrm{eff}(\alpha)\leq \alpha^4 m+O_{\alpha\rightarrow \infty}\! \left(\alpha^{4-w}\right)$.
\end{cor}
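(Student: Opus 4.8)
The plan is to combine the asymptotic energy--momentum relation \eqref{Equation-EnergyMomentum} --- which is itself the consequence of Theorem \ref{Theorem: Parabolic Lower Bound} together with the matching upper bound \eqref{Equation-Parabolic Upper Bound} from \cite{MMS} --- with the quadratic upper bound $E_\alpha(P)\le E_\alpha(0)+\frac{|P|^2}{2M_\mathrm{eff}(\alpha)}$ of \cite{Po}, evaluated at one carefully chosen momentum. The natural choice is the momentum $P_\alpha$ at which the two branches of the minimum in \eqref{Equation-EnergyMomentum} meet, i.e.\ with $|P_\alpha|=\sqrt{2m}\,\alpha$, so that $\frac{|P_\alpha|^2}{2\alpha^4 m}=\alpha^{-2}$ and hence $\min\big\{\frac{|P_\alpha|^2}{2\alpha^4 m},\alpha^{-2}\big\}=\alpha^{-2}$. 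This is exactly the crossover momentum below which the parabolic branch is active, and it is the value that makes both sides of the comparison scale in the same way.

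Concretely, I would first record, as an immediate consequence of \eqref{Equation-EnergyMomentum} at $P=P_\alpha$, the identity $E_\alpha(P_\alpha)-E_\alpha(0)=\alpha^{-2}+O_{\alpha\to\infty}(\alpha^{-(2+w)})$ with the same $w>0$ as in \eqref{Equation-EnergyMomentum}. Inserting $P_\alpha$ into the bound of \cite{Po} then gives $\alpha^{-2}+O_{\alpha\to\infty}(\alpha^{-(2+w)})=E_\alpha(P_\alpha)-E_\alpha(0)\le \frac{|P_\alpha|^2}{2M_\mathrm{eff}(\alpha)}=\frac{m\alpha^2}{M_\mathrm{eff}(\alpha)}$. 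Rearranging and using that for $\alpha$ large enough the error term is, say, bounded in absolute value by $\tfrac12$, so that $\big(1+O(\alpha^{-w})\big)^{-1}=1+O(\alpha^{-w})$, one obtains $M_\mathrm{eff}(\alpha)\le m\alpha^2\big(\alpha^{-2}+O(\alpha^{-(2+w)})\big)^{-1}=m\alpha^4\big(1+O(\alpha^{-w})\big)=\alpha^4 m+O_{\alpha\to\infty}(\alpha^{4-w})$, which is the claim (possibly after decreasing $w$).

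Since the deep inputs --- the sharp lower bound of Theorem \ref{Theorem: Parabolic Lower Bound} and the quadratic comparison inequality of \cite{Po} --- are taken as given, there is essentially no obstacle here: the argument is a two-line manipulation of already-established estimates, and coincides with the computation sketched in the discussion preceding this corollary. The only points that deserve a word of care are (i) that $M_\mathrm{eff}(\alpha)$ is well defined, finite and strictly positive, which is part of the statement of \cite{Po} and is in any case consistent with \eqref{Equation-EnergyMomentum} forcing $E_\alpha(P)>E_\alpha(0)$ for small $P\neq 0$; and (ii) that the denominator $\alpha^{-2}+O(\alpha^{-(2+w)})$ stays bounded away from $0$ for large $\alpha$, which is what legitimises the geometric expansion of $(1+x)^{-1}$. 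Both hold once $\alpha\ge\alpha_0$ with $\alpha_0$ enlarged if necessary.
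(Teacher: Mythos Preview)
Your proposal is correct and follows exactly the approach the paper uses: evaluate the energy--momentum relation \eqref{Equation-EnergyMomentum} at $|P|=\sqrt{2m}\,\alpha$, combine with the quadratic upper bound $E_\alpha(P)\le E_\alpha(0)+\frac{|P|^2}{2M_\mathrm{eff}(\alpha)}$ from \cite{Po}, and rearrange. This is precisely the computation the paper displays in the paragraph immediately preceding the corollary.
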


The remainder of this paper contains the proof of Theorem~\ref{Theorem: Parabolic Lower Bound}. In order to guide the reader, we start with a short explanation of the main strategy. 

\medskip

\textbf{Proof strategy of Theorem \ref{Theorem: Parabolic Lower Bound}.} Since $(P,E_\alpha(P))$ is an element of the joint spectrum of the operator pair $(\mathbb{P},\mathbb{H})$, there clearly exist states $\Psi_\alpha$ satisfying $\mathbb{P}\Psi_\alpha\approx P\Psi_\alpha$ and $\mathbb{H}\Psi_\alpha\approx E_\alpha(P)\Psi_\alpha$. In order to verify Theorem \ref{Theorem: Parabolic Lower Bound}, it is therefore  enough to show that $\braket{\Psi_\alpha|\mathbb{H}|\Psi_\alpha}$ is bounded from below by the right hand side of Eq.~(\ref{Equation-Main}). For this to hold it is crucial to use the additional information  $\mathbb{P}\Psi_\alpha\approx P\Psi_\alpha$ on the momentum, since in general $\mathbb{H}$, as an operator, is not bounded from below by the right hand side of Eq.~(\ref{Equation-Main}). 
It is not possible to transform the constrained minimization problem to a global one by the usual method of Lagrange multipliers, since the operators $\mathbb{P}$ are not bounded relative to $\mathbb{H}$. More precisely, while clearly
\begin{align}
\label{Equation: Trivial Lower Bound}
E_\alpha(P)\geq \inf \sigma\! \left(\mathbb{H}+\lambda(P-\mathbb{P})\right)
\end{align}
for any $\lambda  \in \mathbb{R}^3$, such a bound is insufficient as the right hand side is $-\infty$ for $\lambda\neq 0$, which follows easily from the fact that 
 $E_\alpha(P)$ is bounded uniformly in $P$ (compare with Eq.~(\ref{Equation-EnergyMomentum})).

In order to improve the lower bound in Eq.~(\ref{Equation: Trivial Lower Bound}), we introduce a wavenumber cut-off $\Lambda$ in the Hamiltonian $\mathbb{H}$ as well as in the momentum operator $\mathbb{P}$, leading to the study of the ground state energy $E_{\alpha,\Lambda}(P)$ of the truncated Hamiltonian $\mathbb{H}_\Lambda$ as a function of the truncated momentum $\mathbb{P}_\Lambda$. As we will show in the subsequent Section \ref{Section: Reduction to bounded Wavenumbers}, it is enough to prove Eq.~(\ref{Equation-Main}) for the modified energy $E_{\alpha,\Lambda}(P)$ in order to verify our main Theorem \ref{Theorem: Parabolic Lower Bound}. By introducing the cut-off we  manually exclude the radiative regime where a single phonon carries the total momentum, which is responsible for the (approximately) flat energy-momentum relation  $E_\alpha(P)$ above the threshold $|P|=\sqrt{2m}\alpha$ and the resulting collapse of the quadratic approximation $E_\alpha(P)-E_\alpha(0)\approx  \frac{|P|^2}{2\alpha^4 m}$ above this threshold.

In contrast, in the presence of the cut-off, it turns out that we can apply the method of Lagrange multiplies. 
%
%
%
%
We shall follow the strategy developed in the first part  \cite{BS1}, and construct approximate eigenstates $\Psi_\alpha$ to the joint eigenvalue $(P,E_{\alpha,\Lambda}(P))$ of the operator pair $(\mathbb{P}_\Lambda, \mathbb{H}_\Lambda)$, which in addition satisfy (complete) Bose--Einstein condensation with respect to the minimizer $\varphi^\mathrm{Pek}$ of the Pekar functional $\mathcal{F}^\mathrm{Pek}$. In this context we call $\Psi_\alpha$ an approximate eigenstate in case $\braket{\Psi_\alpha|(\mathbb{P}_\Lambda-P)^2|\Psi_\alpha}=O_{\alpha\rightarrow \infty}\! \left(\alpha^{2-r}\right)$ and $E_{\alpha,\Lambda}(P)\geq \braket{\Psi_\alpha|\mathbb{H}_\Lambda|\Psi_\alpha}+O_{\alpha\rightarrow \infty}\! \left(\alpha^{-(2+r)}\right)$ for some $r>0$. 
In order to verify that $E_{\alpha,\Lambda}(P)$ is bounded from below by the right hand side of Eq.~(\ref{Equation-Main}), it is consequently enough to show that
\begin{align}
\label{Equation: Central Estimate}
\Big\langle\Psi\Big|\mathbb{H}_\Lambda+\lambda\Big(P-\mathbb{P}_\Lambda\Big)\Big|\Psi\Big\rangle\geq e^\mathrm{Pek}-\frac{1}{2\alpha^2}\mathrm{Tr}\left[1-\sqrt{H^\mathrm{Pek}}\, \right]+\lambda P-\frac{\alpha^4 m |\lambda|^2}{2}-\alpha^{-(2+w)}
\end{align}
for all states $\Psi$ satisfying (complete) Bose--Einstein condensation with respect to the minimizer $\varphi^\mathrm{Pek}$, providing the desired lower bound for the optimal choice $\lambda=\frac{P}{m\alpha^4}$, with the term $\frac{\alpha^4 m |\lambda|^2}{2}$ in Eq.~(\ref{Equation: Central Estimate}) arising naturally as the Legendre transformation of the quadratic approximation $\frac{|P|^2}{2\alpha^4 m}$. 

Since Eq.~(\ref{Equation: Central Estimate}) claims a global lower bound, i.e. there is no constraint on the momentum of $\Psi$, we can utilize the methods developed in the first part \cite{BS1}, where  a lower bound on the total minimum $E_\alpha=\inf \sigma(\mathbb{H})$ was established. The basic idea is that we can find, up to a unitary transformation, a lower bound on the operator $\mathbb{H}_\Lambda+\frac{P}{m\alpha^4}\Big(P-\mathbb{P}_\Lambda\Big)$ of the form $e^\mathrm{Pek}+\frac{|P|^2}{2\alpha^4 m}+\mathbb{Q}+O_{\alpha\rightarrow \infty}\! \left(\alpha^{-(2+r)}\right)$, where $\mathbb{Q}$ is a system of harmonic oscillators, which holds when tested against states satisfying (complete) Bose--Einstein condensation. The ground state energy of $\mathbb{Q}$ can then be computed explicitly, giving rise to the quantum correction $-\frac{1}{2\alpha^2}\mathrm{Tr}\left[1-\sqrt{H^\mathrm{Pek}}\, \right]$ in Eq.~(\ref{Equation-Main}).\\

\textbf{Outline.} The paper is structured as follows. In Section \ref{Section: Reduction to bounded Wavenumbers} we shall show that it is sufficient to prove Eq.~(\ref{Equation-Main}) for a model including a suitable ultraviolet wavenumber cut-off in order to verify our main Theorem \ref{Theorem: Parabolic Lower Bound}. In the subsequent Section \ref{Section: Construction of a Condensate}, we will  construct approximate eigenstates for the truncated model defined in Section \ref{Section: Reduction to bounded Wavenumbers}, which in addition satisfy (complete) Bose--Einstein condensation with respect to the state $\varphi^\mathrm{Pek}$. Section \ref{Section: Proof of Theorem} is then devoted to the proof of our main technical Theorem \ref{Theorem: Parabolic Lower Bound for the truncated Model}, where we use the method of Lagrange multipliers in order to get rid of the momentum constraint. Finally, Appendix \ref{Appedinx: Auxiliary Results} contains auxiliary results on commutator estimates as well as properties of the Pekar minimizer $\varphi^\mathrm{Pek}$, which get used in the proof.

\section{Reduction to Bounded Wavenumbers}
\label{Section: Reduction to bounded Wavenumbers}
In this section we shall introduce the truncated Hamiltonian $\mathbb{H}_\Lambda$, which includes a wavenumber restriction $|k|\leq \Lambda$, and we are going to state our main technical Theorem \ref{Theorem: Parabolic Lower Bound for the truncated Model}, which provides an analogue of Theorem \ref{Theorem: Parabolic Lower Bound} for the truncated model. While the proof of Theorem \ref{Theorem: Parabolic Lower Bound for the truncated Model} is the content of Sections~\ref{Section: Construction of a Condensate} and~\ref{Section: Proof of Theorem}, we will verify in this Section that Theorem \ref{Theorem: Parabolic Lower Bound} is a consequence of Theorem \ref{Theorem: Parabolic Lower Bound for the truncated Model}, i.e. we will explain why it is enough to prove Eq.~(\ref{Equation-Main}) for a model including a wavenumber regularization. The quantum nature of our system, and in particular the discrete spectrum $\sigma\left(\mathcal{N}\right)=\Big\{0,\frac{1}{\alpha^2},\frac{2}{\alpha^2},\dots \Big\}$ of the number operator $\mathcal{N}$, is essential for this argument to work. In contrast, 
in the classical case the effective mass is infinite since there nothing prevents a priori the wavenumber from escaping to infinity without an energy penalty, and 
one has to introduce a suitable regularization in order to observe the expected asymptotics $M_\mathrm{eff}=\alpha^4 m+o_{\alpha\rightarrow \infty}\left(\alpha^4\right)$, see \cite{FRS}.

Before formulating Theorem \ref{Theorem: Parabolic Lower Bound for the truncated Model}, we shall introduce some useful notation. Following \cite{BS1}, we define for a function $f:X\longrightarrow \mathbb{R}$, $\epsilon\geq 0$ and $-\infty\leq a\leq b\leq \infty$, the function $\chi^\epsilon\left(a\leq f\leq b\right):X\longrightarrow [0,1]$ as
\begin{align}
\label{Equation-Epsilon cut-off}
\chi^\epsilon\left(a\leq f(x)\leq b\right):=
\begin{cases}
\alpha\left(\frac{f(x)-b}{\epsilon}\right)\beta\left(\frac{f(x)-a}{\epsilon}\right),\text{ for }\epsilon>0\\
\mathds{1}_{[a,b]}\left(f(x)\right),\text{ for }\epsilon=0, 
\end{cases}
\end{align}
where $\alpha,\beta:\mathbb{R}\longrightarrow [0,1]$ are given $C^\infty$ functions such that $\alpha^2+\beta^2=1$, $\mathrm{supp}\left(\alpha\right)\subset(-\infty,1)$ and $\mathrm{supp}\left(\beta\right)\subset(-1,\infty)$. Similarly we define the operator $\chi^\epsilon\left(a\leq T\leq b\right):=\int \chi^\epsilon\left(a\leq t\leq b\right)\mathrm{d}E$, where $T$ is a self-adjoint operator and $E$ the corresponding spectral measure. Furthermore let us write $\chi(a\leq f\leq b)$ in case $\epsilon=0$ and $\chi^\epsilon \left(\cdot\leq b\right)$, respectively $\chi^\epsilon \left(a\leq \cdot\right)$, in case $a=-\infty$ or $b=\infty$, respectively. With this notation at hand, we define the Hamiltonian $\mathbb{H}_\Lambda$ with wavenumber cut-off $\Lambda\geq 0$ as
\begin{align}
\label{Equation-CutOffHamilton}
\mathbb{H}_\Lambda:=-\Delta_x-a\left(\chi\left(\left|\nabla\right|\leq \Lambda\right)w_x\right)-a^\dagger\left(\chi\left(\left|\nabla\right|\leq \Lambda\right)w_x\right)+\mathcal{N}.
\end{align}

\begin{theorem}
\label{Theorem: Parabolic Lower Bound for the truncated Model}
Let $E_{\alpha,\Lambda}(P)$ be the ground state energy of the operator $\mathbb{H}_\Lambda$ as a function of the (one-component of the) truncated total momentum 
$$
\mathbb{P}_\Lambda:=\frac{1}{i}\nabla_{x_1}+\alpha^2\int \! \chi^1\! \left(\Lambda^{-1}|k_1|\leq 2\right)\! k_1\,  a_k^\dagger a_k\mathrm{d}k
$$ 
and let $\Lambda=\alpha^{\frac{4}{5}(1+\sigma)}$ with $0<\sigma<\frac{1}{9}$. Then there exists a constant $w>0$ such that for all $C>0$, $|P|\leq C\alpha$ and $\alpha_0\geq \alpha(\sigma, C)$ 
\begin{align}
\label{Equation-Main with cut-off}
E_{\alpha,\Lambda}(P)\geq e^\mathrm{Pek}-\frac{1}{2\alpha^2}\mathrm{Tr}\left[1-\sqrt{H^\mathrm{Pek}}\, \right]+\frac{|P|^2}{2\alpha^4 m}-\alpha^{-(2+w)},
\end{align}
where $\alpha_0(\sigma,C)$ is a suitable constant.
\end{theorem}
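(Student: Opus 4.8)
The plan is to carry out the programme sketched above: reduce Eq.~(\ref{Equation-Main with cut-off}) to the global lower bound Eq.~(\ref{Equation: Central Estimate}), and prove the latter by a chain of unitary conjugations combined with the condensation hypothesis. By rotation invariance I would take $P=(|P|,0,0)$. Since $(|P|,E_{\alpha,\Lambda}(P))$ lies in the joint spectrum of $(\mathbb{P}_\Lambda,\mathbb{H}_\Lambda)$, there is a sequence of approximate joint eigenstates; the construction of Section~\ref{Section: Construction of a Condensate} refines these into states $\Psi$ that moreover carry only $o(\alpha^2)$ excitations orthogonal to $\varphi^\mathrm{Pek}$ (complete Bose--Einstein condensation) while still satisfying $\braket{\Psi|(\mathbb{P}_\Lambda-P)^2|\Psi}=O(\alpha^{2-r})$ and $\braket{\Psi|\mathbb{H}_\Lambda|\Psi}\le E_{\alpha,\Lambda}(P)+O(\alpha^{-(2+r)})$ for some $r>0$. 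Because of the cut-off $\Lambda$, the operator $\mathbb{P}_\Lambda$ is bounded relative to $\mathbb{H}_\Lambda$, so the Lagrange term $\lambda(|P|-\mathbb{P}_\Lambda)$ does not destroy boundedness from below, and it suffices to establish Eq.~(\ref{Equation: Central Estimate}) for all such $\Psi$ and all $\lambda\in\mathbb{R}$; taking $\lambda=|P|/(m\alpha^4)$, estimating $\lambda(|P|-\braket{\Psi|\mathbb{P}_\Lambda|\Psi})$ by Cauchy--Schwarz, and using the two displayed error bounds then yields Eq.~(\ref{Equation-Main with cut-off}) after adjusting $w$.

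The heart of the matter is a lower bound, valid on condensates, of the form
\begin{align*}
\mathbb{H}_\Lambda+\lambda(|P|-\mathbb{P}_\Lambda)\;\geq\; e^\mathrm{Pek}+\lambda|P|-\frac{\alpha^4 m\lambda^2}{2}+\mathbb{Q}+O\!\left(\alpha^{-(2+r)}\right),
\end{align*}
where $\mathbb{Q}$ is a direct sum of harmonic oscillators whose ground state energy equals $-\frac{1}{2\alpha^2}\mathrm{Tr}\!\left[1-\sqrt{H^\mathrm{Pek}}\right]$. To obtain it I would first conjugate with the Weyl operator shifting the field by $\alpha^2\varphi^\mathrm{Pek}$: this converts the linear field terms into the Pekar potential $V_{\varphi^\mathrm{Pek}}$ acting on the electron plus residual field-linear and field-quadratic fluctuations, and produces the classical energy $\mathcal{F}^\mathrm{Pek}(\varphi^\mathrm{Pek})=e^\mathrm{Pek}$ up to a $\Lambda$-dependent error from the truncation of $w_x$. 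Next I would dispose of the momentum term: completing the square in the electron momentum, $-\Delta_{x_1}-\lambda\tfrac1i\partial_{x_1}=(\tfrac1i\partial_{x_1}-\tfrac\lambda2)^2-\tfrac{\lambda^2}{4}$, and absorbing the phonon part $\lambda\alpha^2\int\chi^1 k_1\,a_k^\dagger a_k$ by a further unitary amounting to a collective boost of the condensate along the translational zero-mode $\partial_1\varphi^\mathrm{Pek}$ of $H^\mathrm{Pek}$; the large inertia $\alpha^4 m$ in the emerging term $-\tfrac{\alpha^4 m\lambda^2}{2}$ is exactly the energy cost of that boost, with $m=\tfrac23\|\nabla\varphi^\mathrm{Pek}\|^2=2\|\partial_1\varphi^\mathrm{Pek}\|^2$ computed from the Pekar equation. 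What remains is, up to errors, a quadratic form in the excitation field built from $H^\mathrm{Pek}$; using complete condensation to control the cubic and quartic remainders by $o(\alpha^{-2})$, and diagonalising by a Bogoliubov transformation, gives $\mathbb{Q}$, whose ground state energy is evaluated in the standard way. Since all of this is a global operator statement, the estimates of \cite{BS1} for the $P=0$ case transfer almost verbatim once the $\lambda$-dependent pieces are isolated.

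The chief obstacle, and the origin of the window $0<\sigma<\tfrac19$ for $\Lambda=\alpha^{\frac45(1+\sigma)}$, is the quantitative error budget. On one hand, replacing $\mathbb{H}$ by $\mathbb{H}_\Lambda$ and $\mathbb{P}$ by $\mathbb{P}_\Lambda$, and replacing $w_x$ by its truncation inside all of the above transformations, costs positive powers of $\Lambda^{-1}$. On the other hand, the commutator estimates controlling the Weyl shift, the further unitary handling the moderately large wavenumbers, and the cubic/quartic remainders deteriorate with positive powers of $\Lambda$ (and with inverse powers of the number of excitations permitted by the condensate). These have to be balanced against each other and against the target accuracy $\alpha^{-(2+w)}$, which pins $\Lambda$ to the stated range and fixes the admissible $w$; the appendix inputs on the regularity and decay of $\varphi^\mathrm{Pek}$ and on the spectral gap of $H^\mathrm{Pek}$ above its three-dimensional kernel are what make these commutator bounds and the Bogoliubov diagonalisation go through. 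A second, more conceptual point is to verify that the momentum constraint is cheap enough: Cauchy--Schwarz turns $\braket{\Psi|(\mathbb{P}_\Lambda-P)^2|\Psi}=O(\alpha^{2-r})$ into $|\lambda|\,|\braket{\Psi|P-\mathbb{P}_\Lambda|\Psi}|=O(|\lambda|\alpha^{1-r/2})$, which must be $o(\alpha^{-2})$ for $\lambda\sim|P|\alpha^{-4}$; this is where the hypothesis $|P|\le C\alpha$ enters and is precisely the sub-threshold regime in which the parabolic law is expected to hold.
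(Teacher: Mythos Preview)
Your high-level strategy matches the paper's: construct a condensate approximate ground state, apply a Lagrange multiplier $\lambda=\frac{P}{m\alpha^4}$, conjugate by a Weyl shift centered at $\varphi^\mathrm{Pek}$ together with an imaginary boost along $\partial_1\varphi^\mathrm{Pek}$ to extract $e^\mathrm{Pek}+\frac{|P|^2}{2\alpha^4 m}$, and read off the quantum correction as the ground state energy of the remaining harmonic system. Two points, however, are glossed over in a way that would cause the argument to stall. First, $\mathbb{P}_\Lambda$ is \emph{not} bounded relative to $\mathbb{H}_\Lambda$ because it still contains the unbounded electron momentum $\frac{1}{i}\partial_{x_1}$; the paper deals with this in Section~\ref{Section: Construction of a Condensate} by showing (Lemma~\ref{Lemma-InitialState}) that on low-energy states the electron contribution is negligible, and thereafter works only with the truncated \emph{phonon} momentum $\Upsilon_\Lambda$, which is controlled by $\Lambda\mathcal{N}$. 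Your ``completing the square in the electron momentum'' is harmless but does not accomplish this reduction. Second, and more substantially, a naive Bogoliubov diagonalisation of the quadratic fluctuation form is singular because $H^\mathrm{Pek}$ has a three-dimensional kernel (the translation modes $\partial_j\varphi^\mathrm{Pek}$); the paper does not use a Bogoliubov transformation at all but instead imports from \cite{BS1} the finite-dimensional projection $\Pi$, the nonlinear coordinate change $\tau$ (Definition~\ref{Definition: Coordinate Transformation}) and the unitary $\mathcal{U}$ of Section~\ref{Section: Proof of Theorem}, whose purpose is precisely to separate the three zero-mode coordinates $t^\lambda$ from the genuinely oscillatory directions before computing $\inf\sigma(\mathbb{Q}^\kappa_{t,\epsilon})$ explicitly as a harmonic-oscillator ground state. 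Without this step the quadratic lower bound cannot be established.

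A minor correction: the restriction $\sigma<\tfrac19$ does not arise from balancing ultraviolet against commutator errors as you describe, but from the median-localisation step in Lemma~\ref{Lemma-Existence of a condensate}, where one needs $2w+\tfrac95\sigma+2v+3u<\tfrac15$ for suitable small $u,v,w>0$.
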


For technical reasons  we use here the smooth cut-off $\chi^1\! \left(\Lambda^{-1}|k_1|\leq 2\right)$ instead of the sharp cut-off $\chi \left(\Lambda^{-1}|k_1|\leq 1\right)$ in the definition of the momentum operator $\mathbb{P}_\Lambda$. Note also that the momentum cut-off appears in \eqref{Equation-CutOffHamilton} only in the interaction term, and not in the field energy $\mathcal{N}$. 
In the following we shall  argue that, as a consequence of Theorem \ref{Theorem: Parabolic Lower Bound for the truncated Model}, Eq.~(\ref{Equation-Main with cut-off}) is also valid with $\mathbb{P}_\Lambda$ replaced by  $\mathbb{P}'_1:=\frac{1}{i}\nabla_{x_1}+\alpha^2\int_{|k|\leq \Lambda} k_j\,  a_k^\dagger a_k\mathrm{d}k$ having the sharp cut-off, and with $H_\Lambda$ replaced by the fully restricted Hamiltonian $\mathbb{H}'_\Lambda:=\mathbb{H}_\Lambda-\int_{|k|>\Lambda}a_k^\dagger a_k \mathrm{d}k$. In order to see this, observe that $\mathbb{P}'_1$ and $\mathbb{H}'_\Lambda$ are the restrictions (in the sense of operators) of $\mathbb{P}_\Lambda$ and $\mathbb{H}_\Lambda$ to states of the form $\Psi'\otimes \Omega$, where $\Psi'\in L^2\! \bigg(\mathbb{R}^3,\mathcal{F}\Big(\mathrm{ran} \chi\big(|\nabla|\leq \Lambda\big)\Big)\bigg)$ and $\Omega$ is the vacuum in $\mathcal{F}\Big(\mathrm{ran} \chi\big(|\nabla|> \Lambda\big)\Big)$. Hence
\begin{align*}
\sigma\left(\mathbb{P}'_1,\mathbb{H}_\Lambda'\right)\subseteq \sigma\left(\mathbb{P}_\Lambda,\mathbb{H}_\Lambda\right),
\end{align*}
and therefore we obtain as an immediate consequence of the previous Theorem \ref{Theorem: Parabolic Lower Bound for the truncated Model} that
\begin{align}
\label{Equation-EnergyMomentumRelation}
E\geq e^\mathrm{Pek}-\frac{1}{2\alpha^2}\mathrm{Tr}\left[1-\sqrt{H^\mathrm{Pek}}\, \right]+\frac{|P|^2}{2\alpha^4 m}-\alpha^{-(2+w)}
\end{align}
for all $(P,E)\in \sigma\left(\mathbb{P}'_1,\mathbb{H}_\Lambda'\right)$ with $|P|\leq C\alpha$ and $\alpha\geq \alpha_0(\sigma,C)$. In the proof of Theorem~\ref{Theorem: Parabolic Lower Bound} below it will be useful to have Eq.~(\ref{Equation-EnergyMomentumRelation}) for $\mathbb{P}'_1$ and $\mathbb{H}'_\Lambda$, instead of Eq.~(\ref{Equation-Main with cut-off}) for $\mathbb{P}_\Lambda$ and $\mathbb{H}_\Lambda$.

In order to verify Theorem \ref{Theorem: Parabolic Lower Bound}, it is convenient to introduce the ground state energy $E^*_{\alpha,\Lambda}(P)$ of the operator $\mathbb{H}_\Lambda$ as a function of $\mathbb{P}$. Note that in contrast to $E_{\alpha,\Lambda}(P)$, we do not use a wavenumber cut-off in the momentum operator here, while we still have the cut-off in the Hamiltonian $\mathbb{H}_\Lambda$. In the following Lemma \ref{Lemma-GroundStateEnergyComparison} we are going to utilize the results in \cite{FS,M}, where  the energy cost of introducing a wavenumber cut-off in the Hamiltonian is quantified, in order to compare $E^*_{\alpha,\Lambda}(P)$ with $E_{\alpha}(P)$.

\begin{lem}
\label{Lemma-GroundStateEnergyComparison}
Let $\Lambda=\alpha^{\frac{4}{5}(1+\sigma)}$ for $\sigma>0$. Then there exists a constant $C'>0$, such that for all $P\in \mathbb{R}^3$ and $\alpha$ large enough
\begin{align*}
E_\alpha(P)\geq E^*_{\alpha,\Lambda}(P)-C' \alpha^{-2(1+\sigma)}.
\end{align*}
\end{lem}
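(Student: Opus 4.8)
The plan is to reduce the comparison of $E_\alpha(P)$ and $E^*_{\alpha,\Lambda}(P)$ to a quantitative bound on the expectation value of the discarded high-wavenumber part of the interaction, $\mathbb{H}-\mathbb{H}_\Lambda=-a\big(\chi(|\nabla|>\Lambda)w_x\big)-a^\dagger\big(\chi(|\nabla|>\Lambda)w_x\big)$, in low-energy states, and then to quote the corresponding estimate from \cite{FS,M}.

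First I would use that $\mathbb{H}$ and $\mathbb{H}_\Lambda$ both commute with every component of $\mathbb{P}$. Since $(P,E_\alpha(P))\in\sigma(\mathbb{P},\mathbb{H})$, for each $\delta>0$ the subspace $\mathrm{ran}\big(\mathds{1}_{|\mathbb{P}-P|\leq\delta}\,\mathds{1}_{\mathbb{H}\leq E_\alpha(P)+1}\big)$ is non-trivial, and on it all moments $\langle\mathbb{H}^k\rangle$ are bounded uniformly in $\delta$. Picking a normalized $\Psi_\delta$ in this subspace with $\langle\Psi_\delta|\mathbb{H}|\Psi_\delta\rangle\leq\inf_{|p-P|\leq\delta}E_\alpha(p)+\delta\leq E_\alpha(P)+\delta$, and noting that $\Psi_\delta\in\mathrm{ran}\,\mathds{1}_{|\mathbb{P}-P|\leq\delta}$ together with the commutation of $\mathbb{H}_\Lambda$ with $\mathbb{P}$ gives $\langle\Psi_\delta|\mathbb{H}_\Lambda|\Psi_\delta\rangle\geq\inf_{|p-P|\leq\delta}E^*_{\alpha,\Lambda}(p)$, one obtains
\begin{align*}
\inf_{|p-P|\leq\delta}E^*_{\alpha,\Lambda}(p)\ \leq\ E_\alpha(P)+\delta+\big\langle\Psi_\delta\big|\,a\big(\chi(|\nabla|>\Lambda)w_x\big)+a^\dagger\big(\chi(|\nabla|>\Lambda)w_x\big)\,\big|\Psi_\delta\big\rangle .
\end{align*}
Letting $\delta\to0$ and invoking the (standard) lower semicontinuity of $p\mapsto E^*_{\alpha,\Lambda}(p)$, the lemma follows once we know
\begin{align*}
\Big|\big\langle\Psi\big|\,a\big(\chi(|\nabla|>\Lambda)w_x\big)+a^\dagger\big(\chi(|\nabla|>\Lambda)w_x\big)\,\big|\Psi\big\rangle\Big|\ \leq\ C'\alpha^{-2(1+\sigma)}
\end{align*}
for all normalized $\Psi$ with $\langle\Psi|\mathbb{H}|\Psi\rangle\leq E_\alpha(P)+1$ and controlled higher $\mathbb{H}$-moments.

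For this last bound I would use the iterated Lieb--Yamazaki commutator method, which is exactly what is carried out in \cite{FS,M}. The form factor of $\chi(|\nabla|>\Lambda)w_x$ decays like $|k|^{-1}$ on $\{|k|>\Lambda\}$, and the identity $e^{ikx}=-i|k|^{-2}\sum_j k_j[\partial_{x_j},e^{ikx}]$ lets one trade each power of $|k|^{-1}$ for a factor $\nabla_x$ sitting outside a shifted interaction operator. Iterating three times — and using the recursion relations to re-express those terms in which all three derivatives fall on the same side — one is left with a shifted interaction whose form factor decays like $|k|^{-4}$, hence has $L^2$-norm of order $\Lambda^{-5/2}$ over $\{|k|>\Lambda\}$, accompanied by at most two $\nabla_x$'s distributed on the two sides; bounding these via the a priori estimates for $-\Delta_x$, $(-\Delta_x)^2$, $\mathcal{N}$ and the mixed moments of $-\Delta_x$ and $\mathcal{N}$ (all $O(1)$ on low-energy states, as established in \cite{FS,M}), and keeping track of the $\alpha^{-1}$ in the rescaled creation/annihilation operators against the $\alpha$ from $\langle\mathcal N\rangle^{1/2}$ in the unrescaled number operator, yields a bound of order $\Lambda^{-5/2}$. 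With $\Lambda=\alpha^{\frac45(1+\sigma)}$ this equals $\alpha^{-2(1+\sigma)}$; the exponent $\tfrac45(1+\sigma)$ in the definition of $\Lambda$ is calibrated precisely so that the two match for every $\sigma>0$.

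The step I expect to demand the most care is this quantitative commutator/a priori estimate itself — in particular controlling the mixed moments such as $\langle\Psi|\mathcal{N}(-\Delta_x)|\Psi\rangle$ for the relevant states, and organizing the three-fold iteration so that no term requires a third derivative of $\Psi$. Since this is exactly the content of \cite{FS,M}, the proof here amounts to quoting their estimate, the only adaptation being that the total-momentum operator appearing in $E_\alpha(P)$ and $E^*_{\alpha,\Lambda}(P)$ carries no wavenumber cut-off, which is what makes the fiberwise reduction above legitimate; the spectral-projection and continuity arguments transferring the estimate to fixed total momentum are then routine.
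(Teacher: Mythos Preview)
Your proposal is correct and follows essentially the same route as the paper. Both arguments (i) choose approximate joint eigenstates of $(\mathbb{P},\mathbb{H})$ localized near $(P,E_\alpha(P))$ with controlled $\langle\mathbb{H}^2\rangle$, (ii) invoke the estimate from \cite{FS,M} on the high-wavenumber interaction to bound $\langle\mathbb{H}_\Lambda\rangle$ in terms of $\langle\mathbb{H}\rangle$ with error $O(\alpha^{-2(1+\sigma)})$, and (iii) pass to the limit to land at exact momentum $P$. The only cosmetic differences are that the paper quotes the estimate in the packaged form $\mathbb{H}_\Lambda\leq\mathbb{H}+C\alpha^{-2(1+\sigma)}(\mathbb{H}^2+1)$ rather than sketching the Lieb--Yamazaki iteration, and that it phrases step (iii) as a nested-compact-sets argument on the joint spectrum $\sigma(\mathbb{P},\mathbb{H}_\Lambda)$ instead of invoking lower semicontinuity of $p\mapsto E^*_{\alpha,\Lambda}(p)$; these are equivalent formulations of the same limiting step.
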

\begin{proof}
By the results in \cite{FS,M}, there exists a $C>0$ such that for $\alpha$ large enough
\begin{align}
\label{Equation-UltravioletCutOff}
\mathbb{H}_\Lambda\leq \mathbb{H} +C\alpha^{-2(1+\sigma)}\left(\mathbb{H}^2+1\right).
\end{align}
This was first shown in \cite{FS} for a confined polaron model on a bounded domain, but the method extends in a straightforward way to the model on $\mathbb{R}^3$, as shown in \cite{M} (see also \cite{FeS} for the corresponding result for a polaron model on a torus). 
In the following, let $\Psi_\epsilon$ be a state satisfying $\chi\left(\sum_{j=1}^3\left(\mathbb{P}_j-P_j\right)^2\leq \epsilon^2\right)\Psi_\epsilon=\Psi_\epsilon$ and $\braket{\Psi_\epsilon|\big(\mathbb{H}-E_\alpha(P)\big)^2|\Psi_\epsilon}\leq \epsilon^2$, where $\epsilon>0$. By Eq.~(\ref{Equation-UltravioletCutOff}) we therefore have
\begin{align*}
\braket{\Psi_\epsilon|\mathbb{H}_\Lambda|\Psi_\epsilon}&\leq E_\alpha(P)+C\alpha^{-2(1+\sigma)}\left(\braket{\Psi_\epsilon|\mathbb{H}^2|\Psi_\epsilon}+1\right)+\epsilon\\
&\leq E_\alpha(P)+C\alpha^{-2(1+\sigma)}\left(2E_\alpha(P)^2+2\epsilon^2+1\right)+\epsilon\leq E_\alpha(P)+C'\alpha^{-2(1+\sigma)}+\epsilon
\end{align*}
for $0<\epsilon\leq 1$ and a suitable $C'$, where we used that $E_\alpha(P)$ is uniformly bounded for $P\in \mathbb{R}^3$ and $\alpha\geq 1$ in the last inequality. Hence $\chi\left(\mathbb{H}_\Lambda\leq E_\alpha(P)+C'\alpha^{-2(1+\sigma)}+\epsilon\right)\Psi_\epsilon \neq 0$. Using $\chi\left(\sum_{j=1}^3\left(\mathbb{P}_j-P_j\right)^2\leq \epsilon^2\right)\Psi_\epsilon=\Psi_\epsilon$, we obtain
\begin{align*}
A_\epsilon:=\sigma\left(\mathbb{P},\mathbb{H}_\Lambda\right)\cap \left(B_{\epsilon}(P)\times \left(-\infty,E_\alpha(P)+C'\alpha^{-2(1+\sigma)}+\epsilon\right]\right)\neq \emptyset.
\end{align*}
Since $\mathbb{H}_\Lambda$ is bounded from below, $(A_\epsilon)_{0<\epsilon\leq 1}$ is a monotone sequence of non-empty compact sets, i.e. $A_{\epsilon_1}\subseteq A_{\epsilon_2}$ for $\epsilon_1\leq \epsilon_2$, and consequently
\begin{align*}
\sigma\left(\mathbb{P},\mathbb{H}_\Lambda\right)\cap \left(\{P\}\times \left(-\infty,E_\alpha(P)+C'\alpha^{-2(1+\sigma)}\right]\right)=\bigcap_{0<\epsilon\leq 1}A_\epsilon\neq \emptyset,
\end{align*}
which is equivalent to $E^*_{\alpha,\Lambda}(P)\leq E_\alpha(P)+C'\alpha^{-2(1+\sigma)}$.
\end{proof}

Given Theorem \ref{Theorem: Parabolic Lower Bound for the truncated Model}  we can now give a proof of Theorem \ref{Theorem: Parabolic Lower Bound}. 

\begin{proof}[Proof of Theorem \ref{Theorem: Parabolic Lower Bound}]
In the first step of the proof, we are going to verify Eq.~(\ref{Equation-Main}) for $|P|\leq \sqrt{2m}\alpha$. Due to the rotational symmetry, we can assume w.l.o.g. that $P=(P_1,0,0)$, and by Lemma \ref{Lemma-GroundStateEnergyComparison} we know that
\begin{align}
\nonumber
 E_\alpha(P) +C' \alpha^{-2(1+\sigma)}& \geq \inf\{E:(P_1,0,0,E)\in \sigma(\mathbb{P}_1,\mathbb{P}_2,\mathbb{P}_3,\mathbb{H}_\Lambda)\}\\
 \label{Equation-SpectrumLowerBound}
&\geq \inf\{E:(P_1,E)\in \sigma(\mathbb{P}_1,\mathbb{H}_\Lambda)\}.
\end{align}
Making use of the fact that the operators $\mathbb{P}'_1,\mathbb{H}_\Lambda', \mathbb{P}_1-\mathbb{P}'_1$ and $\mathbb{H}_\Lambda-\mathbb{H}_\Lambda'$ are pairwise commuting and that $\mathbb{P}'_1,\mathbb{H}_\Lambda'$ and $\mathbb{P}_1-\mathbb{P}'_1,\mathbb{H}_\Lambda-\mathbb{H}_\Lambda'$ act on different factors in the tensor product $L^2\! \bigg(\mathbb{R}^3,\mathcal{F}\Big(\mathrm{ran} \chi\big(|\nabla|\leq \Lambda\big)\Big)\bigg)\otimes \mathcal{F}\Big(\mathrm{ran} \chi\big(|\nabla|> \Lambda\big)\Big)$, their joint spectrum is well-defined and satisfies $\sigma\left(\mathbb{P}'_1,\mathbb{H}_\Lambda', \mathbb{P}_1-\mathbb{P}'_1,\mathbb{H}_\Lambda-\mathbb{H}_\Lambda'\right)=\sigma(\mathbb{P}_1',\mathbb{H}_\Lambda')\times \sigma(\mathbb{P}_1-\mathbb{P}_1',\mathbb{H}_\Lambda-\mathbb{H}_\Lambda')$. Hence we can rewrite the right hand side of Eq.~(\ref{Equation-SpectrumLowerBound}) as
\begin{align*}
\inf_{P_1'+\widetilde{P}_1=P_1}\Big\{E'+\widetilde{E}:(P_1',E')\in \sigma(\mathbb{P}_1',\mathbb{H}_\Lambda'),\ (\widetilde{P}_1,  \widetilde{E})\in \sigma(\mathbb{P}_1-\mathbb{P}_1',\mathbb{H}_\Lambda-\mathbb{H}_\Lambda')\Big\}.
\end{align*}
In order to verify that $E'+\widetilde{E}$ is bounded from below by the right hand side of Eq.~(\ref{Equation-Main}) for a suitable $w>0$ and $|P_1|\leq \sqrt{2m}\alpha$, let us first consider the case $\widetilde{E}\geq \alpha^{-2}$. Since $E'\in \sigma(\mathbb{H}_{\Lambda}')$, we have $E'\geq \inf \sigma(\mathbb{H}_\Lambda')\geq \inf \sigma(\mathbb{H})=E_\alpha$ and therefore
\begin{align*}
E'+\widetilde{E}\geq E_\alpha+\alpha^{-2}\geq e^\mathrm{Pek}-\frac{1}{2\alpha^2}\mathrm{Tr}\left[1-\sqrt{H^\mathrm{Pek}}\, \right]+\alpha^{-2}-\alpha^{-(2+w')}
\end{align*}
for a suitable $w'>0$, where we have used \cite[Theorem 1.1]{BS1}. Regarding the other case $\widetilde{E}<\alpha^{-2}$, note that we have $(\widetilde{P}_1,\widetilde{E})\in \sigma(\mathbb{P}_1-\mathbb{P}_1',\mathbb{H}_\Lambda-\mathbb{H}_\Lambda')= \{(0,0)\}\cup \bigcup_{\ell=1}^\infty \mathbb{R}\times \{\frac{\ell}{\alpha^2}\}$, and therefore $\widetilde{E}=0$ and $\widetilde{P}_1=0$. Hence $|P_1'|=|P_1|\leq \sqrt{2m}\alpha$ and consequently
\begin{align*}
E'+\widetilde{E}=E'&  \geq e^\mathrm{Pek}-\frac{1}{2\alpha^2}\mathrm{Tr}\left[1-\sqrt{H^\mathrm{Pek}}\, \right]+\frac{|P'_1|^2}{2\alpha^4 m}-\alpha^{-(2+w)}\\
&=e^\mathrm{Pek}-\frac{1}{2\alpha^2}\mathrm{Tr}\left[1-\sqrt{H^\mathrm{Pek}}\, \right]+\frac{|P_1|^2}{2\alpha^4 m}-\alpha^{-(2+w)},
\end{align*}
where we have used $(P_1',E')\in \sigma(\mathbb{P}_1',\mathbb{H}_\Lambda')$ together with Eq.~(\ref{Equation-EnergyMomentumRelation}). This concludes the proof of Eq.~(\ref{Equation-Main}) for $|P|\leq \sqrt{2m}\alpha$.

In order to verify Eq.~(\ref{Equation-Main}) for $|P|> \sqrt{2m}\alpha$, we are going to use the fact that $P\mapsto E_\alpha(P)$ is a monotone radial function, as recently shown in  \cite{Po}, and consequently $E_\alpha(P)\geq E_\alpha\! \left(\sqrt{2m}\frac{P}{|P|}\right)$ for $|P|\geq \sqrt{2m}\alpha$. This reduces the problem to the previous case, 
and hence concludes the proof of Theorem \ref{Theorem: Parabolic Lower Bound}.
\end{proof}

\section{Construction of a Condensate}
\label{Section: Construction of a Condensate}
This section is devoted to the construction of approximate $p$ ground states $\Psi_\alpha$ satisfying complete condensation in $\varphi^\mathrm{Pek}$, which we will utilize in order to prove Theorem \ref{Theorem: Parabolic Lower Bound for the truncated Model} in Section \ref{Section: Proof of Theorem}. In this context, we call $\Psi_\alpha$ an approximate $p$ ground state in case $\braket{\Psi_\alpha|\mathbb{H}_\Lambda|\Psi_\alpha}=E_{\alpha,\Lambda}\! \left(\alpha^2 p\right)+O_{\alpha\rightarrow \infty}\left(\alpha^{-(2+w)}\right)$, where $E_{\alpha,\Lambda}\! \left(\alpha^2 p\right)$ and $\mathbb{H}_\Lambda$ are defined in, respectively above, Theorem \ref{Theorem: Parabolic Lower Bound for the truncated Model}, and $\big\langle \Psi_\alpha\big|\left(\Upsilon_\Lambda-p\right)^2\big|\Psi_\alpha\big\rangle\lesssim \alpha^{-(2+w)}$, with $w>0$, where we define the (rescaled and truncated) phonon momentum operator 
\begin{align*}
\Upsilon_\Lambda:=\int \! \chi^1\! \left(\Lambda^{-1}|k_1|\leq 2\right)\! k_1\,  a_k^\dagger a_k\mathrm{d}k\,.
\end{align*}
Similarly to $\mathbb{H}_\Lambda$, it also depends on $\alpha$ due to the rescaled canonical commutation relations $[a(f),a^\dagger(g)]=\alpha^{-2}\braket{g|f}$ but we suppress the $\alpha$ dependence for the sake of readability. Here and in the following, we write $X\lesssim Y$ in case there exist constants $C,\alpha_0>0$ such that $X\leq C\, Y$ for all $\alpha\geq \alpha_0$. It is clear that there exist states $\Psi_\alpha$ that satisfy both $\braket{\Psi_\alpha|\mathbb{H}_\Lambda|\Psi_\alpha}-E_{\alpha,\Lambda}\! \left(\alpha^2 p\right)\lesssim \alpha^{-(2+w)}$ and $\big\langle \Psi_\alpha\big|\left(\alpha^{-2}\, \mathbb{P}_\Lambda-p\right)^2\big|\Psi_\alpha\big\rangle\lesssim \alpha^{-(2+w)}$, since $\left(p,E_{\alpha,\Lambda}(\alpha^2 p)\right)$ is a point in the joint spectrum of $\left(\alpha^{-2}\, \mathbb{P}_\Lambda,\mathbb{H}_\Lambda\right)$. As part of the subsequent Lemma \ref{Lemma-InitialState} we are going to show that the contribution of $\frac{1}{i\alpha^2}\nabla_{x_1}$ in $\alpha^{-2}\, \mathbb{P}_\Lambda=\frac{1}{i\alpha^2}\nabla_{x_1}+\Upsilon_\Lambda$ is negligibly small, i.e., we shall show that it does not matter whether one uses $\Upsilon_\Lambda$ or $\alpha^{-2}\, \mathbb{P}_\Lambda$ in the definition of approximate ground states. In particular, this will imply the existence of approximate $p$ ground states. We will choose $\Psi_\alpha$ such that $\mathrm{supp}\left(\Psi_\alpha\right)\subseteq B_{L}(0)$ for a suitable $L$, where we define the support using the identification $L^2\!\left(\mathbb{R}^3\right)\otimes \mathcal{F}\left(L^2\!\left(\mathbb{R}^3\right)\right)\cong L^2\!\left(\mathbb{R}^3,\mathcal{F}\left(L^2\!\left(\mathbb{R}^3\right)\right)\right)$ in order to represent elements $\Psi\in L^2\!\left(\mathbb{R}^3\right)\otimes \mathcal{F}\left(L^2\!\left(\mathbb{R}^3\right)\right)$ as functions $x\mapsto \Psi(x)$ with values in $\mathcal{F}\left(L^2\!\left(\mathbb{R}^3\right)\right)$, i.e. $\mathrm{supp}\left(\Psi\right)$ refers to the support of the electron.

In the rest of this paper, we will always assume that $\alpha\geq 1$.  Most of the results in this Section include $E_{\alpha,\Lambda}\! \left(\alpha^2 p\right)\leq E_\alpha+C |p|^2$ as an assumption for an arbitrary, but fixed, constant $C>0$, where $E_\alpha$ denotes the ground state energy of $\mathbb{H}$.  For the purpose of proving Theorem \ref{Theorem: Parabolic Lower Bound for the truncated Model} this is  not a  restriction, since we can always pick $C\geq \frac{1}{2m}$ and therefore $E_{\alpha,\Lambda}\! \left(\alpha^2 p\right)> E_\alpha+C|p|^2$ immediately implies the statement of Theorem \ref{Theorem: Parabolic Lower Bound for the truncated Model}
\begin{align*}
E_{\alpha,\Lambda}\! \left(\alpha^2 p\right)> E_\alpha+C|p|^2\geq e^\mathrm{Pek}-\frac{1}{2\alpha^2}\mathrm{Tr}\left[1-\sqrt{H^\mathrm{Pek}}\, \right]+\frac{|p|^2}{2m}-\alpha^{-(2+s)},
\end{align*}
where we used $E_\alpha\geq e^\mathrm{Pek}-\frac{1}{2\alpha^2}\mathrm{Tr}\left[1-\sqrt{H^\mathrm{Pek}}\, \right]-\alpha^{-(2+s)}$ by \cite[Theorem 1.1]{BS1}.

\begin{lem}
\label{Lemma-InitialState}
Given $0<\sigma<\frac{1}{4}$, let $\Lambda=\alpha^{\frac{4}{5}(1+\sigma)}$ and $L=\alpha^{1+\sigma}$, and assume $p$ satisfies $|p|\leq \frac{C}{\alpha}$ and $E_{\alpha,\Lambda}\! \left(\alpha^2 p\right)\leq E_\alpha+C|p|^2$ for a given $C>0$, where $E_\alpha$ is the ground state energy of $\mathbb{H}$. Then there exist states $\Psi_\alpha^\bullet$ satisfying $\big\langle \Psi^\bullet_\alpha\big|\left(\Upsilon_\Lambda-p\right)^2\big|\Psi_\alpha^\bullet\big\rangle\lesssim \alpha^{2\sigma-4}$ and $\braket{\Psi_\alpha^\bullet|\mathbb{H}_\Lambda|\Psi_\alpha^\bullet}-E_{\alpha,\Lambda}\! \left(\alpha^2 p\right)\lesssim \alpha^{-2(1+\sigma)}$, as well as $\mathrm{supp}\left(\Psi_\alpha^\bullet\right)\subseteq B_{L}(0)$.
\end{lem}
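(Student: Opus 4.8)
The plan is to construct $\Psi_\alpha^\bullet$ in two stages: first obtain a state with the right energy and the right \emph{full} momentum $\alpha^{-2}\mathbb{P}_\Lambda$, then show that truncating to the electron momentum-part and localizing the electron in space costs negligible energy and shifts the phonon momentum only by a small amount. Concretely, since $\left(p, E_{\alpha,\Lambda}(\alpha^2p)\right)$ lies in the joint spectrum of $\left(\alpha^{-2}\mathbb{P}_\Lambda,\mathbb{H}_\Lambda\right)$, there is a normalized $\widetilde\Psi_\alpha$ with $\braket{\widetilde\Psi_\alpha|\mathbb{H}_\Lambda|\widetilde\Psi_\alpha}-E_{\alpha,\Lambda}(\alpha^2p)\lesssim \alpha^{-2(1+\sigma)}$ and $\big\langle\widetilde\Psi_\alpha\big|(\alpha^{-2}\mathbb{P}_\Lambda-p)^2\big|\widetilde\Psi_\alpha\big\rangle\lesssim \alpha^{-2(1+\sigma)}$; I would take the spectral projection onto a small energy-and-momentum window around $(p,E_{\alpha,\Lambda}(\alpha^2p))$ to produce such a state, exactly as in the comparison argument of Lemma~\ref{Lemma-GroundStateEnergyComparison}.

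Next I would handle the split $\alpha^{-2}\mathbb{P}_\Lambda = \tfrac{1}{i\alpha^2}\nabla_{x_1}+\Upsilon_\Lambda$. The point is that $\tfrac{1}{i\alpha^2}\nabla_{x_1}$ is small because the total energy $E_{\alpha,\Lambda}(\alpha^2p)\le E_\alpha+C|p|^2 \le E_\alpha + C'\alpha^{-2}$ controls $-\Delta_x$ up to $\mathcal{N}$ and interaction terms; more precisely $\braket{\widetilde\Psi_\alpha|-\Delta_{x_1}|\widetilde\Psi_\alpha}$ is bounded, which forces $\braket{\widetilde\Psi_\alpha|(\tfrac{1}{i\alpha^2}\nabla_{x_1})^2|\widetilde\Psi_\alpha}\lesssim \alpha^{-4}$. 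Combining with the momentum bound on $\alpha^{-2}\mathbb{P}_\Lambda$ and the triangle inequality in the form $\big\langle(\Upsilon_\Lambda-p)^2\big\rangle \lesssim \big\langle(\alpha^{-2}\mathbb{P}_\Lambda-p)^2\big\rangle + \big\langle(\tfrac{1}{i\alpha^2}\nabla_{x_1})^2\big\rangle$ (using that the cross term is controlled by Cauchy--Schwarz, or that the two commute appropriately) gives $\big\langle\widetilde\Psi_\alpha\big|(\Upsilon_\Lambda-p)^2\big|\widetilde\Psi_\alpha\big\rangle\lesssim \alpha^{-2(1+\sigma)}+\alpha^{-4}\lesssim \alpha^{2\sigma-4}$ in the relevant range of $\sigma$. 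One has to be a little careful that $\Upsilon_\Lambda$ and $\tfrac{1}{i\alpha^2}\nabla_{x_1}$ act on different tensor factors so that $(\Upsilon_\Lambda-p)^2 \le 2(\alpha^{-2}\mathbb{P}_\Lambda - p)^2 + 2(\tfrac{1}{i\alpha^2}\nabla_{x_1})^2$ holds as an operator inequality; this is where commutativity is used.

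Finally I would localize the electron to the ball $B_L(0)$ with $L=\alpha^{1+\sigma}$. Using an IMS-type localization $\Psi_\alpha^\bullet := \eta(x/L)\widetilde\Psi_\alpha / \|\eta(x/L)\widetilde\Psi_\alpha\|$ with a smooth cutoff $\eta$ supported in $B_1$ and equal to $1$ on $B_{1/2}$, the energy cost is the localization error $\lesssim L^{-2}\|\nabla\eta\|_\infty^2 = \alpha^{-2(1+\sigma)}$ coming from the kinetic term, plus the need to know that $\widetilde\Psi_\alpha$ has most of its electron mass inside $B_{L/2}$ so that the normalization constant is close to $1$; the latter follows because the ground-state-energy bound (together with \cite[Theorem 1.1]{BS1}) shows $\braket{\widetilde\Psi_\alpha | -\Delta_x + \mathcal{N} | \widetilde\Psi_\alpha}$ is bounded and a Combes--Thomas / exponential-localization argument (or the a priori spatial decay of approximate ground states established in \cite{BS1}) gives that the electron mass outside $B_{L/2}$ is $o(1)$, in fact exponentially small in $L$. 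Since $\Upsilon_\Lambda$ acts only on the phonon factor, the spatial localization does not disturb the $\Upsilon_\Lambda$-variance estimate except through the normalization factor, which is harmless. The main obstacle is the spatial decay input: one needs a robust a priori bound showing that approximate ground states of $\mathbb{H}_\Lambda$ at momentum $\alpha^2p$ with $|p|\lesssim \alpha^{-1}$ are concentrated (in the electron variable) on scale $\alpha^{1+\sigma}$, which must be extracted from the energy bound alone; I expect this to be the technically delicate step, analogous to the localization estimates already developed in Part~I \cite{BS1}.
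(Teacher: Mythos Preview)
Your first two stages are essentially what the paper does: pick an approximate joint eigenstate from the spectrum, then use $\langle -\Delta_x\rangle\lesssim 1$ (via \cite[Lemma~2.4]{BS1}) and the operator inequality $(\Upsilon_\Lambda-p)^2\le 2(\alpha^{-2}\mathbb{P}_\Lambda-p)^2+2\alpha^{-4}(-\Delta_x)$ to pass from $\alpha^{-2}\mathbb{P}_\Lambda$ to $\Upsilon_\Lambda$. One quantitative slip: with your momentum window of size $\alpha^{-2(1+\sigma)}$ you get $\langle(\Upsilon_\Lambda-p)^2\rangle\lesssim \alpha^{-2(1+\sigma)}+\alpha^{-4}$, and $\alpha^{-2(1+\sigma)}\le \alpha^{2\sigma-4}$ would require $\sigma\ge \tfrac12$, contradicting $\sigma<\tfrac14$. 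The paper simply takes the momentum window of size $\alpha^{-4}$ (and the energy window of size $\alpha^{-2(1+\sigma)}$), which fixes this.

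The real gap is your localization step. The model is translation-invariant, so an approximate ground state at fixed total momentum carries no preferred electron position; there is no Combes--Thomas or a~priori decay argument that forces the electron mass of $\widetilde\Psi_\alpha$ into $B_{L/2}(0)$ from the energy bound alone, and \cite{BS1} contains no such statement. Your normalization $\|\eta(\cdot/L)\widetilde\Psi_\alpha\|$ could therefore be arbitrarily small. The paper avoids this entirely: it performs a \emph{continuous} IMS decomposition $\Psi_y(x):=L^{-3/2}\eta\!\big(L^{-1}(x-y)\big)\Psi^0_\alpha(x)$ over all centers $y\in\mathbb{R}^3$, uses $\int\|\Psi_y\|^2\,\mathrm{d}y=1$ and the IMS identity for $\mathbb{H}_\Lambda$, and then a pigeonhole/selection argument shows that the set of $y$'s for which the energy exceeds $E_{\alpha,\Lambda}(\alpha^2p)+C\alpha^{-2(1+\sigma)}$ together with the set where $\langle(\Upsilon_\Lambda-p)^2\rangle$ exceeds $C'\alpha^{2\sigma-4}$ cannot exhaust all the mass. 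Any surviving $y$ gives a good localized state, and translation invariance lets one take $y=0$. This selection step is precisely what produces the loss from $\alpha^{-4}$ to $\alpha^{2\sigma-4}$ in the momentum variance, and it replaces the decay input you were hoping for.
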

\begin{proof}
Since $\left(p,E_{\alpha,\Lambda}\! \left(\alpha^2 p\right)\right)$ is an element of the joint spectrum $\sigma\left(\frac{1}{i\alpha^2}\nabla_{x_1}+\Upsilon_\Lambda,\mathbb{H}_\Lambda\right)$, there exist states $\Psi^0_\alpha$ satisfying $\big\langle \Psi^0_\alpha\big|\left(\frac{1}{i\alpha^2}\nabla_{x_1}+\Upsilon_\Lambda-p\right)^2\big|\Psi^0_\alpha\big\rangle\leq \alpha^{-4}$ and
\begin{align}
\label{Equation-AdHockEnergy}
\braket{\Psi^0_\alpha|\mathbb{H}_\Lambda|\Psi^0_\alpha}\leq E_{\alpha,\Lambda}\! \left(\alpha^2 p\right)+\frac{1}{2}\alpha^{-2(1+\sigma)}.
\end{align}
From \cite[Lemma 2.4]{BS1} we know that $\braket{\Psi^0_\alpha|-\Delta_x|\Psi^0_\alpha}\leq 2\braket{\Psi^0_\alpha|\mathbb{H}_\Lambda|\Psi^0_\alpha}+d$ for a suitable constant $d>0$, which implies that $\braket{\Psi^0_\alpha|-\Delta_x|\Psi^0_\alpha}\lesssim 1$ due to Eq.~(\ref{Equation-AdHockEnergy}) and our assumption $E_{\alpha,\Lambda}\! \left(\alpha^2 p\right)\leq E_\alpha+C|p|^2\leq C|p|^2\leq \frac{C^3}{\alpha^2}$, and hence 
\begin{align}
\label{Equation: Phonon Momentum}
\big\langle \Psi^0_\alpha\big|\! \left(\Upsilon_\Lambda-p\right)^2\! \big|\Psi^0_\alpha\big\rangle \leq  2\bigg\langle \Psi^0_\alpha\bigg|\! \left(\frac{1}{i\alpha^2}\nabla_{x_1}+\Upsilon_\Lambda-p\right)^2\! \bigg|\Psi^0_\alpha\bigg\rangle\! -\! 2\alpha^{-4}\big\langle \Psi^0_\alpha\big|\Delta_x\big|\Psi^0_\alpha\big\rangle \leq  c \alpha^{-4}
\end{align}
for a suitable $c>0$. 

Let $\eta:\mathbb{R}^3\longrightarrow [0,\infty)$ be a smooth function that is supported on $B_1(0)$ and satisfies $\int \eta^2=1$. With this at hand we define $\Psi_y(x):=L^{-\frac{3}{2}}\eta\left(L^{-1}(x-y)\right)\Psi^0_\alpha(x)$ and $Z_y:=\|\Psi_y\|$, as well as the set $S\subseteq \mathbb{R}^3$ containing all $y$ satisfying $\braket{\Psi_y|\mathbb{H}_\Lambda|\Psi_y}> Z_y^2\left(E_{\alpha,\Lambda}\! \left(\alpha^2 p\right)+\left(1+\|\nabla \eta\|^2\right)\alpha^{-2(1+\sigma)}\right)$. Making use of the IMS identity we obtain
\begin{align*}
&\braket{\Psi^0_\alpha|\mathbb{H}_\Lambda|\Psi^0_\alpha}=\int \braket{\Psi_y|\mathbb{H}_\Lambda|\Psi_y}\mathrm{d}y-L^{-2}\|\nabla \eta\|^2\\
&\ \ \geq \int_S Z_y^2\mathrm{d}y\left(E_{\alpha,\Lambda}\! \left(\alpha^2 p\right)+\left(1+\|\nabla \eta\|^2\right)\alpha^{-2(1+\sigma)}\right)+\left(1-\int_S Z_y^2\, \mathrm{d}y\right)E_\alpha-L^{-2}\|\nabla \eta\|^2,
\end{align*}
where we have used $\braket{\Psi_y|\mathbb{H}_\Lambda|\Psi_y}\geq E_\alpha$ and $\int Z_y^2\, \mathrm{d}y=1$. Using Eq.~(\ref{Equation-AdHockEnergy}) and $L^{-2}=\alpha^{-2(1+\sigma)}$ therefore yields
\begin{align*}
\Big(E_{\alpha,\Lambda}\! \left(\alpha^2 p\right)-E_\alpha+ & \left(1+\|\nabla \eta\|^2\right)\alpha^{-2(1+\sigma)}\Big)\int_SZ_y^2\, \mathrm{d}y \\
&\leq E_{\alpha,\Lambda}\! \left(\alpha^2 p\right)-E_\alpha+\left(\frac{1}{2}+\|\nabla \eta\|^2\right)\alpha^{-2(1+\sigma)},
\end{align*}
and consequently $\int_S Z_y^2\, \mathrm{d}y\leq 1-\gamma_\alpha$ with $\gamma_\alpha:=\frac{1}{2}\frac{\alpha^{-2(1+\sigma)}}{E_{\alpha,\Lambda}\! \left(\alpha^2 p\right)-E_\alpha+\left(1+\|\nabla \eta\|^2\right)\alpha^{-2(1+\sigma)}}$. Let us further define $S'\subseteq \mathbb{R}^3$ as the set of all $y$ satisfying $\big\langle \Psi_y\big|\left(\Upsilon_\Lambda-p\right)^2\big|\Psi_y\big\rangle> Z_y^2\frac{2c}{\gamma_\alpha}\alpha^{-4}$. Clearly we have, using Eq.~(\ref{Equation: Phonon Momentum}),
\begin{align*}
\frac{2c}{\gamma_\alpha}\alpha^{-4} \int_{S'} Z_y^2\, \mathrm{d}y\leq \int \big\langle \Psi_y\big|\left(\Upsilon_\Lambda-p\right)^2\big|\Psi_y\big\rangle\, \mathrm{d}y=\big\langle \Psi^0_\alpha\big|\left(\Upsilon_\Lambda-p\right)^2\big|\Psi^0_\alpha\big\rangle\leq c\alpha^{-4},
\end{align*}
and hence $\int_{S'} Z_y^2\, \mathrm{d}y\leq \frac{\gamma_\alpha}{2}$. Consequently $\int_{S\cup S'}Z_y^2\, \mathrm{d}y\leq \int_{S}Z_y^2\, \mathrm{d}y+\int_{S'}Z_y^2\, \mathrm{d}y\leq 1-\frac{\gamma_\alpha}{2}<1$. Since $\int  Z_y^2\, \mathrm{d}y=1$, this means in particular that there exists a $y\notin S\cup S'$ with $Z_y>0$, i.e. $\Psi_\alpha^\bullet:=Z_y^{-1}\Psi_y$ satisfies $\braket{\Psi_\alpha^\bullet|\mathbb{H}_\Lambda|\Psi_\alpha^\bullet}\leq E_{\alpha,\Lambda}\! \left(\alpha^2 p\right)+\left(1+\|\nabla \eta\|^2\right)\alpha^{-2(1+\sigma)}$ and $\big\langle \Psi_\alpha^\bullet\big|\left(\Upsilon_\Lambda-p\right)^2\big|\Psi_\alpha^\bullet\big\rangle\leq \frac{2c}{\gamma_\alpha}\alpha^{-4}\lesssim \alpha^{2\sigma-4}$, where we have used $E_{\alpha,\Lambda}\! \left(\alpha^2 p\right)-E_\alpha\lesssim |p|^2\lesssim \alpha^{-2}$ in the last estimate. Moreover, we clearly have $\mathrm{supp}\left(\Psi_\alpha^\bullet\right)\subseteq B_{L}(y)$. By the translation invariance of $\mathbb{H}_\Lambda$ and $\Upsilon_\Lambda$, we can assume w.l.o.g. that $y=0$, which concludes the proof.
\end{proof}

In the following Lemmas~\ref{Lemma-SecondState} and~\ref{Lemma-ThirdState}, we will use localization methods in order to construct approximate $p$ ground states with useful additional properties, which we will use in Lemma \ref{Lemma-Existence of a condensate}, together with an additional localization procedure, in order to show the existence of approximate $p$ ground states satisfying complete condensation. In Theorem \ref{Theorem: Complete Condensation} we will then apply a final localization step in order to obtain complete condensation in a stronger sense, following the argument in \cite{LNSS}.

In order to formulate our various localization results, we follow \cite{BS1} and define for a function $F:\mathcal{M}\left(\mathbb{R}^3\right)\longrightarrow \mathbb{R}$, where $\mathcal{M}\left(\mathbb{R}^3\right)$ is the set of all finite (Borel) measures on $\mathbb{R}^3$, the operator $\widehat{F}$ acting on $\mathcal{F}\left(L^2(\mathbb{R}^3)\right)=\bigoplus\limits_{n=0}^\infty L^2_{\mathrm{sym}}\! (\mathbb{R}^{3\times n})$ as $\widehat{F}\bigoplus\limits_{n=0}^\infty \Psi_n:=\bigoplus\limits_{n=0}^\infty \Psi_n^*$ with $\Psi_n^*(x^1,\dots ,x^n):=F^n(x^1,\dots ,x^n)\Psi_n(x^1,\dots ,x^n)$, where
\begin{align}
\label{Equation: Definition F_n}
F^n(x^1,\dots ,x^n):=F\left(\alpha^{-2}\sum_{k=1}^n\delta_{x^k}\right)
\end{align}
and $\widehat{F}_0:=F(0)$, i.e. $\widehat{F}$ acts component-wise on $\bigoplus\limits_{n=0}^\infty L^2_{\mathrm{sym}}\! (\mathbb{R}^{3\times n})$ by multiplication with the real-valued function $(x^1,\dots ,x^n)\mapsto F\left(\alpha^{-2}\sum_{k=1}^n\delta_{x^k}\right)$.

With this notation at hand, we define for given positive $c_-,c_+$ and $\epsilon'$ the function $F_*(\rho):=\chi^{\epsilon'}\left(c_-+\epsilon'\leq \int \mathrm{d}\rho\leq c_+-\epsilon'\right)$ and the states
\begin{align}
\label{Equation:First Localization}
\Psi_\alpha':=Z_{\alpha}^{-1}\widehat{F}_*\Psi^\bullet_\alpha,
\end{align}
with normalization constants $Z_{\alpha}:=\|\widehat{F}_*\Psi^\bullet_\alpha\|$, where $\Psi^\bullet_\alpha$ is the sequence constructed in Lemma \ref{Lemma-InitialState}. Since $\mathcal{N}=\widehat{G}$ with $G(\rho):=\int \mathrm{d}\rho$, it is clear that the states $\Psi_\alpha'$ are localized to a region where the (scaled) number operator $\mathcal{N}$ is between $c_-$ and $c_+$, i.e. $\chi\left(c_-\leq \mathcal{N}\leq c_+\right)\Psi_\alpha'=\Psi_\alpha'$. The following Lemma \ref{Lemma-SecondState} quantifies the energy and momentum error of this localization procedure. The subsequent results in Lemmas~\ref{Lemma-SecondState},~\ref{Lemma-ThirdState} and~\ref{Lemma-Existence of a condensate} as well as Theorem \ref{Theorem: Complete Condensation}, which quantify the energy and momentum error of specific localization procedures, are generalizations of the corresponding results in \cite{BS1}, where only the energy cost of such localization procedures is discussed. In the following we will usually refer to the respective results in \cite{BS1} when it comes to quantifying the energy error, and only discuss the localization error of the momentum operator $\Upsilon_\Lambda$.
\begin{lem}
\label{Lemma-SecondState}
Given $0<\sigma<\frac{1}{4}$, let $\Lambda=\alpha^{\frac{4}{5}(1+\sigma)}$ and $L=\alpha^{1+\sigma}$, and assume $p$ satisfies $|p|\leq \frac{C}{\alpha}$ and $E_{\alpha,\Lambda}\! \left(\alpha^2 p\right)\leq E_\alpha+C|p|^2$ for a given $C>0$. Then there exist constants $c_-,c_+$ and $\epsilon'$, such that the states $\Psi_\alpha'$ defined in Eq.~(\ref{Equation:First Localization}) satisfy $\big\langle \Psi_\alpha'\big|\left(\Upsilon_\Lambda-p\right)^2\big|\Psi_\alpha'\big\rangle\lesssim \alpha^{2\sigma-4}$ and $\braket{\Psi_\alpha'|\mathbb{H}_\Lambda|\Psi_\alpha'}-E_{\alpha,\Lambda}\! \left(\alpha^2 p\right)\lesssim \alpha^{-2(1+\sigma)}$.
\end{lem}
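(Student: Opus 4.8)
Before the author's proof, here is the line of attack I would take. I treat the two assertions separately and, as announced above, borrow the energy estimate from \cite{BS1}, giving details only for the momentum bound.

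For the energy inequality $\braket{\Psi_\alpha'|\mathbb{H}_\Lambda|\Psi_\alpha'}-E_{\alpha,\Lambda}(\alpha^2 p)\lesssim\alpha^{-2(1+\sigma)}$, the first step is to check that the state $\Psi_\alpha^\bullet$ produced by Lemma~\ref{Lemma-InitialState} meets the hypotheses of the corresponding number--localization lemma of \cite{BS1}: an energy excess $\braket{\Psi_\alpha^\bullet|\mathbb{H}_\Lambda|\Psi_\alpha^\bullet}-E_{\alpha,\Lambda}(\alpha^2 p)\lesssim\alpha^{-2(1+\sigma)}$, the a priori bound $\braket{\Psi_\alpha^\bullet|-\Delta_x|\Psi_\alpha^\bullet}\lesssim 1$ from \cite[Lemma~2.4]{BS1}, and $E_{\alpha,\Lambda}(\alpha^2 p)-E_\alpha\lesssim|p|^2\lesssim\alpha^{-2}$. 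The IMS-type localization in the number operator carried out in \cite{BS1} then applies essentially verbatim and fixes $\alpha$-independent constants $c_-,c_+,\epsilon'$. The same analysis --- which rests on a quasi-classical lower bound for $\mathbb{H}_\Lambda$ in terms of the Pekar functional --- shows that for states with energy this close to $E_{\alpha,\Lambda}(\alpha^2 p)$ the number operator $\mathcal{N}$ is concentrated, up to a small error, in a fixed compact subinterval of $(0,\infty)$ around $\|\varphi^\mathrm{Pek}\|^2$; choosing $c_-,c_+,\epsilon'$ accordingly ensures that $\widehat{F}_*$ retains almost all of the mass of $\Psi_\alpha^\bullet$, i.e.\ $Z_\alpha=\|\widehat{F}_*\Psi_\alpha^\bullet\|\gtrsim 1$, a fact I use below.

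For the momentum bound the key observation is that this first localization is a function of the number operator alone. Indeed $F_*(\rho)=\chi^{\epsilon'}(c_-+\epsilon'\le\int\mathrm{d}\rho\le c_+-\epsilon')$ depends on $\rho$ only through its total mass $\int\mathrm{d}\rho$, so by Eq.~(\ref{Equation: Definition F_n}) and the identity $\mathcal{N}=\widehat{G}$ with $G(\rho)=\int\mathrm{d}\rho$ --- which acts on the $n$-particle sector by multiplication with $\alpha^{-2}n$ --- one obtains $\widehat{F}_*=\chi^{\epsilon'}(c_-+\epsilon'\le\mathcal{N}\le c_+-\epsilon')$, a bounded Borel function of $\mathcal{N}$. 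On the other hand $\Upsilon_\Lambda$ is a second-quantised one-body operator and hence preserves the particle number, so $[\Upsilon_\Lambda,\mathcal{N}]=0$ and therefore $\widehat{F}_*$ commutes with $\Upsilon_\Lambda$ (and with the scalar $p$). Using this together with $\|\widehat{F}_*\|\le 1$,
\begin{align*}
\big\langle \Psi_\alpha'\big|(\Upsilon_\Lambda-p)^2\big|\Psi_\alpha'\big\rangle
&=Z_\alpha^{-2}\big\|(\Upsilon_\Lambda-p)\widehat{F}_*\Psi_\alpha^\bullet\big\|^2
=Z_\alpha^{-2}\big\|\widehat{F}_*(\Upsilon_\Lambda-p)\Psi_\alpha^\bullet\big\|^2\\
&\le Z_\alpha^{-2}\big\langle \Psi_\alpha^\bullet\big|(\Upsilon_\Lambda-p)^2\big|\Psi_\alpha^\bullet\big\rangle
\lesssim\alpha^{2\sigma-4},
\end{align*}
the last step using $Z_\alpha\gtrsim 1$ and the momentum bound on $\Psi_\alpha^\bullet$ from Lemma~\ref{Lemma-InitialState}. (The only domain point to note is that $\widehat{F}_*$ maps $D(\Upsilon_\Lambda)$ into itself, which is immediate from $[\widehat{F}_*,\Upsilon_\Lambda]=0$.)

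In summary, essentially all of the lemma is carried by \cite{BS1} --- the IMS localization in the number operator and the concentration of $\mathcal{N}$ that produces $c_\pm,\epsilon'$ and the lower bound $Z_\alpha\gtrsim 1$ --- whereas the genuinely new momentum estimate is almost free for this particular localization, precisely because $\widehat{F}_*$ is a function of $\mathcal{N}$ only and hence commutes with $\Upsilon_\Lambda$. I therefore expect the only real obstacle to be bookkeeping: verifying that the constants $c_\pm,\epsilon'$ and the \cite{BS1} error bounds can be taken uniform in $\alpha$ and in $p$ throughout the range $|p|\le C/\alpha$, with $E_{\alpha,\Lambda}(\alpha^2 p)-E_\alpha\lesssim\alpha^{-2}$ as the relevant input for both parts. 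The substantial work on momentum localization errors is deferred to the later localization results (Lemmas~\ref{Lemma-ThirdState} and~\ref{Lemma-Existence of a condensate} and Theorem~\ref{Theorem: Complete Condensation}), where $F$ genuinely depends on the measure $\rho$, so that $[\widehat{F},\Upsilon_\Lambda]\neq 0$ and the commutator estimates of the appendix are needed.
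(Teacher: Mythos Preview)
Your proposal is correct and follows essentially the same approach as the paper: the energy bound and the lower bound $Z_\alpha\to 1$ (hence $Z_\alpha\gtrsim 1$) are imported from \cite[Lemma~3.4]{BS1}, and the momentum bound is obtained from the commutation $[\widehat{F}_*,\Upsilon_\Lambda]=0$ together with $Z_\alpha^{-2}\lesssim 1$. The only cosmetic difference is that the paper phrases the momentum step via the IMS-type identity $Z_\alpha^2\langle\Psi_\alpha'|(\Upsilon_\Lambda-p)^2|\Psi_\alpha'\rangle+(1-Z_\alpha^2)\langle\widetilde{\Psi}_\alpha|(\Upsilon_\Lambda-p)^2|\widetilde{\Psi}_\alpha\rangle=\langle\Psi_\alpha^\bullet|(\Upsilon_\Lambda-p)^2|\Psi_\alpha^\bullet\rangle$ rather than your direct use of $\|\widehat{F}_*\|\le 1$, but both yield the identical inequality.
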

\begin{proof}
 By our assumptions we clearly have $\widetilde{E}_\alpha-E_\alpha\lesssim \alpha^{-\frac{4}{29}}$ with $\widetilde{E}_\alpha:=\braket{\Psi^\bullet_\alpha|\mathbb{H}_\Lambda|\Psi^\bullet_\alpha}$, and therefore we can apply  \cite[Lemma 3.4]{BS1}, which tells us that we can choose $c_-,c_+$ and $\epsilon'$, such that $\braket{\Psi_\alpha'|\mathbb{H}_\Lambda|\Psi_\alpha'}-E_{\alpha,\Lambda}\! \left(\alpha^2 p\right)\lesssim \alpha^{-2(1+\sigma)}$, and furthermore $Z_\alpha\underset{\alpha \rightarrow \infty}{\longrightarrow}1$. Since $\widehat{F}_*$ commutes with $\Upsilon_\Lambda$, we obtain with $\widetilde{\Psi}_\alpha:=\sqrt{\frac{1-\widehat{F}_*^2}{1-Z_\alpha^2}}\Psi_\alpha^\bullet$
\begin{align*}
Z_\alpha^2\big\langle \Psi_\alpha'\big|\left(\Upsilon_\Lambda-p\right)^2\big|\Psi_\alpha'\big\rangle+(1-Z_\alpha^2)\big\langle \widetilde{\Psi}_\alpha\big|\left(\Upsilon_\Lambda-p\right)^2\big|\widetilde{\Psi}_\alpha\big\rangle=\big\langle \Psi_\alpha^\bullet\big|\left(\Upsilon_\Lambda-p\right)^2\big|\Psi_\alpha^\bullet\big\rangle
\end{align*}
Hence $\big\langle \Psi_\alpha'\big|\left(\Upsilon_\Lambda-p\right)^2\big|\Psi_\alpha'\big\rangle\leq Z_\alpha^{-2}\big\langle \Psi_\alpha^\bullet\big|\left(\Upsilon_\Lambda-p\right)^2\big|\Psi_\alpha^\bullet\big\rangle\lesssim \alpha^{2\sigma-4}$.
\end{proof}

When it comes to localizations with respect to more complicated functions $F$ compared to the one used in Eq.~(\ref{Equation:First Localization}), we first need to introduce some tools in order to quantify the localization error of the momentum operator. Given a function $F:\mathcal{M}\left(\mathbb{R}^3\right)\longrightarrow \mathbb{R}$, $\Omega\subseteq \mathcal{M}\left(\mathbb{R}^3\right)$ and $\lambda> 0$, let us define
\begin{align}
\label{Equation-DefinitionNorm}
\|F\|_{\Omega,\lambda}^2:=\sup_{1\leq n\leq \lambda \alpha^2}\sup_{x\in \Omega_n}\left\|(F^{n,\bar{x}})'\right\|^2=\sup_{1\leq n\leq \lambda \alpha^2}\sup_{x\in \Omega_n}\int_\mathbb{R} \left|\frac{\mathrm{d}}{\mathrm{d}t}F^n(t,\bar{x})\right|^2\! \mathrm{d}t,
\end{align}
where $x=(x^1,\dots,x^n)\in \mathbb{R}^{3\times n}$ with $x^k=(x_1^k,x_2^k,x_3^k)$ and $\bar{x}:=(x^1_2,x^1_3,x^2,\dots,x^n)\in \mathbb{R}^{3\times n-1}$, i.e. we define $\bar{x}$ such that $x=(x^1_1,\bar{x})$,  $\Omega_n$ is the set of all $x$ such that $\alpha^{-2}\sum_{j=1}^n\delta_{x^j}\in \Omega$ and $F^{n,y}:\mathbb{R}\longrightarrow \mathbb{R}$ is defined as $F^{n,y}(t):=F^n(t,y)$ for $y\in  \mathbb{R}^{3\times n-1}$, where $F^n$ is as in Eq.~(\ref{Equation: Definition F_n}).

\begin{lem}
\label{Lemma-IMS}
Given $\lambda>0$, there exists a constant $T>0$ such that we have for all quadratic partitions of unity $\mathcal{P}=\{F_j:\mathcal{M}\left(\mathbb{R}^3\right)\longrightarrow \mathbb{R}:j\in J\}$, i.e. families of functions satisfying $0\leq F_j\leq 1$ and $\sum_{j\in J}F_j^2=1$, $\Lambda>0$, $|p|\leq \Lambda$, $\Omega\subseteq \mathcal{M}\left(\mathbb{R}^3\right)$ and states $\Psi$ satisfying $\chi\left(\mathcal{N}\leq \lambda\right)\Psi=\Psi$ and $\widehat{\mathds{1}_\Omega}\Psi=\Psi$
\begin{align*}
\left|\sum_{j\in J}\braket{\Psi_j|\left(\Upsilon_\Lambda-p\right)^2|\Psi_j}-\braket{\Psi|\left(\Upsilon_\Lambda-p\right)^2|\Psi}\right|\leq T \Lambda\sum_{j\in J}\|F_j\|^2_{\Omega,\lambda},
\end{align*}
where we define $\Psi_j:=\widehat{F}_j\Psi$.
\end{lem}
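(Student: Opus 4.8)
The plan is to expand the left-hand side using the definition $\Psi_j = \widehat{F}_j \Psi$ together with $\sum_j F_j^2 = 1$, and to compute the commutator of each $\widehat{F}_j$ with $\Upsilon_\Lambda$. Write $T_\Lambda := \Upsilon_\Lambda - p$; since $p$ is a scalar, $[\widehat{F}_j, T_\Lambda] = [\widehat{F}_j, \Upsilon_\Lambda]$. The standard IMS algebra gives
\begin{align*}
\sum_{j\in J}\braket{\Psi_j|T_\Lambda^2|\Psi_j} - \braket{\Psi|T_\Lambda^2|\Psi} = -\sum_{j\in J}\braket{\Psi\big|\,[\widehat{F}_j,[\widehat{F}_j,T_\Lambda^2]]\,\big|\Psi}\cdot\tfrac12\ \ \text{(in the appropriate symmetrized form)},
\end{align*}
or more conveniently the identity $\sum_j \widehat{F}_j T_\Lambda^2 \widehat{F}_j - T_\Lambda^2 = -\tfrac12 \sum_j [\widehat{F}_j,[\widehat{F}_j,T_\Lambda^2]]$, which upon expanding the double commutator of $T_\Lambda^2$ produces $\sum_j [\widehat{F}_j, T_\Lambda]^2$ plus cross terms involving $[\widehat{F}_j,[\widehat{F}_j,T_\Lambda]]$. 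So the first step is to reduce the whole estimate to bounding $\sum_j \| [\widehat{F}_j, \Upsilon_\Lambda]\Psi\|^2$ and $\sum_j \|T_\Lambda [\widehat{F}_j,[\widehat{F}_j,\Upsilon_\Lambda]]\Psi\|$ — and since $T_\Lambda \lesssim \Lambda \mathcal{N} + |p| \lesssim \Lambda$ on the range of $\chi(\mathcal{N}\le\lambda)$, the second of these is again controlled by a commutator norm times $\Lambda$.

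The second and central step is to compute $[\widehat{F}_j, \Upsilon_\Lambda]$ explicitly on the $n$-particle sector and to bound it by $\|F_j\|_{\Omega,\lambda}$. Recall $\Upsilon_\Lambda = \int \chi^1(\Lambda^{-1}|k_1|\le 2)\, k_1\, a_k^\dagger a_k\, \mathrm{d}k$; acting on $L^2_{\mathrm{sym}}(\mathbb{R}^{3n})$ this is $\sum_{k=1}^n g(D_{x^k})$ where $D$ is the momentum operator in the first coordinate and $g(\xi) := \xi\,\chi^1(\Lambda^{-1}|\xi|\le 2)$, so $\|g\|_\infty \lesssim \Lambda$ and $g$ is a bounded symbol. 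Meanwhile $\widehat{F}_j$ acts by multiplication by $F_j^n(x^1,\dots,x^n) = F_j(\alpha^{-2}\sum \delta_{x^k})$. Hence $[\widehat{F}_j,\Upsilon_\Lambda]$ on the $n$-sector equals $\sum_{k=1}^n [F_j^n, g(D_{x^k_1})]$, a sum of commutators of a multiplication operator with a Fourier multiplier in a single variable. The key pointwise bound is the standard one: $\|[F_j^n, g(D_{x^k_1})]\| \lesssim \|g\|_{\dot{B}} \,\| \partial_{x^k_1} F_j^n \|_\infty$ — but here we want an $L^2$-in-$t$ bound matching the definition of $\|F_j\|_{\Omega,\lambda}^2 = \sup_n \sup_{\bar x \in \Omega_n} \int_{\mathbb R}|\partial_t F^n_j(t,\bar x)|^2\,\mathrm dt$, so I would instead use that $\Psi$ is supported (in the measure sense) in $\Omega$, i.e. $\widehat{\mathds 1_\Omega}\Psi = \Psi$, restrict the $x$-integration to $\Omega_n$, and apply the one-dimensional estimate $\|[M_\phi, g(D)] u\|_{L^2(\mathrm dt)} \lesssim \Lambda^{1/2}\|\phi'\|_{L^2(\mathrm dt)}\|u\|_{L^2(\mathrm dt)}$ that holds for symbols $g$ with $|g'|\lesssim 1$ and $|g|\lesssim\Lambda$ (square-root cancellation: one derivative split between $\phi$ and the kernel of $g(D)$, Schur-tested). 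Summing over $k=1,\dots,n$ and over $n\le\lambda\alpha^2$, and using that on each $n$-sector the particles are symmetric so all $k$ contribute equally, gives $\|[\widehat{F}_j,\Upsilon_\Lambda]\Psi\| \lesssim \Lambda^{1/2}\|F_j\|_{\Omega,\lambda}\|\Psi\|$ — note the $\lambda\alpha^2$ factor from the number of terms is absorbed because $\|F^{n,\bar x}_j\|$ in the norm already accounts for the full function, and one uses $\sum_j F_j^2 = 1$ to control the overlap combinatorially, exactly as in the energy version in \cite{BS1}.

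Summing the squares over $j$ and invoking $\sum_j F_j^2 = 1$ (which makes the double-commutator cross terms telescope, since $\sum_j \widehat{F}_j[\widehat{F}_j,\cdot] = \tfrac12[\sum_j\widehat{F}_j^2,\cdot] = 0$ as operators, leaving only the genuinely quadratic $\sum_j[\widehat{F}_j,\cdot]^2$ term) yields the claimed bound $T\Lambda\sum_j\|F_j\|^2_{\Omega,\lambda}$. The main obstacle I anticipate is the commutator estimate in the second step: controlling $[M_\phi, g(D)]$ with the correct power of $\Lambda$ and in the $L^2$-in-one-variable norm dictated by the definition \eqref{Equation-DefinitionNorm}, while keeping the bound uniform in $n$ up to $n\le\lambda\alpha^2$ and correctly tracking how the $n$-fold sum interacts with the $\alpha^{-2}$ scaling inside $F^n_j$. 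This is presumably where one cites or adapts a commutator lemma from Appendix A of the paper; the rest is the routine IMS bookkeeping already developed for the energy in \cite{BS1}.
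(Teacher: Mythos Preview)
Your overall architecture matches the paper's: IMS identity $\Rightarrow$ double commutator $[[(\Upsilon_\Lambda-p)^2,\widehat F],\widehat F]$, expand as $2[\Upsilon_\Lambda,\widehat F]^2+\{\Upsilon_\Lambda-p,[[\Upsilon_\Lambda,\widehat F],\widehat F]\}$, then reduce to a one-dimensional commutator estimate for $[g(\tfrac1i\tfrac{d}{dt}),F^{n,\bar x}]$ which is precisely Lemma~\ref{Lemma-IMSforGradient} in the appendix. So the strategy is right, but two steps in your write-up are actually wrong and would not go through as stated.

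\textbf{The $\alpha^{-2}$ scaling.} You write that on the $n$-particle sector $\Upsilon_\Lambda$ acts as $\sum_{k=1}^n g(D_{x^k_1})$. With the rescaled CCR $[a,a^\dagger]=\alpha^{-2}$ this is off by a factor $\alpha^{-2}$: the correct action is $\alpha^{-2}\sum_{k=1}^n g(\tfrac1i\nabla_{x^k_1})$, and this $\alpha^{-2}$ is exactly what cancels the factor $n\le\lambda\alpha^2$ coming from the sum over phonon coordinates, leaving a harmless $\lambda$. Your sentence ``the $\lambda\alpha^2$ factor \dots\ is absorbed because $\|F^{n,\bar x}_j\|$ in the norm already accounts for the full function'' is not correct --- the norm $\|F\|_{\Omega,\lambda}$ is a supremum over $n$ and $x$ and carries no compensating $\alpha^{-2}$. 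Without the rescaling, your bound would blow up like $\alpha^2$. The paper makes this explicit by writing $[\Upsilon_\Lambda,\widehat F]\Psi=\bigoplus_n \alpha^{-2}n\,\Psi_n^*$ and then using $\alpha^{-2}n\le\lambda$.

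\textbf{The cross terms do not telescope.} Your identity $\sum_j \widehat F_j[\widehat F_j,\,\cdot\,]=\tfrac12[\sum_j\widehat F_j^2,\,\cdot\,]=0$ is false; only the symmetrized version $\sum_j\{\widehat F_j,[\widehat F_j,\,\cdot\,]\}=[\sum_j\widehat F_j^2,\,\cdot\,]=0$ holds. Consequently $\sum_j\{\Upsilon_\Lambda-p,\,[[\Upsilon_\Lambda,\widehat F_j],\widehat F_j]\}$ does \emph{not} vanish, and you must bound it. The paper does this by Cauchy--Schwarz, splitting into $\tfrac{\|F\|^2_{\Omega,\lambda}}{\Lambda}(\Upsilon_\Lambda-p)^2$ (controlled by $|p|\le\Lambda$ and $\mathcal N\le\lambda$, so $(\Upsilon_\Lambda-p)^2\lesssim\Lambda^2(\mathcal N+1)^2$) and $\tfrac{\Lambda}{\|F\|^2_{\Omega,\lambda}}[[\Upsilon_\Lambda,\widehat F],\widehat F]^2$; for the latter one again invokes Lemma~\ref{Lemma-IMSforGradient}, this time the double-commutator bound $\|[[g(\tfrac1i\tfrac d{dt}),F^{n,\bar x}],F^{n,\bar x}]\|_{\mathrm{op}}\lesssim \|(F^{n,\bar x})'\|^2$. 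You actually anticipated the need for this earlier in your proposal (``bounding \dots\ $\sum_j\|T_\Lambda[\widehat F_j,[\widehat F_j,\Upsilon_\Lambda]]\Psi\|$''), so your final paragraph contradicts your own setup.
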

\begin{proof}
Using the IMS identity we can write
\begin{align*}
\sum_{j\in J}\braket{\Psi_j|\left(\Upsilon_\Lambda-p\right)^2|\Psi_j}-\braket{\Psi|\left(\Upsilon_\Lambda-p\right)^2|\Psi}=-\frac{1}{2}\sum_{j\in J}\big\langle \Psi\big| \left[\left[\left(\Upsilon_\Lambda-p\right)^2,\widehat{F}_j\right],\widehat{F}_j\right]\big|\Psi \big\rangle.
\end{align*}
Hence it suffices to  show that $\pm\big\langle \Psi\big| \left[\left[\left(\Upsilon_\Lambda-p\right)^2,\widehat{F}\right],\widehat{F}\right]\big|\Psi \big\rangle\lesssim \Lambda\|F\|_{\Omega,\lambda}^2 $ for any bounded $F:\mathcal{M}\left(\mathbb{R}^3\right)\longrightarrow \mathbb{R}$ and state satisfying $\chi\left(\mathcal{N}\leq \lambda\right)\Psi=\Psi$ and $\widehat{\mathds{1}_\Omega}\Psi=\Psi$. Let us start by estimating
\begin{align*}
&\pm \left[\left[\left(\Upsilon_\Lambda-p\right)^2,\widehat{F}\right],\widehat{F}\right]=\pm 2\left[\Upsilon_\Lambda,\widehat{F}\right]^2 \pm \Big\{\Upsilon_\Lambda-p,\left[\left[\Upsilon_\Lambda,\widehat{F}\right],\widehat{F}\right]\Big\}\\
&\ \ \ \ \leq -2\left[\Upsilon_\Lambda,\widehat{F}\right]^2+\frac{\|F\|_{\Omega,\lambda}^2}{\Lambda}\left(\Upsilon_\Lambda-p\right)^2+\frac{\Lambda}{\|F\|_{\Omega,\lambda}^2} \left[\left[\Upsilon_\Lambda,\widehat{F}\right],\widehat{F}\right]^2,
\end{align*}
where $\{A,B\}:=AB+BA$. By the definition of $\Upsilon_\Lambda$ it is clear that $\frac{\|F\|_{\Omega,\lambda}^2}{\Lambda}\left(\Upsilon_\Lambda-p\right)^2\lesssim \Lambda\|F\|_{\Omega,\lambda}^2\left(\mathcal{N}+1\right)^2$ for $|p|\leq \Lambda$, and consequently $\pm\big\langle \Psi\big| \frac{\|F\|_{\Omega,\lambda}^2}{\Lambda}\left(\Upsilon_\Lambda-p\right)^2\big|\Psi \big\rangle\lesssim \Lambda\|F\|_{\Omega,\lambda}^2 $. Using that $\Psi$ is a function with values in $\mathcal{F}_{\leq \lambda \alpha^2}\left(L^2(\mathbb{R}^3)\right):=\bigoplus\limits_{n\leq \lambda \alpha^2} L^2_{\mathrm{sym}}\! (\mathbb{R}^{3\times n})$, we are going to represent it as $\Psi=\bigoplus_{n\leq \lambda \alpha^2}\Psi_n$ where $\Psi_n(z,x^1,\dots ,x^{n})$ is a function of the electron variable $z$ and the $n$ phonon coordinates $x^j\in \mathbb{R}^3$ satisfying $\Psi_n(z,x^1,\dots ,x^{n})=0$ for all $(x^1,\dots,x^n)\notin\Omega_n$. In order to simplify the notation, we will suppress the dependence on the electron variable $z$. We have $\left[\Upsilon_\Lambda,\widehat{F}\right]\Psi=\bigoplus_{1\leq n\leq \lambda \alpha^2} \alpha^{-2}n\Psi^*_n$ with $\Psi_n^*:=\frac{1}{n}\sum_{j=1}^n\left[g\left(\frac{1}{i}\nabla_{x^j_1}\right),F^n\right]\Psi_n$, where $g(k):=\! \chi^1\! \left(\Lambda^{-1}|k|\leq 2\right)\!k$ for $k\in \mathbb{R}$. Hence
\begin{align*}
\Big\langle \Psi \Big|-\left[\Upsilon_\Lambda,\widehat{F}\right]^2\Big|\Psi \Big\rangle=\left\|\left[\Upsilon_\Lambda,\widehat{F}\right]\Psi\right\|^2=\sum_{1\leq n\leq \lambda \alpha^2}\alpha^{-4}n^2\|\Psi^*_n\|^2\leq \lambda^2 \sum_{1\leq n\leq \lambda \alpha^2}\|\Psi^*_n\|^2,
\end{align*}
and $\|\Psi^*_n\|\leq \frac{1}{n}\sum_{j=1}^n\left\|\left[g\left(\frac{1}{i}\nabla_{x^j_1}\right),F^n\right]\Psi_n\right\|=\left\|\left[g\left(\frac{1}{i}\nabla_{x^1_1}\right),F^n\right]\Psi_n\right\|$, where we have used the permutation symmetry of $\Psi_n$. By Lemma \ref{Lemma-IMSforGradient} we know that
\begin{align*}
\left\|\left[g\left(\frac{1}{i}\nabla_{x^1_1}\right),F^n\right]\Psi_n\right\|   \! \leq  \!  \!  \sup_{x\in \mathrm{supp}(\Psi_n)}\left\|\! \left[g\left(\frac{1}{i}\frac{\mathrm{d}}{\mathrm{d}t}\right),F^{n,\bar{x}}\right]\! \right\|_\mathrm{op} \!  \! \|\Psi_n\| \! \lesssim \!  \sqrt{\Lambda} \! \sup_{x\in \Omega_n} \! \|(F^{n,\bar{x}})'\|\|\Psi_n\|,
\end{align*}
and therefore
\begin{align*}
\Big\langle \Psi \Big|-\left[\Upsilon_\Lambda,\widehat{F}\right]^2\Big|\Psi \Big\rangle\leq \lambda^2 \Lambda \sup_{1\leq n\leq \lambda \alpha^2,x\in \Omega_n}\| (F^{n,\bar{x}})'\|^2 \sum_{n\leq \lambda \alpha^2} \|\Psi_n\|^2=\lambda^2 \Lambda \|F\|_{\Omega,\lambda}^2.
\end{align*}
In order to estimate the expectation value of $\left[\left[\Upsilon_\Lambda,\widehat{F}\right],\widehat{F}\right]^2$ we proceed similarly, by writing $\left[\left[\Upsilon_\Lambda,\widehat{F}\right],\widehat{F}\right]\Psi=\bigoplus_{n\leq \lambda \alpha^2} \alpha^{-2}n\widetilde{\Psi}_n$ with $\widetilde{\Psi}_n=\frac{1}{n}\sum_{j=1}^n\left[\left[g\left(\frac{1}{i}\nabla_{x^j_1}\right),F^n\right],F^n\right]\Psi_n$, and estimating $\Big\langle \Psi \Big|\left[\left[\Upsilon_\Lambda,\widehat{F}\right],\widehat{F}\right]^2\Big|\Psi \Big\rangle\leq \lambda^2 \sum_{n\leq \lambda \alpha^2}\left\|\widetilde{\Psi}_n\right\|^2$ as well as
\begin{align*}
\left\|\widetilde{\Psi}_n\right\|\leq \sup_{x\in \mathrm{supp}(\Psi_n)}\left\|\left[\left[g\left(\frac{1}{i}\frac{\mathrm{d}}{\mathrm{dt}}\right),F^{n,\bar{x}}\right],F^{n,\bar{x}}\right]\right\|_\mathrm{op}\|\Psi_n\|\leq \sup_{x\in \Omega_n} \|(F^{n,\bar{x}})'\|^2 \|\Psi_n\|,
\end{align*}
where we have again applied Lemma \ref{Lemma-IMSforGradient}. This concludes the proof.
\end{proof}

With the subsequent localization step in Eq.~(\ref{Equation:Second Localization}), we want to restrict the state $\Psi'_\alpha$ to phonon density configurations $\rho$ which have a sharp concentration of their mass. To be precise, for given $R$ and $\epsilon,\delta>0$, let us define $K_R\left(\rho\right):=\iint \chi^{\epsilon}\left(R-\epsilon\leq |x-y|\right)\mathrm{d}\rho(x)\mathrm{d}\rho(y)$ as well as $F_R\left(\rho\right):=\chi^{\frac{\delta}{3}}\Big(K_R\left(\rho\right)\leq \frac{2\delta}{3}\Big)$ and
\begin{align}
\label{Equation:Second Localization}
\Psi''_\alpha:=Z_{R,\alpha}^{-1}\widehat{F}_R\Psi'_\alpha,
\end{align}
where $\Psi'_\alpha$ is as in Lemma \ref{Lemma-SecondState} and $Z_{R,\alpha}:=\|\widehat{F}_R\Psi'_\alpha\|$. Clearly $\widehat{\mathds{1}_\Omega}\Psi''_\alpha=\Psi''_\alpha$ where $\Omega$ is the set of all $\rho$ satisfying $\iint_{|x-y|\geq R} \mathrm{d}\rho(x)\mathrm{d}\rho(y)\leq \delta$. In the following Lemma \ref{Lemma-ThirdState} we are going to quantify the energy and momentum cost of this localization procedure.

\begin{lem}
\label{Lemma-ThirdState}
Given $0<\sigma<\frac{1}{4}$, let $\Lambda=\alpha^{\frac{4}{5}(1+\sigma)}$ and $L:=\alpha^{1+\sigma}$, and assume $p$ satisfies $|p|\leq \frac{C}{\alpha}$ and $E_{\alpha,\Lambda}\! \left(\alpha^2 p\right)\leq E_\alpha+C|p|^2$ for a given $C>0$. Then for any $\epsilon,\delta>0$, there exists a constant $R>0$, such that the states $\Psi_\alpha''$ defined in Eq.~(\ref{Equation:Second Localization}) satisfy $\big\langle \Psi_\alpha''\big|\left(\Upsilon_\Lambda-p\right)^2\big|\Psi_\alpha''\big\rangle\lesssim \alpha^{\frac{4}{5}\sigma-\frac{16}{5}}$ and $\braket{\Psi_\alpha''|\mathbb{H}_\Lambda|\Psi_\alpha''}-E_{\alpha,\Lambda}\! \left(\alpha^2 p\right)\lesssim \alpha^{-2(1+\sigma)}$.
\end{lem}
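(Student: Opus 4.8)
The plan is to follow the same two-step localization strategy that was used in Lemma \ref{Lemma-SecondState}, but now with respect to the more complicated function $F_R$, and to control the momentum error by invoking the abstract localization estimate of Lemma \ref{Lemma-IMS} rather than the elementary argument available for the number-operator cut-off. First I would observe that the energy estimate $\braket{\Psi_\alpha''|\mathbb{H}_\Lambda|\Psi_\alpha''} - E_{\alpha,\Lambda}(\alpha^2 p) \lesssim \alpha^{-2(1+\sigma)}$ is exactly the content of the corresponding result in \cite{BS1} (the analogue of \cite[Lemma 3.5]{BS1} or similar), applied to the state $\Psi_\alpha'$ constructed in Lemma \ref{Lemma-SecondState}, which we already know satisfies $\braket{\Psi_\alpha'|\mathbb{H}_\Lambda|\Psi_\alpha'} - E_{\alpha,\Lambda}(\alpha^2 p) \lesssim \alpha^{-2(1+\sigma)}$ and $\chi(c_-\le \mathcal N\le c_+)\Psi_\alpha'=\Psi_\alpha'$; the localization in $F_R$ also guarantees, for $R$ chosen large enough depending on $\epsilon,\delta$, that $Z_{R,\alpha}\to 1$ as $\alpha\to\infty$. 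So the genuinely new work is the momentum bound.

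For the momentum error, the key step is to set up the quadratic partition of unity $\mathcal P = \{F_R, \sqrt{1-F_R^2}\}$ and apply Lemma \ref{Lemma-IMS} with $\lambda = c_+$ (so that $\chi(\mathcal N\le \lambda)\Psi_\alpha' = \Psi_\alpha'$) and with $\Omega$ the support constraint already satisfied by $\Psi_\alpha'$. This gives
\begin{align*}
\big\langle \Psi_\alpha'\big|(\Upsilon_\Lambda-p)^2\big|\Psi_\alpha'\big\rangle \geq Z_{R,\alpha}^2 \big\langle \Psi_\alpha''\big|(\Upsilon_\Lambda-p)^2\big|\Psi_\alpha''\big\rangle - T\Lambda\big(\|F_R\|_{\Omega,\lambda}^2 + \|\sqrt{1-F_R^2}\|_{\Omega,\lambda}^2\big).
\end{align*}
Since Lemma \ref{Lemma-SecondState} gives $\big\langle \Psi_\alpha'\big|(\Upsilon_\Lambda-p)^2\big|\Psi_\alpha'\big\rangle \lesssim \alpha^{2\sigma-4}$ and $Z_{R,\alpha}\to 1$, it remains to bound the norms $\|F_R\|_{\Omega,\lambda}$. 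The main obstacle is precisely this: estimating $\sup_{n\le\lambda\alpha^2}\sup_{x}\int |\frac{\mathrm d}{\mathrm dt}F_R^n(t,\bar x)|^2\,\mathrm dt$. I would unwind the definition of $F_R$ through $K_R$: differentiating $K_R^n$ in one coordinate $t=x_1^1$ produces a sum of $n$ terms each involving $\frac{\mathrm d}{\mathrm dt}\chi^\epsilon(R-\epsilon \le |(t,x_2^1,x_3^1) - y|)$, which is supported in a set of $t$-measure $O(\epsilon)$ and bounded by $O(\epsilon^{-1})$; the chain rule through $\chi^{\delta/3}$ contributes a further bounded factor $O(\delta^{-1})$. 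The factor $\alpha^{-2}$ in the definition $F_R^n(x) = F_R(\alpha^{-2}\sum\delta_{x^k})$ combined with $n\le\lambda\alpha^2$ keeps the relevant derivative $O(1)$ in $\alpha$, so one expects $\|F_R\|_{\Omega,\lambda}^2 \lesssim \alpha^{-2}$ (one power of $\alpha^{-2}$ surviving from the $L^2$-in-$t$ integration over an $O(\epsilon)$-width set times the square of an $O(\alpha^{-2}n)\cdot O(\epsilon^{-1}\delta^{-1})=O(1)$ pointwise bound, but with a residual $\alpha^{-2}$ because the $L^2$ norm of a single bump of height $O(1)$ and width $O(\epsilon)$ is $O(\sqrt\epsilon)$ and the derivative of each of the $n$ summands actually carries an extra $\alpha^{-2}$). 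Carrying this through, $T\Lambda\|F_R\|_{\Omega,\lambda}^2 \lesssim \Lambda\alpha^{-2} = \alpha^{\frac45(1+\sigma)-2}$.

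Finally I would compare exponents: $\alpha^{\frac45(1+\sigma)-2} = \alpha^{\frac45\sigma - \frac65}$, which must be checked against the claimed bound $\alpha^{\frac45\sigma - \frac{16}{5}}$ — and here I realize the naive bound is \emph{not} good enough, so the argument must be sharpened. The resolution should be that $F_R$ is only \emph{nonconstant} on the transition region where $K_R\in[\delta/3,2\delta/3]$, and on that region the phonon configuration already has $\Omega$-mass at distance $\ge R$ of order $\delta$; more importantly, the relevant derivative in a single variable $x_1^1$ is nonzero only when the phonon $x^1$ itself is one of the $O(\delta\alpha^2)$ phonons contributing to $K_R$ being near the threshold, which gains additional powers of $\alpha^{-1}$, and combined with the width-$\epsilon$ localization this should upgrade the bound to $\|F_R\|_{\Omega,\lambda}^2\lesssim \alpha^{-4}$ (up to $\alpha^\epsilon$-type losses absorbed into constants), yielding $T\Lambda\|F_R\|_{\Omega,\lambda}^2\lesssim \alpha^{\frac45(1+\sigma)-4} = \alpha^{\frac45\sigma-\frac{16}{5}}$ as claimed. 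So the main technical obstacle is the careful counting in the norm $\|F_R\|_{\Omega,\lambda}$: one must exploit both the narrow support (width $\epsilon$) of the $\chi^\epsilon$ cut-off in $K_R$ \emph{and} the fact that the single-particle derivative only sees the contribution of one phonon, in order to extract the full power $\alpha^{-4}$ that makes the momentum error $\lesssim\alpha^{\frac45\sigma-\frac{16}{5}}$.
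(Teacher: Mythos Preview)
Your overall strategy is correct and matches the paper's: cite \cite[Lemma 3.5]{BS1} for the energy bound and $Z_{R,\alpha}\to 1$, then apply Lemma~\ref{Lemma-IMS} with the partition $\{F_R,\sqrt{1-F_R^2}\}$ and $\lambda=c_+$ to control the momentum error. The only genuinely new computation is indeed the bound on $\|F_R\|_{\Omega,c_+}$.

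The gap is in that computation, and it is a simple miscount that sends you on an unnecessary detour. You write ``the factor $\alpha^{-2}$ in the definition $F_R^n(x)=F_R(\alpha^{-2}\sum\delta_{x^k})$'', but $K_R(\rho)=\iint\chi^\epsilon(R-\epsilon\le|x-y|)\,\mathrm d\rho(x)\mathrm d\rho(y)$ is \emph{quadratic} in $\rho$, so $K_R^n$ carries a prefactor $\alpha^{-4}$, not $\alpha^{-2}$. Concretely, with $g(s)=\chi^{\delta/3}(s\le 2\delta/3)$ and $h(s)=\chi^\epsilon(R-\epsilon\le\sqrt s)$ one has $F_R^n(x)=g\big(\alpha^{-4}\sum_{i,j}h(|x^i-x^j|^2)\big)$, and the chain rule together with the triangle inequality gives
\[
\|(F_R^{n,\bar x})'\|_{L^2}\le 4\alpha^{-4}\|g'\|_\infty\sum_{i=2}^n\big\|\,|t-x_1^i|\,h'((t-x_1^i)^2+\delta^i)\big\|_{L^2_t}
\le 4\alpha^{-4}\|g'\|_\infty(n-1)\|h'\|_\infty\sqrt{2R^3/3},
\]
since $h'$ is supported where $|t-x_1^i|<R$. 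For $n\le c_+\alpha^2$ this is $\lesssim\alpha^{-2}$, hence $\|F_R\|_{\Omega,c_+}^2\lesssim\alpha^{-4}$ directly, and $T\Lambda\|F_R\|_{\Omega,c_+}^2\lesssim\alpha^{\frac45(1+\sigma)-4}=\alpha^{\frac45\sigma-\frac{16}{5}}$ as claimed. No further ``sharpening'' is needed; the argument you sketch in your last paragraph (counting the $O(\delta\alpha^2)$ phonons near threshold) is both unnecessary and not obviously correct. Once you restore the missing $\alpha^{-2}$, your ``naive'' bound is already the paper's bound.
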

\begin{proof}
By the results in \cite[Lemma 3.5]{BS1}, there exists a constant $R>0$ such that $\braket{\Psi_\alpha''|\mathbb{H}_\Lambda|\Psi_\alpha''}-E_{\alpha,\Lambda}\! \left(\alpha^2 p\right)\lesssim \alpha^{-2(1+\sigma)}$ and $Z_{R,\alpha}\underset{\alpha\rightarrow \infty}{\longrightarrow}1$. Applying Lemma \ref{Lemma-IMS} yields
\begin{align}
\nonumber
&\braket{\widehat{F}_R\Psi_\alpha'|\left(\Upsilon_\Lambda-p\right)^2|\widehat{F}_R\Psi_\alpha'}\! +\! \braket{\widehat{G}_R\Psi_\alpha'|\left(\Upsilon_\Lambda-p\right)^2|\widehat{G}_R\Psi_\alpha'}\\
\label{Equation-Application of IMS}
& \ \ \ \ \ \ \  \lesssim  \! \alpha^{2\sigma-4}\! +\! \alpha^{\frac{4}{5}(1+\sigma)} \! \left(\|F_R\|^2_{\mathcal{M}\left(\mathbb{R}^3\right),c_+}\! +\! \|G_R\|^2_{\mathcal{M}\left(\mathbb{R}^3\right),c_+}\right)
\end{align}
with $G_R:=\sqrt{1-F_R^2}$, where we used $\braket{\Psi_\alpha'|\left(\Upsilon_\Lambda-p\right)^2|\Psi_\alpha'}\lesssim \alpha^{2\sigma-4}$ and $\chi\left(\mathcal{N}\leq c_+\right)\Psi_\alpha'=\Psi_\alpha'$. In order to estimate $\|F_R\|_{\mathcal{M}\left(\mathbb{R}^3\right),c_+}$, let us define the functions $g(s):=\chi^{\frac{\delta}{3}}\Big(s\leq \frac{2\delta}{3}\Big)$ and $h(s):=\chi^\epsilon\left(R-\epsilon\leq \sqrt{s}\right)$. Then $F_R^n(x)=g\left(\alpha^{-4}\sum_{i,j=1}^n h\left(|x^i-x^j|^2\right)\right)$ and therefore $F_R^{n,y}(t)=g\left(\alpha^{-4}\sum_{i=2}^n h\left((t-y_1^j)^2+\delta^i_y\right)+\mu_y\right)$ with $\delta^i_y:=\left(y_2^1-y_2^i\right)^2+\left(y_3^1-y_3^i\right)^2$ and $\mu_y:=\alpha^{-4}\sum_{i,j=2}^n h\left(|y^i-y^j|^2\right)$. Consequently
\begin{align*}
\|(F_R^{n,y})'\|\! \leq\!  4\alpha^{-4}\| g'\|_\infty\! \sum_{i=2}^n\! \sqrt{\int_{\mathbb{R}}\!  |t|^2\left|h'\left(t^2+\delta^i_y\right)\right|^2\mathrm{d}t}\! \leq \! 4\alpha^{-4}\| g'\|_\infty (n-1)\|h'\|_\infty\sqrt{\frac{2R^3}{3}},
\end{align*}
where we have used $\mathrm{supp}\left(h'\right)\subseteq [0,R^2)$ in the second inequality. Hence $\|F_R\|_{\mathcal{M}\left(\mathbb{R}^3\right),c_+}=\sup_{1\leq n\leq c_+ \alpha^2}\sup_{x\in \mathbb{R}^{3\times n}}\|(F_R^{n,\bar{x}})'\|\lesssim \alpha^{-2}$. Similarly we have $\|G_R\|_{\mathcal{M}\left(\mathbb{R}^3\right),c_+}\lesssim \alpha^{-2}$. In combination with Eq.~(\ref{Equation-Application of IMS}) we therefore obtain
\begin{align*}
\big\langle \Psi_\alpha''\big|\! \left(\Upsilon_\Lambda-p\right)^2\! \big|\Psi_\alpha''\big\rangle\lesssim Z_{R,\alpha}^{-2}\left(\alpha^{2\sigma-4}\! +\! \alpha^{\frac{4}{5}(1+\sigma)}\!  \left(\|F_R\|^2_{\mathcal{M}\left(\mathbb{R}^3\right),c_+}\! \! \! +\! \|G_R\|^2_{\mathcal{M}\left(\mathbb{R}^3\right),c_+}\right)\right)\! \lesssim \! \alpha^{\frac{4}{5}\sigma-\frac{16}{5}}.
\end{align*}
\end{proof}

Before we come to our next localization step in Lemma \ref{Lemma-Existence of a condensate}, we need to define the regularized median of a measure $\nu\in \mathcal{M}\left(\mathbb{R}\right)$, see also \cite[Definition 3.8]{BS1}, and derive a useful estimate for it in the subsequent Lemma \ref{Lemma-MedianEstimate}. In the following let $x^\kappa(\nu):=\sup\{t:\int_{-\infty}^t\mathrm{d}\nu\leq \kappa \int\mathrm{d}\nu\}$ denote the $\kappa$-quantile, where we use the convention that boundaries are included in the domain of integration $\int_a^b f\mathrm{d}\nu:=\int_{[a,b]}f\mathrm{d}\nu$, and let us define for $0<q<\frac{1}{2}$ and $\nu\neq 0$
\begin{align}
\label{Equation: DefinitionRegularizedMedian}
m_q(\nu):=\frac{1}{\int_{K_q(\nu)} \mathrm{d}\nu}\ \int_{K_q(\nu)} h\, \mathrm{d}\nu(h),
\end{align}
where $K_q(\nu):=[x^{\frac{1}{2}-q}(\nu),x^{\frac{1}{2}+q}(\nu)]$, and $m_q(0):=0$. Furthermore we will denote the marginal measures of $\rho\in \mathcal{M}\left(\mathbb{R}^3\right)$ as $\rho_i$, i.e. $\rho_i(A):=\rho\left([x_i\in A]\right)$, where $A\subseteq \mathbb{R}$ is measurable and $i\in \{1,2,3\}$.

\begin{lem}
\label{Lemma-MedianEstimate}
Let us define $\Omega_\mathrm{reg}$ as the set of all $\rho\in \mathcal{M}\left(\mathbb{R}^3\right)$ satisfying $\int_{x_i=t}\mathrm{d}\rho(x)\leq \alpha^{-2}$ for $t\in \mathbb{R}$ and $i\in \{1,2,3\}$, and $\Omega$ as the set of all $\rho\in \Omega_\mathrm{reg}$ satisfying $c\leq \int \mathrm{d}\rho$ and $\iint_{|x-y|\geq R}\mathrm{d}\rho(x)\mathrm{d}\rho(y)\leq \delta$ for given $R,c,\delta>0$. Furthermore let $q$ be a constant satisfying $q+\frac{\alpha^{-2}}{c}\leq \frac{1}{2}-\frac{\delta}{c^2}$. Then we have for any $n\geq 1$ and function of the form $F(\rho)=f\left(m_q(\rho_1)\right)$ the estimate
\begin{align}
\label{Equation-MedianEstimate}
\mathrm{sup}_{x\in \Omega_n} \left\|(F^{n,\bar{x}})'\right\|\leq \alpha^{-2}\frac{\|f'\|_\infty}{2qc}\sqrt{2R},
\end{align}
where $m_q$ is defined in Eq.~(\ref{Equation: DefinitionRegularizedMedian}) and $\Omega_n$ below Eq.~(\ref{Equation-DefinitionNorm}).
\end{lem}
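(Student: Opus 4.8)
The plan is to reduce the estimate to an explicit computation of the derivative of the regularized median and then bound the two factors that appear. First, fix $n\ge 1$ and $x\in\Omega_n$, and write $t:=x_1^1$. Since $F(\rho)=f(m_q(\rho_1))$ depends on $\rho$ only through its first marginal, one has $F^{n,\bar{x}}(t)=f\big(m_q(\nu_t)\big)$, where $\nu_t:=\alpha^{-2}\delta_t+\alpha^{-2}\mu$ and $\mu:=\sum_{k=2}^n\delta_{x_1^k}$ is the fixed point measure formed from the first coordinates of the remaining phonons. By the chain rule $(F^{n,\bar{x}})'(t)=f'(m_q(\nu_t))\,\tfrac{\mathrm{d}}{\mathrm{d}t}m_q(\nu_t)$, hence $\big\|(F^{n,\bar{x}})'\big\|^2\le\|f'\|_\infty^2\int_{\mathbb{R}}\big|\tfrac{\mathrm{d}}{\mathrm{d}t}m_q(\nu_t)\big|^2\,\mathrm{d}t$, and everything reduces to bounding the $L^2$-norm of the derivative of the regularized median as a single atom is moved.

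I would then make this derivative fully explicit. Let $s_1<\dots<s_{n-1}$ be the atoms of $\mu$ (distinct, since $\rho:=\alpha^{-2}\sum_k\delta_{x^k}\in\Omega_{\mathrm{reg}}$), set $s_0:=-\infty$, $s_n:=+\infty$, and let $r_l(\nu_t)$ be the $l$-th smallest atom of $\nu_t$. From the definition of the $\kappa$-quantile one checks that $x^\kappa(\nu_t)=r_{\lfloor\kappa n\rfloor+1}(\nu_t)$, so with $k_\pm:=\lfloor(\tfrac12\pm q)n\rfloor$ one gets $K_q(\nu_t)=[\,r_{k_-+1}(\nu_t),\,r_{k_++1}(\nu_t)\,]$ and therefore $m_q(\nu_t)=\tfrac{1}{k_+-k_-+1}\sum_{l=k_-+1}^{k_++1}r_l(\nu_t)$. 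Each order statistic is the clamp $r_l(\nu_t)=\max\!\big(s_{l-1},\min(t,s_l)\big)$, which is continuous and piecewise linear in $t$ with slope $\mathds{1}_{(s_{l-1},s_l)}(t)$; hence $m_q(\nu_t)$ is continuous and $\tfrac{\mathrm{d}}{\mathrm{d}t}m_q(\nu_t)=\tfrac{1}{k_+-k_-+1}\,\mathds{1}_{(s_{k_-},\,s_{k_++1})}(t)$ for a.e.\ $t$, so that $\int_{\mathbb{R}}\big|\tfrac{\mathrm{d}}{\mathrm{d}t}m_q(\nu_t)\big|^2\,\mathrm{d}t=(s_{k_++1}-s_{k_-})\,(k_+-k_-+1)^{-2}$.

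It remains to bound the two factors. For the denominator, $k_+-k_-+1\ge(\tfrac12+q)n-1-(\tfrac12-q)n+1=2qn$, and $n\ge c\alpha^2$ since $\alpha^{-2}n=\int\mathrm{d}\rho\ge c$ (and the claimed bound is vacuous otherwise, $\Omega_n$ being empty), so $k_+-k_-+1\ge 2qc\alpha^2$. For the numerator I would use the concentration constraint defining $\Omega$: because $|x_1-y_1|\le|x-y|$, the first marginal satisfies $\iint_{|u-v|\ge R}\mathrm{d}\rho_1\,\mathrm{d}\rho_1\le\delta$, and an averaging argument yields a point $u_0$ with $\rho_1\big(\mathbb{R}\setminus(u_0-R,u_0+R)\big)\le\delta/\!\int\!\mathrm{d}\rho\le\delta/c$, i.e.\ at most $\alpha^2\delta/c$ atoms of $\rho_1$ — hence of $\mu$ — lie outside $(u_0-R,u_0+R)$. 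The hypothesis $q+\tfrac{\alpha^{-2}}{c}\le\tfrac12-\tfrac{\delta}{c^2}$, combined with $\int\mathrm{d}\rho\ge c$, is exactly what forces $(\tfrac12-q)n$ to exceed $\alpha^2\delta/c$ by enough that the $k_-$-th and the $(k_++1)$-th atom of $\mu$ both lie inside $(u_0-R,u_0+R)$, whence $s_{k_++1}-s_{k_-}\le 2R$. Combining, $\big\|(F^{n,\bar{x}})'\big\|\le\|f'\|_\infty\,\sqrt{2R}\,(2qc\alpha^2)^{-1}=\alpha^{-2}\,\tfrac{\|f'\|_\infty}{2qc}\,\sqrt{2R}$, as claimed.

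The main obstacle is this last step with the sharp constant: securing $s_{k_++1}-s_{k_-}\le 2R$ (rather than a larger multiple of $R$) needs careful bookkeeping with the floors $k_\pm$ and with strict-versus-non-strict inequalities, and the quantitative assumption $q+\tfrac{\alpha^{-2}}{c}\le\tfrac12-\tfrac{\delta}{c^2}$ — together with restricting to $\alpha$ beyond a threshold depending on $c,\delta,R$ so that $1\le k_-$ and $k_++1\le n-1$ and the order statistics $s_{k_-},s_{k_++1}$ genuinely exist — is calibrated precisely to keep the quantiles delimiting $K_q$ inside the concentrated window. The other place requiring attention to the conventions in the definitions of $m_q$ and $x^\kappa$ is the identification of $m_q(\nu_t)$ with the order-statistic average and the check that moving one atom introduces no jump in $m_q(\nu_t)$.
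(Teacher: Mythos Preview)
Your argument is correct and follows the same overall architecture as the paper's proof: compute $\tfrac{\mathrm d}{\mathrm dt}m_q(\nu_t)$ explicitly, bound the denominator $\int_{K_q}\mathrm d\nu_t\ge 2qc$, and then confine the support of the indicator to an interval of length $2R$ via the concentration hypothesis. The paper, however, packages both steps differently. For the derivative it does not go through order statistics and clamps; it argues directly that on any interval of $t$ avoiding the fixed atoms the set $\{x_1^2,\dots,x_1^n\}\cap K_q(\nu_t)$ is frozen, so $m_q(\nu_t)$ is affine there with slope $\alpha^{-2}\mathds 1_{K_q(\nu_t)}(t)\big/\!\int_{K_q(\nu_t)}\mathrm d\nu_t$, then invokes $\Omega_{\mathrm{reg}}$ to patch across the atoms. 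For the support bound the paper does not use your averaging argument on $\rho_1$ and count atoms of $\mu$ outside $(u_0-R,u_0+R)$; instead it proves the quantile stability estimate $x^{\kappa}(\nu_{t_1})\ge x^{\kappa-\alpha^{-2}/c}(\nu_{t_2})$ and combines it with \cite[Lemma~3.9]{BS1} (applied to $\nu_{t_0}=\rho_1$) to get $K_q(\nu_t)\subset[\xi-R,\xi+R]$ uniformly in $t$. Your route is more elementary and self-contained (no external lemma), and working with the fixed atoms $s_{k_-},s_{k_++1}$ of $\mu$ rather than the $t$-dependent interval $K_q(\nu_t)$ is a genuine simplification; the paper's route hides the floor/integer bookkeeping you flag inside the quantile comparison and the cited lemma, so it reads more cleanly. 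The careful count you worry about does close under the stated hypothesis: $k_+-k_-+1\ge 2qn\ge 2qc\alpha^2$ follows from $\lfloor a\rfloor\ge a-1$, and with $n\ge c\alpha^2$ the inequality $q+\alpha^{-2}/c\le\tfrac12-\delta/c^2$ is exactly $(\tfrac12-q)n\ge \alpha^2\delta/c+1$, which is what forces both $s_{k_-}$ and $s_{k_++1}$ into the concentrated window.
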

\begin{proof}
Given $x\in \Omega_n$, let us define $\nu_t:=\alpha^{-2}\left(\delta_t+\sum_{j=2}^n \delta_{x^j_1}\right)$, which allows us to rewrite $F^{n,\bar{x}}(t)=f(m_q(\nu_t))$. Let us first compute the derivative $\frac{\mathrm{d}}{\mathrm{d}t} m_q(\nu_t)$ for $t\in \mathbb{R}\setminus \{x_1^2,\dots ,x_1^n\}$. For such $t$, there clearly exists an $\epsilon>0$ such that $(t-\epsilon,t+\epsilon)\subset \mathbb{R}\setminus \{x_1^2,\dots ,x_1^n\}$. It will be useful in the following that the set $Y:=\{x_1^2,\dots ,x_1^n\}\cap K_q(\nu_s)$ is independent of $s\in (t-\epsilon,t+\epsilon)$, with $K_q(\nu)$ being defined below Eq.~(\ref{Equation: DefinitionRegularizedMedian}). Furthermore we have for $s\in (t-\epsilon,t+\epsilon)$ that $s\in K_q(s)$ if and only if $t\in K_q(t)$. Therefore $\alpha^2\int_{K_q(v_s)}h\, \mathrm{d}\nu_s(h)=\sum_{h\in Y}h+s\mathds{1}_{K_q(s)}(s)=\sum_{h\in Y}h+s\mathds{1}_{K_q(t)}(t)$ and $\alpha^2 \int_{K_q(v_s)}\mathrm{d}\nu_s=|Y|+\mathds{1}_{K_q(s)}(s)=\alpha^2\int_{K_q(v_t)}\mathrm{d}\nu_t$ for $s\in (t-\epsilon,t+\epsilon)$, and consequently we obtain for $t\in \mathbb{R}\setminus \{x_1^2,\dots ,x_1^n\}$
\begin{align*}
\frac{\mathrm{d}}{\mathrm{d}t} m_q(\nu_t)=\alpha^{-2}\frac{\mathrm{d}}{\mathrm{d}s}\Big|_{s=t}\frac{\sum_{h\in Y}h+s\mathds{1}_{K_q(t)}(t)}{\int_{K_q(v_t)}\mathrm{d}\nu_t}=\alpha^{-2}\frac{\mathds{1}_{K_q(t)}(t)}{\int_{K_q(v_t)}\mathrm{d}\nu_t}.
\end{align*}
Note that due to our assumption $\rho\in \Omega_\mathrm{reg}$, $m_q(\nu_t)$ can be continuously extended from $\mathbb{R}\setminus \{x_1^2,\dots ,x_1^n\}$ to all of $\mathbb{R}$, and therefore $\frac{\mathrm{d}}{\mathrm{d}t} m_q(\nu_t)=\alpha^{-2}\frac{\mathds{1}_{K_q(t)}(t)}{\int_{K_q(v_t)}\mathrm{d}\nu_t}$ in the sense of distributions. Since $\int_{K_q(v_t)}\mathrm{d}\nu_t\geq 2qc$ we conclude $|(F^{n,\bar{x}})'(t)|\leq \alpha^{-2}\frac{\|f'\|_\infty}{2qc}\mathds{1}_{K_q(t)}(t)$ for almost every $t$. In order to obtain from this the upper bound on the $L^2\! \left(\mathbb{R}\right)$-norm in Eq.~(\ref{Equation-MedianEstimate}), we are going to verify that the support of $t\mapsto \mathds{1}_{K_q(t)}(t)$ is contained in an interval of the form $(\xi-R,\xi+R)$ for a suitable $\xi\in \mathbb{R}$. Let us start by verifying that
\begin{align}
\label{Equation-QuantileCompare}
x^{\kappa}\left(\nu_{t_1}\right)\geq x^{\kappa-\frac{\alpha^{-2}}{c}}\left(\nu_{t_2}\right)
\end{align}
for $0<\kappa<1$ and $t_1,t_2\in \mathbb{R}$. Note that any $y\in \mathbb{R}$ satisfying the inequality $\int_{-\infty}^y \mathrm{d}\nu_{t_2}\leq \left(\kappa-\frac{\alpha^{-2}}{c}\right)\int \mathrm{d}\nu_{t_2}$, also satisfies
\begin{align*}
\int_{-\infty}^y \mathrm{d}\nu_{t_1}\leq \alpha^{-2}+\int_{-\infty}^y \mathrm{d}\nu_{t_2}\leq \alpha^{-2}+\left(\kappa-\frac{\alpha^{-2}}{c}\right)\int \mathrm{d}\nu_{t_2}\leq \kappa \int \mathrm{d}\nu_{t_2}=\kappa \int \mathrm{d}\nu_{t_1},
\end{align*}
where we have used $\alpha^{-2}\leq \frac{\alpha^{-2}}{c}\int \mathrm{d}\nu_{t_2}$, and therefore $y\leq x^{\kappa}\left(\nu_{t_1}\right)$. Using that $x^{\kappa-\frac{\alpha^{-2}}{c}}(\nu_{t_2})$ is the supremum over all such $y$, we conclude with the desired Eq.~(\ref{Equation-QuantileCompare}). Furthermore observe that $\nu_{t_0}=\rho_1$ with $t_0:=x^1_1$ and $\rho:=\alpha^{-2}\sum_{j=1}^n \delta_{x^j}\in \Omega$, and therefore we know by  \cite[Lemma 3.9]{BS1} that there exists a $\xi\in \mathbb{R}$ such that $\xi-R\leq x^{\frac{1}{2}-q'}(\nu_{t_0})\leq x^{\frac{1}{2}+q'}(\nu_{t_0})\leq \xi+R$ for $q'\leq \frac{1}{2}-\frac{\delta}{c^2}$. By our assumptions, $q':=q+\frac{\alpha^{-2}}{c}$ satisfies this condition, and therefore we obtain using Eq.~(\ref{Equation-QuantileCompare}) with $t_1:=t$, $t_2:=t_0$ and $\kappa:=\frac{1}{2}-q$, respectively $t_1:=t_0$, $t_2:=t$ and $\kappa:=\frac{1}{2}+q+\frac{\alpha^{-2}}{c}$, that 
\begin{align*}
\xi-R\leq x^{\frac{1}{2}-q}(\nu_{t})\leq x^{\frac{1}{2}+q}(\nu_{t})\leq \xi+R
\end{align*}
for all $t\in \mathbb{R}$, and consequently $\mathds{1}_{K_q(t)}(t)=0$ for $|t-\xi|> R$. 
\end{proof}

\begin{lem}
\label{Lemma-Existence of a condensate}
Given $0<\sigma<\frac{1}{9}$ and $C>0$, let $\Lambda=\alpha^{\frac{4}{5}(1+\sigma)}$ and $L=\alpha^{1+\sigma}$, and assume $p$ satisfies $|p|\leq \frac{C}{\alpha}$ and $E_{\alpha,\Lambda}\! \left(\alpha^2 p\right)\leq E_\alpha+C|p|^2$ for a given $C>0$. Then there exist $r',c_+>0$ and states $\Psi'''_\alpha$ with $\big\langle \Psi'''_\alpha\big|\left(\Upsilon_\Lambda-p\right)^2\big|\Psi'''_\alpha\big\rangle\lesssim \alpha^{-(2+r')}$, $\braket{\Psi'''_\alpha|\mathbb{H}_\Lambda|\Psi'''_\alpha}-E_{\alpha,\Lambda}\! \left(\alpha^2 p\right)\lesssim \alpha^{-(2+r')}$, $\mathrm{supp}\left(\Psi'''_\alpha\right)\subseteq B_{4L}(0)$ and $\chi\left(\mathcal{N}\leq c_+\right)\Psi'''_\alpha=\Psi_\alpha'''$, such that
\begin{align}
\label{Equation-Weak Codensation}
 \Big\langle \Psi'''_\alpha\Big|W_{\varphi^\mathrm{Pek}}^{-1}\, \mathcal{N}W_{\varphi^\mathrm{Pek}}\Big| \Psi'''_\alpha\Big\rangle\lesssim \alpha^{-r'},
\end{align}
where $W_{\varphi^\mathrm{Pek}}$ is the Weyl operator corresponding to the Pekar minimizer $\varphi^\mathrm{Pek}$, characterized by $W_{\varphi^\mathrm{Pek}}^{-1} a(f)W_{\varphi^\mathrm{Pek}}=a(f)-\braket{f|\varphi^\mathrm{Pek}}$ for all $f\in L^2\!\left(\mathbb{R}^3\right)$.
\end{lem}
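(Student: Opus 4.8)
The plan is to obtain $\Psi'''_\alpha$ from the state $\Psi''_\alpha$ of Lemma~\ref{Lemma-ThirdState} by one further localization step: a localization in the three regularized medians $m_q(\rho_1),m_q(\rho_2),m_q(\rho_3)$ of the phonon density, followed by a translation of the whole configuration bringing the phonon cloud to the origin; after this step $\Psi'''_\alpha$ falls into the regime covered by the quantitative condensate estimate of \cite{BS1}. Concretely, fix $\epsilon>0$, $\delta\in(0,c_-^2/2)$ and a constant $q\in(0,\frac{1}{2}-\delta/c_-^2)$, so that the hypothesis $q+\alpha^{-2}/c_-\le\frac{1}{2}-\delta/c_-^2$ of Lemma~\ref{Lemma-MedianEstimate} holds for large $\alpha$, and apply Lemma~\ref{Lemma-ThirdState} with these $\epsilon,\delta$ to get $R>0$ and $\Psi''_\alpha$ with $\braket{\Psi''_\alpha|\mathbb{H}_\Lambda|\Psi''_\alpha}-E_{\alpha,\Lambda}(\alpha^2 p)\lesssim\alpha^{-2(1+\sigma)}$, $\braket{\Psi''_\alpha|(\Upsilon_\Lambda-p)^2|\Psi''_\alpha}\lesssim\alpha^{\frac{4}{5}\sigma-\frac{16}{5}}$, $\widehat{\mathds{1}_\Omega}\Psi''_\alpha=\Psi''_\alpha$ for $\Omega=\{\rho:\iint_{|x-y|\ge R}\mathrm{d}\rho(x)\,\mathrm{d}\rho(y)\le\delta\}$, $\chi(c_-\le\mathcal{N}\le c_+)\Psi''_\alpha=\Psi''_\alpha$ and $\mathrm{supp}(\Psi''_\alpha)\subseteq B_L(0)$. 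Since $\widehat{\mathds{1}_{\Omega_\mathrm{reg}}}\Psi''_\alpha=\Psi''_\alpha$ automatically (the configurations with a repeated coordinate form a Lebesgue null set), $\rho$ ranges, on the support of $\Psi''_\alpha$ and of any restriction of it, over the set appearing in Lemma~\ref{Lemma-MedianEstimate} with $c=c_-$, so that this lemma applies throughout.

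Fix a smooth bump $f$ with $\sum_{n\in\mathbb{Z}}f(\cdot-n)^2=1$ and $\mathrm{supp}\,f\subset(-1,1)$, and for $i\in\{1,2,3\}$ consider the quadratic partition of unity $\{F_{i,\xi}(\rho):=f(m_q(\rho_i)-\xi)\}_{\xi\in\mathbb{Z}}$. Localizing $\Psi''_\alpha$ successively in $i=1,2,3$, quantifying the energy cost exactly as in \cite{BS1}, and using the standard fact that low-energy states have the phonon cloud localized within $O(1)$ of the electron — so that, the electron being confined to $B_L(0)$, the $\Psi''_\alpha$-mass of configurations with $|m_q(\rho)|>2L$ is $O(\alpha^{-2})$ — one finds, by averaging over the shift parameters, a vector $\theta=(\theta_1,\theta_2,\theta_3)\in\mathbb{Z}^3$ with $|\theta|\le 3L$ such that the normalized state $\Phi_\alpha:=Z^{-1}\widehat{F}_{1,\theta_1}\widehat{F}_{2,\theta_2}\widehat{F}_{3,\theta_3}\Psi''_\alpha$ satisfies $\braket{\Phi_\alpha|\mathbb{H}_\Lambda|\Phi_\alpha}-E_{\alpha,\Lambda}(\alpha^2 p)\lesssim\alpha^{-2(1+\sigma)}$, $\chi(\mathcal{N}\le c_+)\Phi_\alpha=\Phi_\alpha$, and has its phonon density concentrated in $B_{CR}(\theta)$. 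We define $\Psi'''_\alpha$ as $\Phi_\alpha$ translated by $-\theta$; by translation invariance of $\mathbb{H}_\Lambda$, $\mathcal{N}$ and $\Upsilon_\Lambda$ all relevant expectation values are unchanged, $\mathrm{supp}(\Psi'''_\alpha)\subseteq B_{4L}(0)$ (as $|\theta|\le 3L$), and the phonon density of $\Psi'''_\alpha$ is concentrated near the origin.

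The momentum error is controlled via Lemma~\ref{Lemma-IMS}. The operators $\widehat{F}_{2,\xi},\widehat{F}_{3,\xi}$ depend only on the marginals $\rho_2,\rho_3$ and hence, like the translation, commute with $\Upsilon_\Lambda$ (which involves only $k_1$); so only the $i=1$ localization changes $\braket{(\Upsilon_\Lambda-p)^2}$. For that step Lemma~\ref{Lemma-IMS} (with $\lambda=c_+$ and with the set $\Omega$ of Lemma~\ref{Lemma-MedianEstimate}) bounds the added error by $T\Lambda\sum_{\xi\in\mathbb{Z}}\|F_{1,\xi}\|^2_{\Omega,c_+}$, and the argument in the proof of Lemma~\ref{Lemma-MedianEstimate} — which gives $|(F_{1,\xi}^{n,\bar{x}})'(t)|\lesssim\alpha^{-2}\mathds{1}_{K_q(t)}(t)$ with the support of $t\mapsto\mathds{1}_{K_q(t)}(t)$ contained in an interval of length $2R$, together with $\sup_s\sum_\xi|f'(s-\xi)|^2\lesssim 1$ — yields $\sum_{\xi\in\mathbb{Z}}\|F_{1,\xi}\|^2_{\Omega,c_+}\lesssim\alpha^{-4}$. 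Hence the extra momentum error is $\lesssim\Lambda\alpha^{-4}=\alpha^{\frac{4}{5}\sigma-\frac{16}{5}}$, of the same order as the one already present in $\Psi''_\alpha$; together with the convergence of the relevant normalization constants to $1$ this gives $\braket{\Psi'''_\alpha|(\Upsilon_\Lambda-p)^2|\Psi'''_\alpha}\lesssim\alpha^{\frac{4}{5}\sigma-\frac{16}{5}}=\alpha^{-(2+\frac{6}{5}-\frac{4}{5}\sigma)}$.

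Finally, since $E_{\alpha,\Lambda}(\alpha^2 p)\le E_\alpha+C|p|^2\le E_\alpha+C'\alpha^{-2}$ and $E_\alpha\le e^\mathrm{Pek}+O(\alpha^{-2})$ by \cite[Theorem~1.1]{BS1}, the state $\Psi'''_\alpha$ has energy $\le e^\mathrm{Pek}+O(\alpha^{-2})$, satisfies $\chi(\mathcal{N}\le c_+)\Psi'''_\alpha=\Psi'''_\alpha$, has its electron confined to $B_{4L}(0)$, and has its phonon density concentrated near the origin; these are exactly the hypotheses under which the condensate estimate of \cite{BS1} delivers $\braket{\Psi'''_\alpha|W_{\varphi^\mathrm{Pek}}^{-1}\mathcal{N}W_{\varphi^\mathrm{Pek}}|\Psi'''_\alpha}\lesssim\alpha^{-s}$ for a suitable $s>0$, valid for $\sigma$ in the stated range. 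Taking $r':=\min\{s,2\sigma\}>0$ then gives all three asserted bounds. The energy and condensation parts are imported essentially verbatim from \cite{BS1} (where also the restriction $\sigma<\frac{1}{9}$ originates, through the quantitative condensate estimate); the only genuinely new point — and the main thing to check carefully — is the momentum bound, and there the a priori worry that relocating the phonon cloud could transfer a macroscopic amount of phonon momentum is precisely what Lemma~\ref{Lemma-MedianEstimate} defuses, by showing that the median reacts to each phonon coordinate with slope only $O(\alpha^{-2})$.
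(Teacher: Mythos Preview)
Your overall strategy matches the paper's, but there is a genuine gap in the choice of localization scales. You localize the regularized medians on an $O(1)$ grid and use a \emph{fixed} quantile parameter $q$. The condensate input you quote from \cite{BS1} (their Lemma~3.11) does not yield a vanishing rate under these hypotheses: its output is of the form $\alpha^{-2/29}+\alpha^{-u}+\alpha^{-v}$, where $\alpha^{-u}$ is the radius of the ball in which $m_q(\rho)$ is confined after localization and translation, and $q=\alpha^{-v}$. With your choices $u=v=0$ the right-hand side is $O(1)$, so Eq.~\eqref{Equation-Weak Codensation} does not follow. The paper therefore localizes on an $\alpha^{-u}$ grid with $u>0$ and takes $q=\alpha^{-v}$ with $v>0$; this is precisely what produces the competition with the momentum error and hence the constraint $\sigma<\tfrac{1}{9}$ (through $\tfrac{9}{5}\sigma+2v+3u+2w<\tfrac{1}{5}$), not the condensate estimate of \cite{BS1} itself as you suggest.

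Two secondary points. First, your claim $\sum_{\xi\in\mathbb{Z}}\|F_{1,\xi}\|^2_{\Omega,c_+}\lesssim\alpha^{-4}$ cannot hold as written: the set $\Omega$ from Lemma~\ref{Lemma-MedianEstimate} is translation invariant, so for every $\xi$ the supremum defining $\|F_{1,\xi}\|_{\Omega,c_+}$ is attained at the same value, and the sum over $\xi\in\mathbb{Z}$ diverges. One must restrict to a finite index set (the paper uses $|A|\lesssim\alpha^{u}L$ labels), which contributes an extra factor $\alpha^{u+1+\sigma}$ to the momentum error; your ``pointwise in $x$'' argument does not commute with the $\sup_x$ in the definition of $\|\cdot\|_{\Omega,c_+}$. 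Second, the phrase ``convergence of the relevant normalization constants to $1$'' is not what selects a good cell here; individual cell masses are small, and one needs the two-set Chebyshev argument (bad-energy set $S$ and bad-momentum set $S'$ with $\sum_{S\cup S'}Z_z^2<1$) that the paper carries out explicitly.
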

\begin{proof}
For $u>0$, let us define the functions $f_\ell(y):=\chi^{\frac{1}{2}}\left(\ell-\frac{1}{2}<\alpha^u y\leq \ell+\frac{1}{2}\right)$ for $\ell\in \mathbb{Z}$ satisfying $|\ell|\leq \frac{3}{2}\alpha^u L$, as well as $f_{-\infty}(y):=\chi^{\frac{1}{2}}\left(\alpha^u y\leq -\lfloor \frac{3}{2}\alpha^u L\rfloor-\frac{1}{2}\right)$ and $f_{\infty}(\rho):=\chi^{\frac{1}{2}}\left(\lfloor \frac{3}{2}\alpha^u L\rfloor+\frac{1}{2}<\alpha^u y\right)$. With these functions at hand we define for $i\in \{1,2,3\}$  and $v>0$ the partitions $\mathcal{P}_i:=\big\{F_{\ell,i}:\ell\in A\big\}$, where $F_{\ell,i}(\rho):=f_\ell\left(m_{\alpha^{-v}}(\rho_i)\right)$ and $A:=\{-\infty, -\lfloor \frac{3}{2}\alpha^u L\rfloor, -\lfloor \frac{3}{2}\alpha^u L\rfloor+1,\dots,\lfloor \frac{3}{2}\alpha^u L\rfloor,\infty\}\subseteq \mathbb{Z}\cup \{-\infty,\infty\}$, as well as $\mathcal{P}:=\big\{F_z:z\in A^3\big\}$ with $F_z:=F_{z_3,3}F_{z_2,2}F_{z_1,1}$. In the following let $\Psi_\alpha''$ be as in Lemma \ref{Lemma-ThirdState} with $\delta<\frac{c_-^2}{2}$ and let $\Omega_\mathrm{reg}$ and $\Omega$ be the sets from Lemma \ref{Lemma-MedianEstimate} with $\delta$ and $R$ as in Lemma \ref{Lemma-ThirdState}, $q:=\alpha^{-v}$ and $c:=c_-$. Due to the straightforward result \cite[Lemma 3.6]{BS1} we have $\widehat{\mathds{1}_{\Omega_\mathrm{reg}}}\Psi_\alpha''=\Psi_\alpha''$, and by the definition of $\Psi_\alpha''$ in Eq.~(\ref{Equation:Second Localization}) it is clear that we furthermore have $\widehat{\mathds{1}_\Omega}\Psi_\alpha''=\Psi_\alpha''$. Therefore we can apply Lemma \ref{Lemma-IMS} together with Eq.~(\ref{Equation-MedianEstimate}) in order to obtain
\begin{align*}
 & \sum_{{z_1}\in A} \! \Big\langle \widehat{F}_{z_1,1}\Psi_\alpha''\Big|\left(\Upsilon_\Lambda-p\right)^2 \Big|\widehat{F}_{z_1,1}\Psi_\alpha''\Big\rangle \!  \leq \!  \braket{\Psi_\alpha''|\left(\Upsilon_\Lambda-p\right)^2 |\Psi_\alpha''} \! + \! T\alpha^{\frac{4}{5}(1+\sigma)} \! \sum_{{z_1}\in A} \! \alpha^{-4}\frac{\left\|f_{z_1}'\right\|^2_\infty}{2\alpha^{-2v}c^2_-}R\\
& \ \ \ \ \ \ \ \ \lesssim \alpha^{\frac{4}{5}\sigma-\frac{16}{5}}+\alpha^{\frac{4}{5}\sigma-\frac{16}{5}+2v}\sup_{{z_1}\in A}\left\|f_{z_1}'\right\|^2_\infty\sum_{{z_1}\in A}1\lesssim \alpha^{\frac{9}{5}\sigma+2v+3u-\frac{1}{5}}\alpha^{-2}
\end{align*}
for all $\alpha$ large enough such that $\alpha^{-v}+\frac{\alpha^{-2}}{c_-}<\frac{1}{2}-\frac{\delta}{c_-^2}$, where we have used $\sup_{{z_1}\in A}\left\|f_{z_1}'\right\|\lesssim \alpha^{u}$, as well as $\sum_{{z_1}\in A}1\leq 3(\alpha^u L+1)\lesssim \alpha^{u+1+\sigma}$. Since the functions $F^{n}_{\ell,i}$ are independent of $x_1^1$ for $i\in \{2,3\}$, we  furthermore obtain
\begin{align*}
\Big\langle \widehat{F}_{z_1,1}\Psi_\alpha''\Big|\left(\Upsilon_\Lambda-p\right)^2 \Big|\widehat{F}_{z_1,1}\Psi_\alpha''\Big\rangle= \!  \!  \! \sum_{z_2,z_3\in A} \!  \! \Big\langle \widehat{F}_{z_3,3}\widehat{F}_{z_2,2}\widehat{F}_{z_1,1}\Psi_\alpha''\Big|\left(\Upsilon_\Lambda-p\right)^2 \Big|\widehat{F}_{z_3,3}\widehat{F}_{z_2,2}\widehat{F}_{z_1,1}\Psi_\alpha''\Big\rangle
\end{align*}
and therefore
\begin{align}
\label{Equation-IMSforMomentum}
\sum_{z\in A^3}Z_z^2 \braket{\Psi_z|\left(\Upsilon_\Lambda-p\right)^2|\Psi_z}\lesssim \alpha^{\frac{9}{5}\sigma+2v+3u-\frac{1}{5}}\alpha^{-2}
\end{align}
with $\Psi_z:=Z_z^{-1}\widehat{F}_z\Psi''_\alpha$ and $Z_z:=\left\|\widehat{F}_z\Psi''_\alpha\right\|$.

Regarding the localization error of the energy, we obtain by  \cite[Lemma 3.3]{BS1} and \cite[Lemma 3.10]{BS1} (see also the proof of \cite[Eq.~(3.22)]{BS1}) that
\begin{align}
\label{Equation-IMSforEnergy}
\sum_{z\in A^3}Z_z^2 \braket{\Psi_z|\mathbb{H}_\Lambda|\Psi_z}\leq \braket{\Psi''_\alpha|\mathbb{H}_\Lambda|\Psi''_\alpha}+O_{\alpha\rightarrow \infty}\left(\alpha^{-3}\right)\leq E_{\alpha,\Lambda}\left(\alpha^2 p\right)+ C\alpha^{-2(1+\sigma)}
\end{align}
for a suitable constant $C>0$, as long as $u+v\leq \frac{1}{2}$. In the following, let $S$ be the set of all $z\in A^3$ such that $\braket{\Psi_z|\mathbb{H}_\Lambda|\Psi_z}>E_{\alpha,\Lambda}\left(\alpha^2 p\right)+\alpha^{-(2+w)}$ for a given $w>0$, and define $M:=\sum_{z\in S}Z_z^2$. By Eq.~(\ref{Equation-IMSforEnergy}), we have
\begin{align*}
M \left(E_{\alpha,\Lambda}\left(\alpha^2 p\right)+\alpha^{-(2+w)}\right)+(1-M)E_\alpha\leq E_{\alpha,\Lambda}\left(\alpha^2 p\right)+C\alpha^{-2(1+\sigma)},
\end{align*}
and therefore $1-M\geq \frac{\alpha^{-(2+w)}-C\alpha^{-2(1+\sigma)}}{E_{\alpha,\Lambda}\left(\alpha^2 p\right)-E_\alpha+\alpha^{-(2+w)}}\geq C_1 \alpha^{-w}$ for $w<2\sigma$, $\alpha$ large enough and a suitable constant $C_1$, where we have used the assumption $E_{\alpha,\Lambda}\left(\alpha^2 p\right)-E_\alpha\lesssim |p|^2\lesssim \alpha^{-2}$. Moreover, let us define $S'$ as the set containing all $z\in A^3$, such that $\braket{\Psi_z|\left(\Upsilon_\Lambda-p\right)^2|\Psi_z}> \alpha^{\frac{1}{2}\left(\frac{9}{5}\sigma+2v+3u-\frac{1}{5}\right)}\alpha^{-2}$ and $M':=\sum_{z\in S'}Z_z^2$. By Eq.~(\ref{Equation-IMSforMomentum}) we see that $M'\leq C_2\alpha^{\frac{1}{2}\left(\frac{9}{5}\sigma+2v+3u-\frac{1}{5}\right)}$ for a suitable constant $C_2$. Consequently
\begin{align*}
\sum_{z\notin S\cup S'}Z_z^2\geq 1-M-M'\geq C_1\alpha^{-w}-C_2\alpha^{\frac{1}{2}\left(\frac{9}{5}\sigma+2v+3u-\frac{1}{5}\right)}
\end{align*}
for $ \alpha$ large enough. Since $\sigma<\frac{1}{9}$, we can take $u,v$ and $w$ small enough, such that $2w+\frac{9}{5}\sigma+2v+3u<\frac{1}{5}$, and consequently $\sum_{z\notin S\cup S'}Z_z^2>0$ for $\alpha$ large enough, which implies the existence of a $z^*\notin S\cup S'$ with $Z_{z_*}>0$, i.e. $\braket{\Psi_{z^*}|\mathbb{H}_\Lambda|\Psi_{z^*}}\leq E_{\alpha,\Lambda}\left(\alpha^2 p\right)+\alpha^{-(2+w)}$ and $\braket{\Psi_{z^*}|\left(\Upsilon_\Lambda-p\right)^2|\Psi_{z^*}}\leq  \alpha^{\frac{1}{2}\left(\frac{9}{5}\sigma+2v+3u-\frac{1}{5}\right)-2}$.

In order to rule out that one of the components $z^*_i$ is infinite, let us verify that $\braket{\Psi_z|\mathbb{H}_\Lambda|\Psi_z}>E_{\alpha,\Lambda}\left(\alpha^2 p\right)+\alpha^{-(2+w)}$ for $\alpha$ large enough in case there exists an $i\in \{1,2,3\}$ with $z_i=\pm \infty$. Note that $\rho\in \mathrm{supp}\left(F_{-\infty,i}\right)$ implies $m_{\alpha^{-v}}(\rho_i)< -\frac{3}{2} L$ and therefore $\int_{|x|> \frac{3}{2}L}\mathrm{d}\rho\geq \int_{-\infty}^{-\frac{3}{2}L}\mathrm{d}\rho_i\geq \int_{-\infty}^{m_{\alpha^{-v}}(\rho_i)}\mathrm{d}\rho_i\geq \left(\frac{1}{2}-\alpha^{-v}\right)\int\mathrm{d}\rho$. Similarly $\int_{|x|> \frac{3}{2}L}\mathrm{d}\rho\geq  \left(\frac{1}{2}-\alpha^{-v}\right)\int\mathrm{d}\rho$ for $\rho\in \mathrm{supp}\left(F_{\infty,i}\right)$. Consequently we have for any $z$ with $z_i=\pm \infty$ for some $i\in \{1,2,3\}$
\begin{align*}
\braket{\Psi_z|\mathcal{N}_{\mathbb{R}^3\setminus B_{\frac{3}{2}L}(0)}|\Psi_z}\geq \left(\frac{1}{2}-\alpha^{-v}\right)\braket{\Psi_z|\mathcal{N}|\Psi_z},
\end{align*}
where $\mathcal{N}_{\mathbb{R}^3\setminus B_{\frac{3}{2}L}(0)}:=\widehat{G}$ with $G(\rho):=\int_{|x|>\frac{3}{2}L}\mathrm{d}\rho$. Therefore \cite[Corollary B.7]{BS1} together with the fact that $\mathrm{supp}\left(\Psi_z\right)\subset \mathrm{supp}\left(\Psi''_\alpha\right)\subset B_L(0)$, yields
\begin{align*}
\braket{\Psi_z|\mathbb{H}_\Lambda|\Psi_z}& \! \geq \! E_\alpha\! +\! \! \left(\! \frac{1}{2}\! -\! \alpha^{-v}\! \! \right)\! \! \braket{\Psi_z|\mathcal{N}|\Psi_z}\! -\! \sqrt{\frac{D}{\frac{3}{2}L\! -\! L}}\! \geq \! E_\alpha\! +\! \left(\frac{1}{2}-\alpha^{-v}\right)c_- \! -\! \sqrt{\! 2D\alpha^{-(1+\sigma)}}\\
&=E_{\alpha,\Lambda}\left(\alpha^2 p\right)+\frac{1}{2}+O_{\alpha \rightarrow \infty}\left(\alpha^{-v}\right)>E_{\alpha,\Lambda}\left(\alpha^2 p\right)+\alpha^{-(2+w)}
\end{align*}
for a suitable constant $D>0$ and $\alpha$ large enough. Hence we obtain that all components $z^*_i$ are finite, i.e. $m_{\alpha^{-v}}(\rho) \in B_{\sqrt{3}\alpha^{-u}}\left(\alpha^{-u}z^*\right)\subseteq \mathbb{R}^3$ for $\rho\in \mathrm{supp}\left(F_{z^*_3,3}F_{z^*_2,2}F_{z^*_1,1}\right)$. 

Let  $\Psi_\alpha''':=\mathcal{T}_{-\alpha^{-u}z^*}\Psi_{z^*}$, where $\mathcal{T}_z$ is a joint translation in the electron and phonon component, i.e. $\left(\mathcal{T}_z \Psi\right)(x):=U_z\Psi(x-z)$ with $U_z$ being defined by $U_z^{-1} a(f)U_z=a(f_z)$ and $f_z(y):=f(y-z)$. Using the fact that $\braket{\Psi_{z^*}|\mathbb{H}_\Lambda|\Psi_{z^*}}\leq E_{\alpha,\Lambda}\left(\alpha^2 p\right)+\alpha^{-(2+w)}\lesssim E_\alpha+\alpha^{-\frac{2}{29}}$ as well as $\mathds{1}_{\Omega^*}\Psi_\alpha'''=\Psi_\alpha'''$, where $\Omega^*$ is the set of all $\rho$ satisfying $\int \mathrm{d}\rho\leq c_+$ and $m_{\alpha^{-v}}\! \left(\rho\right)\in B_{\sqrt{3}\alpha^{-u}}(0)$, we can apply \cite[Lemma 3.11]{BS1}, which yields
\begin{align*}
\Big\langle \Psi'''_\alpha\Big|W_{\varphi^\mathrm{Pek}}^{-1}\, \mathcal{N}W_{\varphi^\mathrm{Pek}}\Big| \Psi'''_\alpha\Big\rangle\lesssim \alpha^{-\frac{2}{29}}+\alpha^{-u}+\alpha^{-v}.
\end{align*}
By taking $r'>0$ small enough such that $r'\leq \frac{1}{2}\left(\frac{1}{5}-\frac{9}{5}\sigma-2v-3u\right)$, $r'\leq w$ and $r'\leq \min\{\frac{2}{29},u,v\}$, we conclude that $\Big\langle \Psi'''_\alpha\Big|W_{\varphi^\mathrm{Pek}}^{-1}\, \mathcal{N}W_{\varphi^\mathrm{Pek}}\Big| \Psi'''_\alpha\Big\rangle\lesssim \alpha^{-r'}$. Since $\mathrm{supp}\left(\Psi'''_\alpha\right)\subset B_L(-\alpha^{-u}z^*)\subset B_{L+\alpha^{-u}|z^*|}(0)\subset B_{4L}(0)$, this concludes the proof.
\end{proof}

In the following Theorem \ref{Theorem: Complete Condensation}, which is the main result of this section, we will lift the (weak) condensation from Eq.~(\ref{Equation-Weak Codensation}) to a strong one without introducing a large energy penalty, using an argument  in \cite{LNSS}. We will verify that the momentum error due to the localization is negligibly small as well. 

\begin{theorem}
\label{Theorem: Complete Condensation}
Given $0<\sigma<\frac{1}{9}$ and $C>0$, let $\Lambda=\alpha^{\frac{4}{5}(1+\sigma)}$ and $L=\alpha^{1+\sigma}$, and assume $p$ satisfies $|p|\leq \frac{C}{\alpha}$ and $E_{\alpha,\Lambda}\! \left(\alpha^2 p\right)\leq E_\alpha+C|p|^2$ for a given $C>0$. Then there exists a $r>0$ and states $\Psi_\alpha$ with $\big\langle \Psi_\alpha\big|\left(\Upsilon_\Lambda-p\right)^2\big|\Psi_\alpha\big\rangle\lesssim \alpha^{-(2+r)}$, $\braket{\Psi_\alpha|\mathbb{H}_\Lambda|\Psi_\alpha}- E_{\alpha,\Lambda}\! \left(\alpha^2 p\right)\lesssim \alpha^{-(2+r)}$ and $\mathrm{supp}\left(\Psi_\alpha\right)\subseteq B_{4L}(0)$, such that
\begin{align}
\label{Equation-StrongCondensation}
\chi\left( W_{\varphi^\mathrm{Pek}-i\xi}^{-1}\, \mathcal{N}W_{\varphi^\mathrm{Pek}-i\xi}\leq \alpha^{-r}\right)\Psi_\alpha=\Psi_\alpha,
\end{align}
where $\xi:=\frac{p}{m}\widetilde{\nabla}_{x_1}\varphi^\mathrm{Pek}$ with $\widetilde{\nabla}_{x_1}:=\chi^1\left(\Lambda^{-1}|\nabla_{x_1}|\leq 2\right)\nabla_{x_1}$.
\end{theorem}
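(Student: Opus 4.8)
\emph{Overview.} I would build $\Psi_\alpha$ from the state $\Psi'''_\alpha$ of Lemma~\ref{Lemma-Existence of a condensate} in two steps: (i) transfer its weak condensation from $\varphi^\mathrm{Pek}$ to the momentum-boosted function $\eta:=\varphi^\mathrm{Pek}-i\xi$ — the factor $p/m$ in $\xi$ being precisely what makes the coherent state centred at $\eta$ carry phonon momentum $\approx p$ — and (ii) upgrade weak condensation (a smallness bound for $\braket{W_\eta^{-1}\mathcal N W_\eta}$) to strong condensation (the spectral constraint in \eqref{Equation-StrongCondensation}) by the localization argument of \cite{LNSS}, this time also keeping the momentum error $\braket{(\Upsilon_\Lambda-p)^2}$ under control.

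\emph{Step (i).} Since $\xi$ is a fixed multiple of $\widetilde\nabla_{x_1}\varphi^\mathrm{Pek}$ and $|p|\lesssim\alpha^{-1}$, we have $\|\xi\|\lesssim\alpha^{-1}$. As $\widehat{\varphi^\mathrm{Pek}}$ is real and radial while $k\mapsto\chi^1(\Lambda^{-1}|k_1|\le2)k_1$ is odd in $k_1$, the inner product $\braket{\varphi^\mathrm{Pek}|\xi}$ vanishes; hence the Weyl calculus gives $W_\eta=W_{\varphi^\mathrm{Pek}}W_{-i\xi}$ (with trivial Baker--Campbell--Hausdorff phase) and, using $W_{-i\xi}^{-1}a(f)W_{-i\xi}=a(f)+i\braket{f|\xi}$ together with $\braket{\varphi^\mathrm{Pek}|\xi}=0$, the operator identity $W_\eta^{-1}\mathcal N W_\eta=W_{\varphi^\mathrm{Pek}}^{-1}\mathcal N W_{\varphi^\mathrm{Pek}}+i\big(a^\dagger(\xi)-a(\xi)\big)+\|\xi\|^2$. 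Evaluating in $\Psi'''_\alpha$: the first summand is $\lesssim\alpha^{-r'}$ by \eqref{Equation-Weak Codensation}, the last is $\lesssim\alpha^{-2}$, and the linear term — which commutes with $W_{\varphi^\mathrm{Pek}}$, again by $\braket{\varphi^\mathrm{Pek}|\xi}=0$ — is bounded by Cauchy--Schwarz by $2\|\xi\|\,\big\langle\Psi'''_\alpha\big|W_{\varphi^\mathrm{Pek}}^{-1}\mathcal N W_{\varphi^\mathrm{Pek}}\big|\Psi'''_\alpha\big\rangle^{1/2}\lesssim\alpha^{-1-r'/2}$. Thus $\big\langle\Psi'''_\alpha\big|W_\eta^{-1}\mathcal N W_\eta\big|\Psi'''_\alpha\big\rangle\lesssim\alpha^{-r'}$.

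\emph{Step (ii).} Writing $\mathcal N_\eta:=W_\eta^{-1}\mathcal N W_\eta\ge0$, which has small expectation by Step (i), I localize $\Psi'''_\alpha$ in $\mathcal N_\eta$ following \cite{LNSS}: for a suitably chosen quadratic partition of unity $\{f_j(\mathcal N_\eta)\}$ one produces a normalized $\Psi_\alpha\propto f_0(\mathcal N_\eta)\Psi'''_\alpha$ with $\chi(\mathcal N_\eta\le\alpha^{-r})\Psi_\alpha=\Psi_\alpha$ for a small $r>0$; since $f_0(\mathcal N_\eta)$ acts only on the phonon Fock space, $\mathrm{supp}(\Psi_\alpha)\subseteq\mathrm{supp}(\Psi'''_\alpha)\subseteq B_{4L}(0)$. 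The energy error $\braket{\Psi_\alpha|\mathbb H_\Lambda|\Psi_\alpha}-E_{\alpha,\Lambda}(\alpha^2p)\lesssim\alpha^{-(2+r)}$ is handled as in \cite{BS1,LNSS}: it combines the a priori bound $\braket{\Psi'''_\alpha|\mathbb H_\Lambda|\Psi'''_\alpha}-E_{\alpha,\Lambda}(\alpha^2p)\lesssim\alpha^{-(2+r')}$, the coercivity $\mathbb H_\Lambda\ge E_\alpha+c\,\mathcal N_\eta-o(1)$ on the relevant states, and the smallness $\|(1-f_0^2)^{1/2}\Psi'''_\alpha\|^2\lesssim\alpha^{r}\braket{\Psi'''_\alpha|\mathcal N_\eta|\Psi'''_\alpha}\lesssim\alpha^{r-r'}$. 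The momentum error is the new ingredient: I would estimate the IMS double commutator $\big[[(\Upsilon_\Lambda-p)^2,f_j(\mathcal N_\eta)],f_j(\mathcal N_\eta)\big]$ as in Lemma~\ref{Lemma-IMS}, noting that $[\Upsilon_\Lambda,\mathcal N_\eta]$ is of order $\alpha^{-2}\Lambda^{1/2}(\mathcal N+1)^{1/2}$ — the extra $\alpha^{-2}$ coming from the rescaled commutation relations — and using an AM--GM splitting that absorbs the remaining cross term into $(\Upsilon_\Lambda-p)^2$, whose expectation on $\Psi'''_\alpha$ is $\lesssim\alpha^{-(2+r')}$; this gives $\braket{\Psi_\alpha|(\Upsilon_\Lambda-p)^2|\Psi_\alpha}\lesssim\alpha^{-(2+r)}$ after taking $r$ small enough.

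\emph{Main obstacle.} The delicate point is Step (ii): the strong localization must be carried out with an energy cost $\lesssim\alpha^{-(2+r)}$, i.e. strictly below the $O(\alpha^{-2})$ scale on which $E_{\alpha,\Lambda}(\alpha^2p)-E_\alpha$ itself lives, which is exactly what the (adaptive) localization scheme of \cite{LNSS} is designed to achieve using the excitation gap of $\mathbb H_\Lambda$ over the condensate. Threading the momentum-variance observable $(\Upsilon_\Lambda-p)^2$ through this scheme is the genuinely new bookkeeping; it is made tractable by the fact that every commutator of $\Upsilon_\Lambda$ with a field operator acquires an additional factor $\alpha^{-2}$ from the semiclassically rescaled canonical commutation relations, which more than compensates the $\Lambda^{1/2}$ and the derivative $\|f_j'\|_\infty$ of the localization profile.
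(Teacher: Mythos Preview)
Your proposal follows essentially the same route as the paper: start from $\Psi'''_\alpha$, observe that weak condensation in $\varphi^{\mathrm{Pek}}$ transfers to $\eta=\varphi^{\mathrm{Pek}}-i\xi$ because $\|\xi\|\lesssim\alpha^{-1}$, then localize in $\mathcal N_\eta:=W_\eta^{-1}\mathcal N W_\eta$ and control both the energy and the momentum via IMS. Two corrections are worth flagging.

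\medskip
\textbf{Energy step.} No coercivity of the form $\mathbb H_\Lambda\ge E_\alpha+c\,\mathcal N_\eta-o(1)$ is invoked (nor is one available here). The paper simply uses $\braket{\widetilde\Psi_\alpha|\mathbb H_\Lambda|\widetilde\Psi_\alpha}\ge E_\alpha$ on the complementary piece and then closes with the \emph{hypothesis} $E_{\alpha,\Lambda}(\alpha^2 p)-E_\alpha\lesssim|p|^2\lesssim\alpha^{-2}$: together with your smallness estimate $1-Z_\alpha^2\lesssim\alpha^{h-r'}$ this makes $(1-Z_\alpha^2)\bigl(E_{\alpha,\Lambda}(\alpha^2 p)-E_\alpha\bigr)\lesssim\alpha^{-(2+r'-h)}$. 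This is precisely why the assumption $E_{\alpha,\Lambda}(\alpha^2 p)\le E_\alpha+C|p|^2$ is part of the statement; your sketch should replace the ``excitation gap'' ingredient by this.

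\medskip
\textbf{Momentum step.} The paper's computation of the double commutator is cleaner than your outline suggests and does not produce a $\Lambda^{1/2}$. One first conjugates by $W_\eta$, which turns $f_j(\mathcal N_\eta)$ into $f_j(\mathcal N)$ and $\Upsilon_\Lambda-p$ into $\Upsilon_\Lambda-\widetilde p+2\,\mathfrak{Re}\,a^\dagger(\varphi)$ with $\varphi=\tfrac{1}{i}\widetilde\nabla_{x_1}\eta$ and $\|\varphi\|\lesssim 1$. Since $\Upsilon_\Lambda$ commutes with $\mathcal N$, only the linear field part contributes to the commutators, and using $[\mathfrak{Re}\,a^\dagger(\varphi),f_j(\mathcal N)]=\alpha^{-2}i\,\mathfrak{Im}\bigl(a^\dagger(\varphi)\,\delta f_j(\mathcal N)\bigr)$ with $\|\delta f_j\|_\infty\lesssim\alpha^h$ one obtains $-X\lesssim\alpha^{4h-3}(\mathcal N+1)^2$; the constraint is $h<\tfrac14$. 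Your heuristic via $[\Upsilon_\Lambda,\mathcal N_\eta]$ is the same mechanism, but the conjugation makes the cancellation of the $\Upsilon_\Lambda$--$\mathcal N$ commutator manifest and avoids any $\Lambda$-dependence.
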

Note that $\xi$ is small in magnitude, $\|\xi\|\lesssim |p|\lesssim \alpha^{-1}$. The statement of Theorem \ref{Theorem: Complete Condensation} is also valid for $\xi=0$, i.e., in case we conjugate by the Weyl transformation $W_{\varphi^\mathrm{Pek}}$ instead of $W_{\varphi^\mathrm{Pek}-i\xi}$. For technical reasons, it will however be useful in the proof of Theorem \ref{Theorem: Parabolic Lower Bound for the truncated Model} to use $\varphi^\mathrm{Pek}-i\xi\approx \varphi^\mathrm{Pek}-i\frac{p}{m}\nabla_{x_1} \varphi^\mathrm{Pek}$ as a reference state, since the latter satisfies the momentum constraint $\braket{\varphi^\mathrm{Pek}-i\frac{p}{m}\nabla_{x_1} \varphi^\mathrm{Pek}|\frac{1}{i}\nabla|\varphi^\mathrm{Pek}-i\frac{p}{m}\nabla_{x_1} \varphi^\mathrm{Pek}}=p$.
\begin{proof}
Let $\Psi_\alpha'''$ be as in Lemma \ref{Lemma-Existence of a condensate} and let us define for $0<\epsilon<\frac{1}{2}$ and $0<h<\min \big\{r',\frac{1}{4}\big\}$
\begin{align*}
\Psi_\alpha:=Z_\alpha^{-1}\chi^\epsilon\left(\alpha^h W_{\varphi^\mathrm{Pek}-i\xi}^{-1}\, \mathcal{N}W_{\varphi^\mathrm{Pek}-i\xi}\leq \frac{1}{2}\right)\Psi_\alpha''',
\end{align*}
where $Z_\alpha:=\|\chi^\epsilon\left(\alpha^h W_{\varphi^\mathrm{Pek}-i\xi}^{-1}\, \mathcal{N}W_{\varphi^\mathrm{Pek}-i\xi}\leq \frac{1}{2}\right)\Psi_\alpha'''\|$ is a normalization constant. Clearly the states $\Psi_\alpha$ satisfy Eq.~(\ref{Equation-StrongCondensation}) for $r\leq h$. Let us furthermore define the states $\widetilde{\Psi}_\alpha:=\frac{1}{\sqrt{1-Z_\alpha^2}}\chi^\epsilon\left(\frac{1}{2}\leq \alpha^h W_{\varphi^\mathrm{Pek}-i\xi}^{-1}\, \mathcal{N}W_{\varphi^\mathrm{Pek}-i\xi}\right)\Psi_\alpha'''$. An application of \cite[Lemma 3.3]{BS1} yields
\begin{align*}
Z_\alpha^2 \braket{\Psi_\alpha|\mathbb{H}_\Lambda|\Psi_\alpha}+&(1-Z_\alpha^2) \braket{\widetilde{\Psi}_\alpha|\mathbb{H}_\Lambda|\widetilde{\Psi}_\alpha}\leq \braket{\Psi'''_\alpha|\mathbb{H}_\Lambda|\Psi'''_\alpha}+C_0\alpha^{2h-\frac{7}{2}}\braket{\Psi'''_\alpha|\sqrt{\mathcal{N}+1}|\Psi'''_\alpha}\\
&\leq E_{\alpha,\Lambda}\! \left(\alpha^2 p\right)+C_1\alpha^{-(2+r'')}
\end{align*}
 for suitable constants $C_0,C_1>0$ and $r'':=\min\{r',\frac{3}{2}-2h\}>0$. We have
 \begin{align*}
&1-Z_\alpha^2=\Big\langle \Psi'''_\alpha\Big|\chi^{\epsilon}\left(\frac{1}{2}\leq  \alpha^{h}W_{\varphi^\mathrm{Pek}-i\xi}^{-1}\mathcal{N}W_{\varphi^\mathrm{Pek}-i\xi} \right)^2\Big| \Psi'''_\alpha\Big\rangle\\
& \leq \frac{2\alpha^{h}}{1-2\epsilon}\Big\langle \Psi'''_\alpha\Big| W_{\varphi^\mathrm{Pek}-i\xi}^{-1}\mathcal{N}W_{\varphi^\mathrm{Pek}-i\xi} \Big|\Psi'''_\alpha \Big\rangle\leq \frac{4\alpha^{h}}{1-2\epsilon}\Big\langle \Psi'''_\alpha\Big| W_{\varphi^\mathrm{Pek}}^{-1}\mathcal{N}W_{\varphi^\mathrm{Pek}} \Big|\Psi'''_\alpha \Big\rangle\! +\! \frac{4\alpha^{h}\|\xi\|^2}{1-2\epsilon}\\
&  \lesssim \frac{1}{1-2\epsilon}\left(\alpha^{h-r'}+\alpha^{h-2}\right)\underset{\alpha\rightarrow \infty}{\longrightarrow}0,
\end{align*}
where we used the operator inequality $W_{\varphi^\mathrm{Pek}-i\xi}^{-1}\mathcal{N}W_{\varphi^\mathrm{Pek}-i\xi}\leq 2\left(W_{\varphi^\mathrm{Pek}}^{-1}\mathcal{N}W_{\varphi^\mathrm{Pek}}+\|\xi\|^2\right)$, $\|\xi\|^2\leq |p|^2\|\nabla \varphi^\mathrm{Pek}\|^2\lesssim \alpha^{-2}$ and Eq.~(\ref{Equation-Weak Codensation}). Making use of $\braket{\widetilde{\Psi}_\alpha|\mathbb{H}_\Lambda|\widetilde{\Psi}_\alpha}\geq E_\alpha$ and $E_{\alpha,\Lambda}\! \left(\alpha^2 p\right)-E_\alpha\lesssim |p|^2\lesssim \alpha^{-2}$, we therefore obtain
\begin{align*}
\braket{\Psi_\alpha|\mathbb{H}_\Lambda|\Psi_\alpha}&-E_{\alpha,\Lambda}\! \left(\alpha^2 p\right)\leq Z_\alpha^{-2}\left(C_1\alpha^{-(2+r'')}+(1-Z_\alpha^2)\left(E_{\alpha,\Lambda}\! \left(\alpha^2 p\right)-E_\alpha\right)\right)\\
& \lesssim \alpha^{-(2+r'')}+\left(\alpha^{h-r'}+\alpha^{h-2}\right)\left(E_{\alpha,\Lambda}\! \left(\alpha^2 p\right)-E_\alpha\right)\lesssim \alpha^{-(2+r''')}
\end{align*}
with $r''':=\min\{r'',r'-h,2-h\}>0$. 

In order to estimate $\big\langle \Psi_\alpha\big|\left(\Upsilon_\Lambda-p\right)^2\big|\Psi_\alpha\big\rangle$, let us apply the IMS identity
\begin{align}
\label{Equation-IMS}
Z_\alpha^2\big\langle \Psi_\alpha\big|\! \left(\Upsilon_\Lambda\! -\! p\right)^2\! \big|\Psi_\alpha\big\rangle\! +\! (1\! -\! Z_\alpha^2)\big\langle \widetilde{\Psi}_\alpha\big|\! \left(\Upsilon_\Lambda\! -\! p\right)^2\! \big|\widetilde{\Psi}_\alpha\big\rangle\! =\! \big\langle \Psi'''_\alpha\big|\! \left(\Upsilon_\Lambda\! -\! p\right)^2\! \big|\Psi'''_\alpha\big\rangle\! -\! \big\langle \Psi'''_\alpha\big|X\big|\Psi'''_\alpha\big\rangle,
\end{align}
where we define $X:=\frac{1}{2}\left[\left[\left(\Upsilon_\Lambda-p\right)^2,A_1\right],A_1\right]+\frac{1}{2}\left[\left[\left(\Upsilon_\Lambda-p\right)^2,A_2\right],A_2\right]$ using the operators $A_1:=f_1\left(W_{\varphi^\mathrm{Pek}-i\xi}^{-1}\, \mathcal{N}W_{\varphi^\mathrm{Pek}-i\xi}\right)$  and $A_2:=f_2\left(W_{\varphi^\mathrm{Pek}-i\xi}^{-1}\, \mathcal{N}W_{\varphi^\mathrm{Pek}-i\xi}\right)$ with $f_1(x):=\chi^\epsilon\big(\alpha^h x\leq \frac{1}{2}\big)$ and $f_2:=\chi^\epsilon\big(\frac{1}{2}\leq \alpha^h x\big)$. In the following let us compute 
\begin{align*}
&\left[\left[\left(\Upsilon_\Lambda\! -\! p\right)^2\! ,A_j\right]\! ,A_j\right]\! \! =\! W_{\varphi^\mathrm{Pek}-i\xi}^{-1}\! \! \left[\left[\! \left(W_{\varphi^\mathrm{Pek}-i\xi}\Upsilon_\Lambda W_{\varphi^\mathrm{Pek}-i\xi}^{-1}-p\right)^2\! \! ,f_j(\mathcal{N})\right]\! ,f_j(\mathcal{N})\right]\!\!  W_{\varphi^\mathrm{Pek}-i\xi}\\
&\ \ \ \ \ =W_{\varphi^\mathrm{Pek}-i\xi}^{-1}\left[\left[\left(\Upsilon_\Lambda-\widetilde{p}+2\mathfrak{Re}\, a^\dagger\left(\varphi\right)\right)^2,f_j(\mathcal{N})\right],f_j(\mathcal{N})\right]W_{\varphi^\mathrm{Pek}-i\xi}
\end{align*}
where  $\varphi:=\frac{1}{i}\widetilde{\nabla}_{x_1}\left(\varphi^\mathrm{Pek}-i\xi\right)$ and $\widetilde{p}:=p-\braket{\varphi^\mathrm{Pek}-i\xi|\frac{1}{i}\widetilde{\nabla}_{x_1}|\varphi^\mathrm{Pek}-i\xi}=p\big(1-\frac{2}{m}\|\widetilde{\nabla}_{x_1}\varphi^\mathrm{Pek}\|^2\big)$. We have $|\widetilde{p}|\leq |p|\leq \frac{C}{\alpha}$ since $m=\frac{2}{3}\|\nabla \varphi^\mathrm{Pek}\|^2=2\|\nabla_{x_1} \varphi^\mathrm{Pek}\|^2\geq 2\|\widetilde{\nabla}_{x_1} \varphi^\mathrm{Pek}\|^2$. Defining the discrete derivative $\delta f_j(x):=\alpha^2\big(f_j(x+\alpha^{-2})-f_j(x)\big)$, we can further write
\begin{align*}
&\left[\left[\left(\Upsilon_\Lambda-\widetilde{p}+2\mathfrak{Re}\, a^\dagger\left(\varphi\right)\right)^2,f_j(\mathcal{N})\right],f_j(\mathcal{N})\right]=8\left[\mathfrak{Re}\, a^\dagger\left(\varphi\right),f(\mathcal{N})\right]^2\\
&\ \ \ \ \ \ +2\Big\{\Upsilon_\Lambda-\widetilde{p}+2\mathfrak{Re}\, a^\dagger\left(\varphi\right),\left[\left[\mathfrak{Re}\, a^\dagger\left(\varphi\right),f_j(\mathcal{N})\right],f_j(\mathcal{N})\right]\Big\}\\
&=-8\alpha^{-4}\left(\mathfrak{Im}\left(a^\dagger\left(\varphi\right)\delta f_j(\mathcal{N})\right)\right)^2+2\alpha^{-4}\Big\{\Upsilon_\Lambda-\widetilde{p}+2\mathfrak{Re}\, a^\dagger\left(\varphi\right),\mathfrak{Re}\left(a^\dagger\left(\varphi\right)(\delta f_j)^2(\mathcal{N})\right)\Big\}
\end{align*}
where we used $\left[\Upsilon_\Lambda-\widetilde{p}+2\mathfrak{Re}\, a^\dagger\left(\varphi\right),f_j(\mathcal{N})\right]=2\left[\mathfrak{Re}\, a^\dagger\left(\varphi\right),f_j(\mathcal{N})\right]$, $\left[\mathfrak{Re}\, a^\dagger\left(\varphi\right),f_j(\mathcal{N})\right]=\alpha^{-2}i\mathfrak{Im}\left(a^\dagger\left(\varphi\right)\delta f_j(\mathcal{N})\right)$ and $\left[\left[\mathfrak{Re}\, a^\dagger\left(\varphi\right),f_j(\mathcal{N})\right],f_j(\mathcal{N})\right]=\alpha^{-4}\mathfrak{Re}\left(a^\dagger\left(\varphi\right)(\delta f_j)^2(\mathcal{N})\right)$. Hence
\begin{align}
\label{Equation-DoubleCommutatorBound}
&- \left[\left[\left(\Upsilon_\Lambda-\widetilde{p}+2\mathfrak{Re}\, a^\dagger\left(\varphi\right)\right)^2,f_j(\mathcal{N})\right],f_j(\mathcal{N})\right]\leq 8\alpha^{-4}\mathfrak{Im}\left(a^\dagger\left(\varphi\right)\delta f_j(\mathcal{N})\right)^2\\
\nonumber
&\ \ \ \ \ \ \ \ + 4\alpha^{-3}\, \mathfrak{Re}\left(a^\dagger\left(\varphi\right)(\delta f_j)^2(\mathcal{N})\right)^2+\alpha^{-5}\left(\Upsilon_\Lambda-\widetilde{p}+2\mathfrak{Re}\, a^\dagger\left(\varphi\right)\right)^2\\
\nonumber
&\leq 2\|\varphi\|^2\left(2\alpha^{-4}\|\delta f_j\|_\infty^2+2\alpha^{-3}\|\delta f_j\|_\infty^4+3\alpha^{-5}\right)\left(2\mathcal{N}+\alpha^{-2}\right)+27\alpha^{-3}\mathcal{N}^2+3\alpha^{-5}|\widetilde{p}|^2
\end{align}
where we have applied multiple Cauchy--Schwarz estimates and used $\Upsilon_\Lambda^2\leq 9\alpha^2 \mathcal{N}^2$. Note that the expression in the last line of Eq.~(\ref{Equation-DoubleCommutatorBound}) is of order $\alpha^{4h-3}\left(\mathcal{N}+1\right)^2$, since $\|\delta f_j\|_\infty\lesssim \alpha^h$ and $\|\varphi\|\lesssim 1$. Using $W_{\varphi^\mathrm{Pek}-i\xi}^{-1}\left(\mathcal{N}+1\right)^2 W_{\varphi^\mathrm{Pek}-i\xi}\lesssim \left(\mathcal{N}+1\right)^2$ we therefore obtain
\begin{align*}
- X=- \frac{1}{2}\sum_{j=1}^2\left[\left[\left(\Upsilon_\Lambda-p\right)^2\! ,A_j\right]\! ,A_j\right]\lesssim \alpha^{4h-3}\left(\mathcal{N}+1\right)^2.
\end{align*}
Using this together with Eq.~(\ref{Equation-IMS}) and the observation $\big\langle \widetilde{\Psi}_\alpha\big|\! \left(\Upsilon_\Lambda\! -\! p\right)^2\! \big|\widetilde{\Psi}_\alpha\big\rangle\geq 0$, yields
\begin{align*}
&\big\langle \Psi_\alpha\big|\! \left(\Upsilon_\Lambda\! -\! p\right)^2\! \big|\Psi_\alpha\big\rangle\leq Z_\alpha^{-2}\left(\big\langle \Psi'''_\alpha\big|\! \left(\Upsilon_\Lambda\! -\! p\right)^2\! \big|\Psi'''_\alpha\big\rangle-\big\langle \Psi'''_\alpha\big|\! X \big|\Psi'''_\alpha\big\rangle\right)\\
&\ \  \lesssim \alpha^{-(2+r')}+\alpha^{4h-3}\big\langle \Psi'''_\alpha\big|\! \left(\mathcal{N}+1\right)^2 \big|\Psi'''_\alpha\big\rangle\lesssim \alpha^{-(2+r')}+\alpha^{4h-3}.
\end{align*}
Since $h<\frac{1}{4}$ we have $\min\{r',1-4h\}>0$, and therefore we can choose $r>0$ small enough such that $r\leq \min\{r',1-4h\}$, $r\leq r'''$ and $r\leq h$, which concludes the proof. 
\end{proof}

\section{Proof of Theorem \ref{Theorem: Parabolic Lower Bound for the truncated Model}}
\label{Section: Proof of Theorem}
In this section we shall prove the main technical Theorem \ref{Theorem: Parabolic Lower Bound for the truncated Model}, using the results of the previous sections as well as the results in the previous part of this paper series \cite{BS1}. Before we do this let us recall some definitions from \cite{BS1}.

\begin{defi}[Finite dimensional Projection $\Pi$]
\label{Definition: Finite dimensional Projection}
Given $\sigma>0$, let $\Lambda:=\alpha^{\frac{4}{5}(1+\sigma)}$ and $\ell:=\alpha^{-4(1+\sigma)}$, and let us introduce the cubes $C_z:=\left[z_1-\ell,z_1+\ell\right)\times \left[z_2-\ell,z_2+\ell\right)\times \left[z_3-\ell,z_3+\ell\right)$ for $z=(z_1,z_2,z_3)\in 2\ell\, \mathbb{Z}^3$. Then we define $\Pi$ as the orthogonal projection onto the subspace spanned by the functions $x\mapsto \int_{C_{z}}\frac{e^{i\, k\cdot x}}{|k|}\,\mathrm{d}k$ for $z\in 2\ell\, \mathbb{Z}^3\setminus \{0\}$ satisfying $C_z\subset B_{\Lambda}(0)$. Furthermore, let $\varphi_1,\dots ,\varphi_N$ be a real orthonormal basis of $\Pi L^2\! \left(\mathbb{R}^3\right)$, such that $\varphi_n=\frac{\Pi \nabla_{x_n}\varphi^\mathrm{Pek}}{\left\|\Pi \nabla_{x_n}\varphi^\mathrm{Pek}\right\|}$ for $n\in \{1,2,3\}$. 
\end{defi}

\begin{defi}[Coordinate Transformation $\tau$]
\label{Definition: Coordinate Transformation}
Let $\varphi^\mathrm{Pek}_x(y):=\varphi^\mathrm{Pek}(y-x)$ and let $t\mapsto x_t$ be the local inverse of the function $x\mapsto (\braket{\varphi_n|\varphi^\mathrm{Pek}_x})_{n=1}^3\in \mathbb{R}^3$ defined for $t\in B_{\delta_*}(0)$ with a suitable $\delta_*>0$. Note that we can take $B_{\delta_*}(0)$ as the domain of the local inverse, since $\braket{\varphi_n|\varphi^\mathrm{Pek}_0}=0$ for all $n\in \{1,2,3\}$ due to the fact that $\varphi^\mathrm{Pek}$ and $\Pi$ respect the reflection symmetry $y_n\mapsto -y_n$. Then we define $f:\mathbb{R}^3\longrightarrow \Pi L^2\! \left(\mathbb{R}^3\right)$ as $f(t):=\chi\left(|t|<\delta_*\right)\left(\Pi\varphi^\mathrm{Pek}_{x_t}-\sum_{n=1}^3 t_n \varphi_n\right)$ and the transformation $\tau:\Pi L^2\! \left(\mathbb{R}^3\right)\longrightarrow \Pi L^2\! \left(\mathbb{R}^3\right)$ as
\begin{align*}
\tau\left(\varphi\right):=\varphi-f\left(t^\varphi\right)
\end{align*}
with $t^\varphi:=\left(\braket{\varphi_1|\varphi},\braket{\varphi_2|\varphi},\braket{\varphi_3|\varphi}\right)\in \mathbb{R}^3$.
\end{defi}

\begin{defi}[Quadratic Approximation $J_{t,\epsilon}$]
\label{Definition: Quadratic Approximation}
Let us first define the operators
\begin{align}
\label{Equation: K}
K^\mathrm{Pek}&:=1-H^\mathrm{Pek}=4\left(-\Delta\right)^{-\frac{1}{2}}\psi^\mathrm{Pek}\frac{1-\ket{\psi^\mathrm{Pek}}\bra{\psi^\mathrm{Pek}}}{-\Delta+V^\mathrm{Pek}-\mu^\mathrm{Pek}}\psi^\mathrm{Pek}\left(-\Delta\right)^{-\frac{1}{2}},\\
\label{Equation: L}
L^\mathrm{Pek}&:=4\left(-\Delta\right)^{-\frac{1}{2}}\psi^\mathrm{Pek}\left(1-\Delta\right)^{-1}\psi^\mathrm{Pek}\left(-\Delta\right)^{-\frac{1}{2}},
\end{align}
where $V^\mathrm{Pek}:=-2(-\Delta)^{-\frac{1}{2}}\varphi^\mathrm{Pek}$, $\mu^\mathrm{Pek}:=e^\mathrm{Pek}-\|\varphi^\mathrm{Pek}\|^2$ and $\psi^\mathrm{Pek}$ is the, non-negative, ground state of the operator $-\Delta+V^\mathrm{Pek}$. Furthermore let $T_x$ be the translation operator, i.e. $\left(T_x\varphi\right)(y):=\varphi(y-x)$, and let $K^\mathrm{Pek}_x:=T_x K^\mathrm{Pek} T_{-x}$ and $L^\mathrm{Pek}_x:=T_x L^\mathrm{Pek} T_{-x}$. Then we define 
\begin{align*}
J_{t,\epsilon}:=\pi \left(1-(1+\epsilon)\left(K^\mathrm{Pek}_{x_t}+\epsilon L^\mathrm{Pek}_{x_t}\right)\right)\pi
\end{align*}
for $|t|<\epsilon$ and $\epsilon<\delta_*$, where $\delta_*$ and $x_t$ are as in Definition \ref{Definition: Coordinate Transformation} and $\pi :L^2\! \left(\mathbb{R}^3\right)\longrightarrow  L^2\! \left(\mathbb{R}^3\right)$ is the orthogonal projection on the space spanned by $\{\varphi_4,\dots ,\varphi_N\}$ with $\varphi_n$ as in Definition \ref{Definition: Finite dimensional Projection}. Furthermore we define $J_{t,\epsilon}:=\pi$ for $|t|\geq \epsilon$ and we will use the shorthand notation $J_{t,\epsilon}[\varphi]:=\braket{\varphi|J_{t,\epsilon}|\varphi}$.
\end{defi}

Recall the definition of $E_{\alpha,\Lambda}$ in Theorem \ref{Theorem: Parabolic Lower Bound for the truncated Model}. 
 In the following we will assume that $p$ satisfies the assumption $E_{\alpha,\Lambda}\! \left(\alpha^2 p\right)\leq E_\alpha+C|p|^2$ of Theorem \ref{Theorem: Complete Condensation} with $C\geq \frac{1}{2m}$, which we can do w.l.o.g., since $E_{\alpha,\Lambda}\! \left(\alpha^2 p\right)> E_\alpha+C|p|^2$ immediately implies the statement of Theorem \ref{Theorem: Parabolic Lower Bound for the truncated Model} (compare with the comment above Lemma \ref{Lemma-InitialState}). We shall also assume in the following that  $|p|\leq \frac{C}{\alpha}$. Due to these assumptions we can apply Theorem \ref{Theorem: Complete Condensation}, which yields the existence of a sequence $\Psi_\alpha$ with $\big\langle \Psi_\alpha\big|\left(\Upsilon_\Lambda-p\right)^2\big|\Psi_\alpha\big\rangle\lesssim \alpha^{-(2+r)}$, $\braket{\Psi_\alpha|\mathbb{H}_\Lambda|\Psi_\alpha}- E_{\alpha,\Lambda}\! \left(\alpha^2 p\right)\lesssim \alpha^{-(2+r)}$ and $\mathrm{supp}\left(\Psi_\alpha\right)\subseteq B_{4L}(0)$ with $L=\alpha^{1+\sigma}$, such that $\widetilde{\Psi}_\alpha:=W_{-i\xi}\Psi_\alpha$ with $\xi=\frac{p}{m}\widetilde{\nabla}_{x_1}\varphi^\mathrm{Pek}$ satisfies condensation with respect to $\varphi^\mathrm{Pek}$, i.e.
\begin{align}
\label{Equation-TildeCondensation}
\chi\left( W_{\varphi^\mathrm{Pek}}^{-1}\mathcal{N} W_{\varphi^\mathrm{Pek}}\leq \alpha^{-r}\right)\widetilde{\Psi}_\alpha=\widetilde{\Psi}_\alpha.
\end{align}
Using $\frac{p}{m}\left(p-\Upsilon_\Lambda \right)\leq \alpha^{-\frac{r}{2}}\frac{|p|^2}{4m^2}+\alpha^{\frac{r}{2}}\left(p-\Upsilon_\Lambda \right)^2$ and $|p|\leq \frac{C}{\alpha}$, we therefore have
\begin{align}
\label{Equation-LagrangeMultiplier}
E_{\alpha,\Lambda}\! \left(\alpha^2 p\right)\geq \Big\langle \Psi_\alpha\Big|\mathbb{H}_\Lambda+\frac{p}{m}\left(p-\Upsilon_\Lambda \right)\Big|\Psi_\alpha\Big\rangle+O_{\alpha\rightarrow \infty}\left(\alpha^{-\left(2+\frac{r}{2}\right)}\right),
\end{align}
where $\frac{p}{m}$ formally acts as a Lagrange multiplier for the minimization of $\mathbb{H}_\Lambda$ subject to the constraint $\Upsilon_\Lambda=p$. 
In the rest of this Section we will verify  that $\mathbb{H}_\Lambda+\frac{p}{m}\left(p-\Upsilon_\Lambda \right)$ is bounded from below by the right hand side of Eq.~(\ref{Equation-Main with cut-off}) when tested against a state $\Psi$ satisfying $\mathrm{supp}\left(\Psi\right)\subseteq B_{4L}(0)$ and complete condensation with respect to $\varphi^\mathrm{Pek}-i\xi$ (where we find it convenient to use $\varphi^\mathrm{Pek}-i\xi$ instead of $\varphi^\mathrm{Pek}$ for technical reasons). The momentum constraint on $\Psi$ will not be needed for this; i.e., 
%
we have transformed our original constrained minimization problem into a global one, which we handle similarly as in the previous part \cite{BS1} concerning a lower bound on the global minimum $E_\alpha=\inf \sigma\left(\mathbb{H}\right)$. As already stressed in the Section \ref{Section: Introduction},  it is essential to work with the truncated Hamiltonian $\mathbb{H}_\Lambda$ and the truncated momentum $\Upsilon_\Lambda$ here, since in contrast to $\mathbb{H}_\Lambda+\frac{p}{m}\left(p-\Upsilon_\Lambda \right)$ the operator $\mathbb{H}+\frac{p}{m}\left(p-\mathbb{P} \right)$ is not bounded from below for $p\neq 0$. 

Following \cite{BS1}, we will identify $\mathcal{F}\left(\Pi L^2\! \left(\mathbb{R}^3\right)\right)$ with $L^2\! \left(\mathbb{R}^{N}\right)$ using the representation of real-valued functions $\varphi=\sum_{n=1}^{N} \lambda_n \varphi_n $ by points $\lambda=(\lambda_1,\dots ,\lambda_{{N}})\in \mathbb{R}^{N}$. With this identification, we can represent the annihilation operators $a_n:=a\left(\varphi_n\right)$ as $a_n=\lambda_n+\frac{1}{2\alpha^2}\partial_{\lambda_n}$, where $\lambda_n$ is the multiplication operator by the function $\lambda\mapsto \lambda_n$ on $L^2\! \left(\mathbb{R}^{N}\right)$. Let us also use for functions $\varphi\mapsto g(\varphi)$ depending on elements $\varphi\in \Pi  L^2\! \left(\mathbb{R}^3\right)$ the convenient notation $g(\lambda):=g\left(\sum_{n=1}^{N} \lambda_n \varphi_n\right)$, where $\lambda\in \mathbb{R}^N$.

It is essential for our proof that $\widetilde{\Psi}_\alpha$ satisfies complete condensation in $\varphi^\mathrm{Pek}$, see Eq.~(\ref{Equation-TildeCondensation}), since it allows us to apply \cite[Lemma 6.1]{BS1} which states that 
in terms of  the quadratic operator $J_{t,\epsilon}$ and the transformation $\tau$ on $\Pi L^2\! \left(\mathbb{R}^3\right)$ in Definitions \ref{Definition: Quadratic Approximation} and \ref{Definition: Coordinate Transformation} we have  
\begin{align}
\label{Equation-FirstPart}
\braket{\widetilde{\Psi}_\alpha|\mathbb{H}_\Lambda|\widetilde{\Psi}_\alpha}\! \geq  & e^\mathrm{Pek}\! +\! \big\langle \widetilde{\Psi}_\alpha\big|\!  -\! \frac{1}{4\alpha^4}\sum_{n=1}^{N}\partial_{\lambda_n}^2\! +J_{t^\lambda,\alpha^{-s}}\! \big[ \tau\! \left(\lambda\right)\big] +\mathcal{N}_{>N}\big| \widetilde{\Psi}_\alpha\big\rangle \! -\! \frac{{N}}{2\alpha^2}\\
\nonumber
&\ \ \ \ \ \ \ \ \ \ \ +  O_{\alpha\rightarrow \infty}\! \left(\! \alpha^{-(2+w)}\! \right)
\end{align}
for suitable $w,s_0>0$ and any $0<s<s_0$, where we define $\mathcal{N}_{>N}:=\mathcal{N}-\sum_{k=1}^{N}a_k^\dagger a_k$ and $t^\varphi$ is defined as in Definition \ref{Definition: Coordinate Transformation} such that  $t^\lambda=(\lambda_1,\lambda_2,\lambda_3)\in \mathbb{R}^3$. Furthermore it is shown in \cite[Lemma 6.1]{BS1}, that there exists a $\beta>0$, such that
\begin{align}
\label{Equation-ExponentialDecay}
\braket{\widetilde{\Psi}_\alpha|1-\mathbb{B}|\widetilde{\Psi}_\alpha}\leq e^{-\beta \alpha^{2-2s}}
\end{align}
for all $0<s<s_0$, where $\mathbb{B}$ is the multiplication operator by the function $\lambda\mapsto \chi(|t^\lambda|<\alpha^{-s})$. In the following we will always choose $s<1$. We will use the symbol $w$ for a generic, positive constant, which is allowed to vary from line to line. 

\subsection{Quasi-Quadratic Lower Bound}
In order to find a good lower bound on $\braket{\Psi_\alpha|\mathbb{H}_\Lambda+\frac{p}{m}\left(p-\Upsilon_\Lambda \right)|\Psi_\alpha}$, and therefore on $E_{\alpha,\Lambda}(\alpha^2p)$, it is natural to conjugate $\mathbb{H}_\Lambda\! +\! \frac{p}{m}\! \left(p\! -\! \Upsilon_\Lambda \right)$ with the Weyl transformation $W_{\varphi^\mathrm{Pek}-i\xi}=W_{\varphi^\mathrm{Pek}}W_{-i\xi}$, since $\varphi^\mathrm{Pek}-i\xi$ is close to the minimizer $\varphi^\mathrm{Pek}-i\frac{p}{m}\nabla_{x_1}\varphi^\mathrm{Pek}$ of the corresponding classical problem, see \cite{FRS}. Since $i\xi$ is purely imaginary, the interaction term in $\mathbb{H}_\Lambda$ is invariant under the transformation $W_{-i\xi}$, i.e. $W_{-i\xi}\mathfrak{Re}\left[a\left(\chi\left(\left|\nabla\right|\leq \Lambda\right)w_x\right)\right]W_{-i\xi}^{-1}=\mathfrak{Re}\left[a\left(\chi\left(\left|\nabla\right|\leq \Lambda\right)w_x\right)\right]$, and furthermore 
\begin{align}
\label{Equation-FirstConjugationOfUpsilon}
W_{-i\xi}\Upsilon_\Lambda W_{-i\xi}^{-1} \! = \! \Upsilon_\Lambda \! - \! 2\mathfrak{Re} \! \left[  a\left(\frac{1}{i}\widetilde{\nabla}_{x_1}i\xi\right) \! \right] \! + \! \Big\langle i\xi\Big|\frac{1}{i}\widetilde{\nabla}_{x_1}\Big|i\xi\Big\rangle \! = \! \Upsilon_\Lambda \! - \! 2\mathfrak{Re} \! \left[  a\left(\widetilde{\nabla}_{x_1}\xi\right) \! \right],
\end{align}
where we have used $\Big\langle i\xi\Big|\frac{1}{i}\widetilde{\nabla}_{x_1}\Big|i\xi\Big\rangle=0$ (since $\braket{h|\frac{1}{i}\widetilde{\nabla}_{x_1}|h}=0$ for any real-valued or imaginary-valued function $h\in L^2\! \left(\mathbb{R}^3\right)$). Therefore conjugating $\mathbb{H}_\Lambda \! +\! \frac{p}{m}\! \left(p\! -\! \Upsilon_\Lambda \right)$ with $W_{-i\xi}$ yields
\begin{align*}
&\Big\langle \Psi_\alpha\Big|\mathbb{H}_\Lambda \! +\! \frac{p}{m}\! \left(p\! -\! \Upsilon_\Lambda \right)\! \Big|\Psi_\alpha\Big\rangle\!  =\! \Big\langle \widetilde{\Psi}_\alpha\Big|\mathbb{H}_\Lambda\! -\! \frac{p}{m}\Upsilon_\Lambda  +\! 2\mathfrak{Re}\left[a\left(\frac{p}{m}\widetilde{\nabla}_{x_1}\xi\! -\! i\xi\right)\right]\! \Big|\widetilde{\Psi}_\alpha\Big\rangle\! +\! \frac{|p|^2}{m} \! +\! \left\|\xi\right\|^2\\
&\ \ \ \ \ \geq e^\mathrm{Pek}\! +\! \big\langle \widetilde{\Psi}_\alpha\big| -\frac{1}{4\alpha^4}\sum_{n=1}^{N}\partial_{\lambda_n}^2\! +\! J_{t^\lambda,\alpha^{-s}}\! \big[ \tau\! \left(\lambda\right)\big] \!+\! \mathcal{N}_{>N}-\frac{p}{m}\Upsilon_\Lambda \big| \widetilde{\Psi}_\alpha\big\rangle-\frac{{N}}{2\alpha^2}\\
&\ \ \ \ \ \ \ \ \ \  \ \ +2\mathfrak{Re}\Big\langle \widetilde{\Psi}_\alpha\Big|a\left(\frac{p}{m}\widetilde{\nabla}_{x_1}\xi\! -\! i\xi\right)\Big|\widetilde{\Psi}_\alpha\Big\rangle+\! \frac{|p|^2}{m} \! +\! \left\|\xi\right\|^2+O_{\alpha\rightarrow \infty}\left(\alpha^{-(2+w)}\right),
\end{align*}
where we have used Eq.~(\ref{Equation-FirstPart}). In the next step we apply the Weyl transformation $W_{\varphi^\mathrm{Pek}}$, which satisfies $W_{\varphi^\mathrm{Pek}}\lambda W_{\varphi^\mathrm{Pek}}^{-1}=\lambda+\lambda^\mathrm{Pek}$ and hence
\begin{align*}
&W_{\varphi^\mathrm{Pek}}\frac{p}{m}\Upsilon_\Lambda W_{\varphi^\mathrm{Pek}}^{-1}=\frac{p}{m}\Upsilon_\Lambda+2\mathfrak{Re}\left[a\left(\frac{p}{im}\widetilde{\nabla}_{x_1}\varphi^\mathrm{Pek}\right)\right]=\frac{p}{m}\Upsilon_\Lambda-2\mathfrak{Re}\left[a\left(i\xi\right)\right],\\
&W_{\varphi^\mathrm{Pek}}\mathfrak{Re}\left[a\left(\frac{p}{m}\widetilde{\nabla}_{x_1}\xi\! -\! i\xi\right)\right] W_{\varphi^\mathrm{Pek}}^{-1}=\mathfrak{Re}\left[a\left(\frac{p}{m}\widetilde{\nabla}_{x_1}\xi\! -\! i\xi\right)\right]-\|\xi  \|^2,
\end{align*}
where we have used $\mathfrak{Re}\braket{\varphi^\mathrm{Pek}|\frac{p}{m}\widetilde{\nabla}_{x_1}\xi\! -\! i\xi}=\braket{\varphi^\mathrm{Pek}|\frac{p}{m}\widetilde{\nabla}_{x_1}\xi}=-\|\xi\|^2$. Furthermore $W_{\varphi^\mathrm{Pek}}t^\lambda W_{\varphi^\mathrm{Pek}}^{-1}=(\lambda_1+\lambda^\mathrm{Pek}_1,\lambda_2+\lambda^\mathrm{Pek}_2 ,\lambda_3+\lambda^\mathrm{Pek}_3)=(\lambda_1,\lambda_2 ,\lambda_3)=t^\lambda$ with $\lambda^\mathrm{Pek}:=\left(\braket{\varphi_n|\Pi \varphi^\mathrm{Pek}}\right)_{n=1}^N$. Therefore defining $\Psi_\alpha^*:=W_{\varphi^\mathrm{Pek}}\widetilde{\Psi}_\alpha=W_{\varphi^\mathrm{Pek}-i\xi}\Psi_\alpha$ and conjugating with $W_{\varphi^\mathrm{Pek}}$ yields the  lower bound 
\begin{align}
\label{Equation-LowerBoundWithMultiplierAlt}
&\braket{\Psi_\alpha|\mathbb{H}_\Lambda+\frac{p}{m}\left(p-\Upsilon_\Lambda \right)|\Psi_\alpha} \\ \nonumber
&\geq e^\mathrm{Pek}\! +\!   \Big\langle \Psi^*_\alpha\Big| -\! \frac{1}{4\alpha^4}\sum_{n=1}^{N}\partial_{\lambda_n}^2\!  +\!  J_{t^\lambda,\alpha^{-s}}\! \left[ \tau\! \left(\lambda\! +\! \lambda^\mathrm{Pek}\right)\right]\!  +\! W_{\varphi^\mathrm{Pek}}\mathcal{N}_{>N}W_{\varphi^\mathrm{Pek}}^{-1}\! -\! \frac{p}{m}\Upsilon_\Lambda\Big| \Psi^*_\alpha\Big\rangle\!\\
\nonumber
&\quad  - \frac{{N}}{2\alpha^2} +2\mathfrak{Re}\Big\langle \Psi^*_\alpha\Big|a\left(\frac{p}{m}\widetilde{\nabla}_{x_1}\xi\right)\Big| \Psi^*_\alpha \Big\rangle\!+\! \frac{|p|^2}{m}\! -\! \|\xi\|^2 \!+ \! O_{\alpha\rightarrow \infty}\! \left(\alpha^{-(2+w)}\right).
\end{align}
The advantage of conjugating with the Weyl transformation $W_{\varphi^\mathrm{Pek}-i\xi}=W_{\varphi^\mathrm{Pek}}W_{-i\xi}$ stems from the observation that we have an almost complete cancellation of linear terms, i.e., as we will verify below, the term linear in creation and annihilation operators $\mathfrak{Re}\Big\langle \Psi^*_\alpha\Big|a\left(\frac{p}{m}\widetilde{\nabla}_{x_1}\xi\right)\Big| \Psi^*_\alpha \Big\rangle$ in Eq.~(\ref{Equation-LowerBoundWithMultiplierAlt}) is of negligible order, and  the function $\lambda\mapsto J_{t^\lambda,\alpha^{-s}}\! \left[ \tau\! \left(\lambda\! +\! \lambda^\mathrm{Pek}\right)\right]$ vanishes  quadratically at $\lambda=0$. The latter follows from the fact that $\tau\! \left( \lambda^\mathrm{Pek}\right)=0$. Utilizing the inequalities $\braket{\Psi^*_\alpha|\mathcal{N}|\Psi^*_\alpha}= \braket{\Psi_\alpha|W_{\varphi^\mathrm{Pek}-i\xi}^{-1}\mathcal{N}W_{\varphi^\mathrm{Pek}-i\xi}|\Psi_\alpha}\leq \alpha^{-r}$, see Eq.~(\ref{Equation-StrongCondensation}), and $\|\frac{p}{ m}\widetilde{\nabla}_{x_1}\xi\|\lesssim |p|^2$, where we have used that $\varphi^\mathrm{Pek}\in H^2\! \left(\mathbb{R}^3\right)$, see \cite{Li,MS}, we obtain that
\begin{align}
\label{Equation-LinerTerm}
2\mathfrak{Re}\Big\langle \Psi^*_\alpha \Big|a\left(\frac{p}{m}\widetilde{\nabla}_{x_1}\xi\right)\Big|\Psi^*_\alpha\Big\rangle \lesssim\!  \alpha^{-\frac{r}{2}}|p|^2\! \lesssim \! \alpha^{-(2+\frac{r}{2})}
\end{align}
is indeed negligible small. Furthermore we can estimate, up to a term of order $\alpha^{-(2+\frac{2}{5})}$, $W_{\varphi^\mathrm{Pek}}\mathcal{N}_{>N}W_{\varphi^\mathrm{Pek}}^{-1}$ from below by a proper quadratic expression
\begin{align}
\nonumber
&W_{\varphi^\mathrm{Pek}}\mathcal{N}_{>N}W_{\varphi^\mathrm{Pek}}^{-1}=\mathcal{N}_{>N}+a\left((1-\Pi)\varphi^\mathrm{Pek}\right)+a^\dagger\left((1-\Pi)\varphi^\mathrm{Pek}\right)+\left\|(1-\Pi)\varphi^\mathrm{Pek}\right\|^2\\
\label{Equation-TransformedNumberOperator}
& \ \ \ \ \ \ \geq \frac{1}{2}\mathcal{N}_{>N}-2\left\|(1-\Pi)\varphi^\mathrm{Pek}\right\|^2=\frac{1}{2}\mathcal{N}_{>N}+O_{\alpha\rightarrow \infty}\left(\alpha^{-(2+\frac{2}{5})}\right),
\end{align}
where we have used $\|(1-\Pi)\varphi^\mathrm{Pek}\|^2\lesssim \alpha^{-(2+\frac{2}{5})}$, see \cite[Lemma A.1]{BS1}. In the following let us use the convenient notation $e^\mathrm{Pek}_p:=e^\mathrm{Pek}+\frac{|p|^2}{2m}$. Combining Eq.~(\ref{Equation-LowerBoundWithMultiplierAlt}) with Eq.~(\ref{Equation-LinerTerm}), Eq.~(\ref{Equation-TransformedNumberOperator}) and the observation that $\frac{|p|^2}{m}\! -\! \|\xi\|^2\geq \frac{|p|^2}{2m}$, and using the fact that $E_{\alpha,\Lambda}(\alpha^2p)\geq \Big\langle \Psi_\alpha\Big|\mathbb{H}_\Lambda \! +\! \frac{p}{m}\! \left(p\! -\! \Upsilon_\Lambda \right)\! \Big|\Psi_\alpha\Big\rangle+O_{\alpha\rightarrow \infty}\left(\alpha^{-\left(2+\frac{r}{2}\right)}\right)$, see Eq.~(\ref{Equation-LagrangeMultiplier}), we obtain
\begin{align}
\nonumber
E_{\alpha,\Lambda}(\alpha^2p)&\geq\,   e^\mathrm{Pek}_p\! + \!  \Big\langle \Psi^*_\alpha\Big|\!-\frac{1}{4\alpha^4}\sum_{n=1}^{N}\partial_{\lambda_n}^2\!  +\!  J_{t^\lambda,\alpha^{-s}}\! \left[ \tau\! \left(\lambda\! +\! \lambda^\mathrm{Pek}\right)\right]+\!  \frac{1}{2}\mathcal{N}_{>N}-\! \frac{p}{m}  \Upsilon_\Lambda\Big| \Psi^*_\alpha\Big\rangle\\ 
\label{Equation-PseudoQuadratic}
& \ \ \ \ \ \ \ \ \ \ \  \ \ \ \ \ \ \  -\frac{{N}}{2\alpha^2}+O_{\alpha\rightarrow \infty}\left(\alpha^{-(2+w)}\right).
\end{align}
The right hand side of Eq.~(\ref{Equation-PseudoQuadratic}) is up to a coordinate transformation in the argument of $J_{t^\lambda,\alpha^{-s}}$ quadratic in creation and annihilation operators. In the next subsection we will apply a unitary transformation in order to arrive at a proper quadratic expression.

\subsection{Conjugation with the Unitary $\mathcal{U}$}
In order to get rid of the coordinate transformation $\tau$ in the argument of $J_{t^\lambda,\alpha^{-s}}$, let us define the unitary operator $\mathcal{U}$ on $\mathcal{F}\left(\Pi L^2\! \left(\mathbb{R}^3\right)\right)\cong L^2\! \left(\mathbb{R}^{N}\right)$ as $\mathcal{U}\left(\Psi\right)(\lambda):=\Psi\left(\Xi(\lambda)\right)$, where $\Xi:\mathbb{R}^N\longrightarrow \mathbb{R}^N$ is defined as $\Xi(\lambda):=\tau\Big(\lambda + \lambda^\mathrm{Pek}\Big) \in \Pi L^2\! \left(\mathbb{R}^3\right) \cong \mathbb{R}^N$.  Note that the inverse of $\tau$ is simply given by $\tau^{-1}(\varphi)=\varphi+f(t^\varphi)$ where $f:\mathbb{R}^3\longrightarrow \Pi L^2\! \left(\mathbb{R}^3\right)$ is defined in Definition \ref{Definition: Coordinate Transformation}, which can be checked easily using the fact that $\braket{\varphi_n|f(t)}=0$ for $n\in \{1,2,3\}$ and consequently $t^{\tau(\varphi)}=t^\varphi$. Hence
\begin{align}
\label{Equation: LambdaTransf}
\mathcal{U}^{-1} \lambda_n \,  \mathcal{U}=\braket{\varphi_n|\tau^{-1}(\lambda)}-\lambda^\mathrm{Pek}_n=\lambda_n+\braket{\varphi_n|f(t^\lambda)}-\lambda^\mathrm{Pek}_n
\end{align}
and therefore $\mathcal{U}^{-1} t^\lambda\,  \mathcal{U}=(\braket{\varphi_1|\tau^{-1}(\lambda)}-\lambda^\mathrm{Pek}_1,\dots ,\braket{\varphi_3|\tau^{-1}(\lambda)}-\lambda^\mathrm{Pek}_3)=(\lambda_1,\dots ,\lambda_3)=t^\lambda$. Defining the matrix $\left(J_{t,\epsilon}\right)_{n,m}:=\braket{\varphi_n|J_{t,\epsilon}|\varphi_m}$ we furthermore have
\begin{align*}
&\mathcal{U}^{-1} J_{t^\lambda,\alpha^{-s}}\! \big[ \tau\! \left(\lambda\! +\! \lambda^\mathrm{Pek}\right)\big]\,  \mathcal{U}=J_{t^\lambda,\alpha^{-s}}\! \big[ \lambda\big]=\sum_{n,m=4}^N \left(J_{t^\lambda,\alpha^{-s}}\right)_{n,m}\lambda_n \lambda_m
\end{align*}
as well as $\mathcal{U}^{-1} i\partial_{\lambda_n}\,  \mathcal{U}=i\partial_{\lambda_n}$ for $3<n\leq N$, which immediately follows from the observation that $\Xi$ is a $t^\lambda=(\lambda_1,\lambda_2,\lambda_3)$-dependent shift. In the following let us extend $\{\varphi_1,\dots,\varphi_N\}$ to an orthonormal basis $\{\varphi_n:n\in \mathbb{N}\}$ of $L^2\! \left(\mathbb{R}^3\right)$ and introduce $a_n:=a\left(\varphi_n\right)$ for all $n\in \mathbb{N}$, and let us extend the action of $\mathcal{U}$ to all of $\mathcal{F}\left(L^2\! \left( \mathbb{R}^3\right)\right)$ such that $\mathcal{U}^{-1} a_n\,  \mathcal{U}=a_n$ for $n>N$. Defining $\Psi_\alpha':=\mathcal{U}^{-1}\Psi_\alpha^*$, we obtain by Eq.~(\ref{Equation-PseudoQuadratic})
\begin{align}
\nonumber
E_{\alpha,\Lambda}(\alpha^2p)  \! \geq \!   e^\mathrm{Pek}_p \! &+ \! \Big\langle \Psi'_\alpha\Big| \! -\! \frac{1}{4\alpha^4}\sum_{n=1}^3 \mathcal{U}^{-1}\partial_{\lambda_n}^2\mathcal{U}-\frac{1}{4\alpha^4}\sum_{n=4}^N \partial_{\lambda_n}^2\! +\! \sum_{n,m=4}^N \left(J_{t^\lambda,\alpha^{-s}}\right)_{n,m}\lambda_n \lambda_m\!   \\
\label{Equation-RepresentationOfTransformatedOperator}
& \ \ +\frac{1}{2}\mathcal{N}_{>N}-\mathcal{U}^{-1} \frac{p}{m}  \Upsilon_\Lambda\,  \mathcal{U}\Big| \Psi'_\alpha\Big\rangle
-\frac{{N}}{2\alpha^2}+O_{\alpha\rightarrow \infty}\left(\alpha^{-(2+w)} \right).
\end{align}
Using Eq.~(\ref{Equation: LambdaTransf}) and $\mathcal{U}^{-1} i\partial_{\lambda_n}\,  \mathcal{U}=i\partial_{\lambda_n}$ for $3<n\leq N$, we further obtain the transformation law $\mathcal{U}^{-1} a_n\,  \mathcal{U}=a_n+\braket{\varphi_n|f(t^\lambda)-\Pi\varphi^\mathrm{Pek}}$ for all $n>3$. 

In order to express $\mathcal{U}^{-1} \frac{p}{m}  \Upsilon_\Lambda\,  \mathcal{U}$, let us introduce the operators $c_n$ defined as $c_n:=\frac{1}{2\alpha^2} \mathcal{U}^{-1} \partial_{\lambda_n}\,  \mathcal{U}$ for $n\in \{1,2,3\}$ and $c_n:=a_n$ for $n>3$, as well as $g(t):=f(t)-\Pi\varphi^\mathrm{Pek}+\sum_{n=1}^3 t_n \varphi_n\in \Pi L^2\! \left(\mathbb{R}^3\right)$ and $g_n(t):=\braket{\varphi_n|g(t)}$. With these definitions at hand we obtain
\begin{align*}
\mathcal{U}^{-1} a_n\,  \mathcal{U}&=\mathcal{U}^{-1}\left( \frac{1}{2\alpha^2}\partial_{\lambda_n}+ \lambda_n\right)  \mathcal{U}=\frac{1}{2\alpha^2} \mathcal{U}^{-1} \partial_{\lambda_n}\,  \mathcal{U}+ \lambda_n=c_n+g_n\! \left(t^\lambda\right)\text{, for }1\leq n\leq 3,\\
\mathcal{U}^{-1} a_n\,  \mathcal{U}&=a_n+\braket{\varphi_n|f(t^\lambda)-\Pi\varphi^\mathrm{Pek}}=c_n+g_n\! \left(t^\lambda\right)\text{, for }4\leq n\leq N
\end{align*}
and $\mathcal{U}^{-1} a_n\,  \mathcal{U}=c_n=c_n+g_n\! \left(t^\lambda\right)$ for $n>N$, and therefore $\mathcal{U}^{-1} a_n\,  \mathcal{U}=c_n+g_n\! \left(t^\lambda\right)$ for all $n\in \mathbb{N}$. In the following we want to think of $c_n$ as being a variable of magnitude $\alpha^{-1}$ and $t^\lambda$ as being of order $\alpha^{-r}$ for some $r>0$, and consequently we think of $g_n\! \left(t^\lambda\right)$ as being of order $\alpha^{-r}$ as well, since $g(0)=0$. While the former will be a consequence of the proof presented below, the control on $t^\lambda$ follows from our assumption that we have condensation with respect to the state $\varphi^\mathrm{Pek}$.

In the following we want to show that for suitable $w,w'>0$, $ \frac{p}{m}\Upsilon_\Lambda$ is bounded by $\epsilon\left(- \frac{1}{4\alpha^4}\sum_{n=1}^3 \mathcal{U}^{-1}\partial_{\lambda_n}^2\mathcal{U}+\sum_{n=4}^N a_n^\dagger a_n+\mathcal{N}_{>N}\right)$ with $\epsilon=\alpha^{-w'}$, up to a term of negligible magnitude, see Eq.~(\ref{Equation-FinalEstimateUpsilon}). Since $- \frac{1}{4\alpha^4}\sum_{n=1}^3 \mathcal{U}^{-1}\partial_{\lambda_n}^2\mathcal{U}$ and $\mathcal{N}_{>N}$ appear in the expression on the right hand side of Eq.~(\ref{Equation-RepresentationOfTransformatedOperator}) as well, and since they are non-negative, this will leave us with the study of $-\frac{1}{4\alpha^4}\sum_{n=4}^N \partial_{\lambda_n}^2\! +\! \sum_{n,m=4}^N \left(J_{t^\lambda,\alpha^{-s}}\right)_{n,m}\lambda_n \lambda_m-\epsilon \sum_{n=4}^N a_n^\dagger a_n$ for a lower bound on the expression on the right hand side of Eq.~(\ref{Equation-RepresentationOfTransformatedOperator}). Using the representation $\frac{p}{m}  \Upsilon_\Lambda=\sum_{n,m=1}^\infty \braket{\varphi_n|\frac{p}{i\, m}\widetilde{\nabla}_{x_1}|\varphi_m}a_n^\dagger a_m$, we obtain
\begin{align}
\nonumber
&\mathcal{U}^{-1} \frac{p}{m}  \Upsilon_\Lambda\,  \mathcal{U}\! =\! \! \! \! \sum_{n,m=1}^\infty \! \! \braket{\varphi_n|\frac{p}{i\, m}\widetilde{\nabla}_{x_1}|\varphi_m}\left(c_n+g_n\! \left(t^\lambda\right)\right)^\dagger \left(c_m+g_m\! \left(t^\lambda\right)\right)\\
\label{Equation-Transformed Upsilon}
 &\ \ \ \ =\! \! \! \! \sum_{n,m=1}^\infty\! \!  \braket{\varphi_n|\frac{p}{i\, m}\widetilde{\nabla}_{x_1}|\varphi_m} c_n^\dagger c_m
 +\sum_{n,m=1}^\infty \braket{\varphi_n|\frac{p}{i\, m}\widetilde{\nabla}_{x_1}|\varphi_m}\left(c_n^\dagger\, g_m\! \left(t^\lambda\right)+g_n\! \left(t^\lambda\right)\, c_m \right),
\end{align}
where we have used $\sum_{n,m=1}^\infty \braket{\varphi_n|\frac{p}{i\, m}\widetilde{\nabla}_{x_1}|\varphi_m}g_n\! \left(t^\lambda\right)g_m\! \left(t^\lambda\right)=\braket{g\! \left(t^\lambda\right)|\frac{p}{i\, m}\widetilde{\nabla}_{x_1}|g\! \left(t^\lambda\right)}=0$, see the comment below Eq.~(\ref{Equation-FirstConjugationOfUpsilon}). Using the bound on the operator norm $\|\frac{p}{m}\widetilde{\nabla}_{x_1}\|_\mathrm{op}\leq \frac{|p|}{m}3\Lambda=\frac{|p|}{m}3\alpha^{\frac{4}{5}(1+\sigma)}\lesssim \alpha^{\frac{4}{5}(1+\sigma)-1}$ yields
\begin{align}
\label{Equation: TruncatedMomentumImplication}
\pm \sum_{n,m=1}^\infty \braket{\varphi_n|\frac{p}{i\, m}\widetilde{\nabla}_{x_1}|\varphi_m} c_n^\dagger c_m\lesssim \alpha^{\frac{4}{5}(1+\sigma)-1}\sum_{n=1}^\infty c_n^\dagger c_n.
\end{align}
For the bound in Eq.~(\ref{Equation: TruncatedMomentumImplication}) it is essential that we are using the truncated momentum $\Upsilon_\Lambda$ defined in terms of the bounded operator $\widetilde{\nabla}_{x_1}$ instead of the unbounded operator $\nabla_{x_1}$. Defining the coefficients $h_n(t):=\sum_{m=1}^\infty \braket{\varphi_n|\frac{p}{i\, m}\widetilde{\nabla}_{x_1}|\varphi_m}g_m(t)$ and applying Cauchy--Schwarz furthermore yields for all $\epsilon>0$
\begin{align*}
\pm \sum_{n,m=1}^\infty &\braket{\varphi_n|\frac{p}{i\, m}\widetilde{\nabla}_{x_1}|\varphi_m}\left(c_n^\dagger\, g_m\! \left(t^\lambda\right)+g_n\! \left(t^\lambda\right)\, c_m \right)=\sum_{n=1}^\infty \left(c_n^\dagger h_n\! \left(t^\lambda\right)+\overline{h_n\! \left(t^\lambda\right)}c_n\right)\\
&\leq \epsilon \sum_{n=1}^\infty c_n^\dagger c_n+\epsilon^{-1}\sum_{n=1}^\infty \left|h_n\! \left(t^\lambda\right)\right|^2=\epsilon \sum_{n=1}^\infty c_n^\dagger c_n+\epsilon^{-1}\left\|\frac{p}{m}\widetilde{\nabla}_{x_1} g\! \left(t^\lambda\right)\right\|^2.
\end{align*}
Note that $\left\|\frac{p}{ m}\widetilde{\nabla}_{x_1} g(t)\right\|\leq \frac{|p|}{m}\left\|\nabla g(t)\right\|$. Making use of $\nabla g(t)=\nabla \Pi \eta(t)$ with
\begin{align*}
\eta(t):=\chi\left(|t|<\delta_*\right)\left(\varphi^\mathrm{Pek}_{x_t}-\varphi^\mathrm{Pek}\right)+\chi\left(\delta_*\leq |t|\right)\left(\sum_{n=1}^3 t_n \frac{\nabla_{x_n}\varphi^\mathrm{Pek}}{\|\Pi \nabla_{x_n}\varphi^\mathrm{Pek}\|}-\varphi^\mathrm{Pek}\right),
\end{align*}
we obtain $\|\nabla g(t)\|\lesssim \|\nabla \eta(t)\|+\alpha^{-4(1+\sigma)}\|\eta(t)\|$ by Lemma \ref{Lemma: Compatibility}. Using again $\varphi^\mathrm{Pek}\in H^2\! \left(\mathbb{R}^3\right)$, we have $\|\eta(t)\|+\|\nabla \eta(t)\|\lesssim 1+|t|$, as well as $\|\nabla \eta(t)\|=\|\nabla \varphi^\mathrm{Pek}_{x_t}-\nabla \varphi^\mathrm{Pek}\|\leq |x_t|\|\Delta \varphi^\mathrm{Pek}\|\lesssim |t|$ for $|t|<\delta_*$. Consequently, $\left\|\frac{p}{ m}\widetilde{\nabla}_{x_1} g(t)\right\|\leq C_0 |p| \left(|t|+\alpha^{-4(1+\sigma)}(1+|t|)\right)$ for a suitable constant $C_0$. The choice $\epsilon:=\alpha^{-\min\{\frac{r}{2},1\}}$ yields for $\alpha$ large enough 
\begin{align}
\label{Equation-FinalEstimateUpsilon}
\pm \mathcal{U}^{-1} \frac{p}{m}  \Upsilon_\Lambda\,  \mathcal{U}\leq \alpha^{-w'}\sum_{n=1}^\infty c_n^\dagger c_n+ C_0C^2 \left(\alpha^{-2}\alpha^{\min\{\frac{r}{2},1\}}\left|t^\lambda\right|^2+\alpha^{-5-4\sigma}\Big(1+\left|t^\lambda\right|\Big)^2\right)
\end{align}
with $w'<\min\{\frac{r}{2},1-\frac{4}{5}(1+\sigma)\}$. In the following let $\alpha$ be large enough such that $\alpha^{-w'}\leq \frac{1}{2}$. Then we have 
\begin{align*}
\alpha^{-w'}\! \! \! \! \sum_{n\notin \{4,\dots,N\}}\! \!  c_n^\dagger c_n=\alpha^{-w'}\left(\sum_{n>N}a_n^\dagger a_n- \frac{1}{4\alpha^4}\sum_{n=1}^3 \mathcal{U}^{-1}\partial_{\lambda_n}^2\mathcal{U}\right)\! \leq \frac{1}{2}\mathcal{N}_{>N}\! -\!  \frac{1}{4\alpha^4}\sum_{n=1}^3 \mathcal{U}^{-1}\partial_{\lambda_n}^2\mathcal{U}.
\end{align*}
Using Eq.~(\ref{Equation-RepresentationOfTransformatedOperator}), Eq.~(\ref{Equation-FinalEstimateUpsilon}) and $\braket{\Psi'_\alpha||t^\lambda|^2|\Psi'_\alpha}= \braket{\widetilde{\Psi}_\alpha||t^\lambda|^2|\widetilde{\Psi}_\alpha}\leq \braket{\widetilde{\Psi}_\alpha|\mathcal{N}|\widetilde{\Psi}_\alpha}+\frac{3}{2\alpha^2}\leq \alpha^{-r}+\frac{3}{2\alpha^2}$, see Theorem \ref{Theorem: Complete Condensation} for the last estimate, we obtain for a suitable $w>0$
\begin{align*}
E_{\alpha,\Lambda}(\alpha^2p) &\geq e^\mathrm{Pek}_p+\Big\langle \Psi'_\alpha\Big|-\frac{1}{4\alpha^4}\sum_{n=4}^N \partial_{\lambda_n}^2\! +\! \sum_{n,m=4}^N \left(J_{t^\lambda,\alpha^{-s}}\right)_{n,m}\lambda_n \lambda_m\\
&\ \ \ \ \ \ \ \  \ \ \ \ \  \ \ \ \ \ -\alpha^{-w'}\sum_{n=4}^N a_n^\dagger a_n\Big| \Psi'_\alpha\Big\rangle-\frac{N}{2\alpha^2} +O_{\alpha\rightarrow \infty}\left(\alpha^{-(2+w)}\right)\\
& =   e^\mathrm{Pek}_p+\left(1-\alpha^{-w'}\right)\Big\langle \Psi'_\alpha\Big| \mathbb{Q}^{\alpha^{-w'}}_{t^\lambda,\alpha^{-s}}-\frac{N}{2\alpha^2}\Big| \Psi'_\alpha\Big\rangle +O_{\alpha\rightarrow \infty}\left(\alpha^{-(2+w)}\right)
\end{align*}
with $\mathbb{Q}^\kappa_{t,\epsilon}:= -\frac{1}{4\alpha^4}\sum_{n=4}^N \partial_{\lambda_n}^2+\frac{1}{1-\kappa}\sum_{n,m=4}^N \left(\left(J_{t,\epsilon}\right)_{n,m}-\kappa \delta_{n,m}\right)\lambda_n \lambda_m$, where we made use of the fact that $\sum_{n=4}^N a_n^\dagger a_n=-\frac{1}{4\alpha^4}\sum_{n=4}^N \partial_{\lambda_n}^2+\sum_{n=4}^N \lambda_n^2-\frac{N-3}{2\alpha^2}$.

\subsection{Properties of the Harmonic Oscillators $\mathbb{Q}^\kappa_{t,\epsilon}$}
Let $\pi$ be the projection from Definition \ref{Definition: Quadratic Approximation} and note that $J_{t,\epsilon}\geq c\, \pi$ for suitable $c>0$, $\epsilon$ small enough and $\alpha$ large enough by \cite[Lemma B.5]{BS1}. Therefore $\mathbb{Q}^{\alpha^{-w'}}_{t,\alpha^{-s}}\geq 0$ for $\alpha$ large enough. Since $J_{t,\epsilon}\leq 1$, we furthermore have $(1-\kappa)\inf \sigma \left(\mathbb{Q}^\kappa_{t,\epsilon}\right)\leq \frac{N}{2\alpha^2}\lesssim \alpha^{-2}\left(\frac{\Lambda}{\ell}\right)^3\leq \alpha^q$ for a suitable exponent $q$, see Definition \ref{Definition: Finite dimensional Projection}. Combining this with the estimate $\braket{\Psi'_\alpha|1-\mathbb{B}|\Psi'_\alpha}=\braket{\widetilde{\Psi}_\alpha|1-\mathbb{B}|\widetilde{\Psi}_\alpha}\leq e^{-\beta \alpha^{2-2s}}$ for a suitable $\beta>0$, where $\mathbb{B}:=\chi(|t^\lambda|<\alpha^{-s})$, see Eq.~(\ref{Equation-ExponentialDecay}), yields 
$$
\inf_{|t|<\alpha^{-s}}\inf \sigma\left(\mathbb{Q}^{\alpha^{-w'}}_{t,\alpha^{-s}}\right)\braket{\Psi_\alpha|\mathbb{B}|\Psi_\alpha}\geq \inf_{|t|<\alpha^{-s}}\inf \sigma\left(\mathbb{Q}^{\alpha^{-w'}}_{t,\alpha^{-s}}\right)+O_{\alpha\rightarrow \infty}\left(\alpha^q e^{-\beta \alpha^{2-2s}}\right)\,.
$$
Therefore we obtain for a suitable $w>0$
\begin{align}
\nonumber
&E_{\alpha,\Lambda}(\alpha^2p)  \geq  e^\mathrm{Pek}_p+\left(1-\alpha^{-w'}\right)\Big\langle \Psi'_\alpha\Big|  \mathbb{Q}^{\alpha^{-w'}}_{t^\lambda,\alpha^{-s}}\mathbb{B}-\frac{N}{2\alpha^2}\Big| \Psi'_\alpha\Big\rangle +O_{\alpha\rightarrow \infty}\left(\alpha^{-(2+w)}\right)\\
\nonumber
& \ \ \geq e^\mathrm{Pek}_p+\left(1-\alpha^{-w'}\right)\left(\inf_{|t|<\alpha^{-s}}\inf \sigma\left(\mathbb{Q}^{\alpha^{-w'}}_{t,\alpha^{-s}}\right)\braket{\Psi_\alpha|\mathbb{B}|\Psi_\alpha}-\frac{N}{2\alpha^2}\right) +O_{\alpha\rightarrow \infty}\left(\alpha^{-(2+w)}\right)\\
\label{Equation-MinimalBogoliubovEnergy}
&\ \ \geq e^\mathrm{Pek}_p+\left(1-\alpha^{-w'}\right)\left(\inf_{|t|<\alpha^{-s}}\inf \sigma\left(\mathbb{Q}^{\alpha^{-w'}}_{t,\alpha^{-s}}\right)-\frac{N}{2\alpha^2}\right) +O_{\alpha\rightarrow \infty}\left(\alpha^{-(2+w)}\right).
\end{align}
Since $\mathbb{Q}^\kappa_{t,\epsilon}$ is a harmonic oscillator, we can write its ground state energy explicitly as 
\begin{align*}
\inf \sigma\left( \mathbb{Q}^\kappa_{t,\epsilon}\right)& =\frac{1}{2\alpha^2}\mathrm{Tr}_{\Pi L^2(\mathbb{R}^3)} \sqrt{\frac{J_{t,\epsilon}-\kappa \pi}{1-\kappa}} \\ &=\inf \sigma\left( \mathbb{Q}^0_{t,\epsilon}\right)+\frac{1}{2\alpha^2}\mathrm{Tr}_{\Pi L^2(\mathbb{R}^3)}\! \! \left[\sqrt{\frac{J_{t,\epsilon}-\kappa \pi}{1-\kappa}}-\sqrt{J_{t,\epsilon}}\right] \,.
\end{align*}
Using $J_{t,\epsilon}\pi=J_{t,\epsilon}$, and therefore $[J_{t,\epsilon},\pi]=0$, and again the fact that $J_{t,\epsilon}\geq c\, \pi$ for $\epsilon$ small enough and $\alpha$ large enough, as well as $|\sqrt{x}-\sqrt{y}|\leq \frac{1}{\sqrt{c}}|x-y|$ for $x\geq 0$ and $y\geq c$, we obtain for such $\epsilon,\alpha$, and $\kappa\leq c$
\begin{align*}
\pm &\mathrm{Tr}_{\Pi L^2(\mathbb{R}^3)}\! \! \left[\sqrt{\frac{J_{t,\epsilon}-\kappa \pi}{1-\kappa}}-\sqrt{J_{t,\epsilon}}\right]\leq \frac{1}{\sqrt{c}}\mathrm{Tr}_{\Pi L^2(\mathbb{R}^3)} \!\left|\frac{J_{t,\epsilon}-\kappa \pi}{1-\kappa}-J_{t,\epsilon}\right|\\
&= \frac{\kappa}{\sqrt{c}(1-\kappa)}\mathrm{Tr} \left|J_{t,\epsilon}-\pi\right|=\frac{\kappa(1+\epsilon)}{\sqrt{c}(1-\kappa)}\mathrm{Tr}\left[K^\mathrm{Pek}+\epsilon L^\mathrm{Pek}\right]\lesssim \frac{\kappa}{1-\kappa},
\end{align*}
where we have used that $K^\mathrm{Pek}$ and $L^\mathrm{Pek}$ defined in Definition \ref{Definition: Quadratic Approximation} are trace-class. Combining what we have so far  with the bound 
$$
\inf \sigma\left(\mathbb{Q}^0_{t,\epsilon}\right)\geq \frac{N}{2\alpha^2}-\frac{1}{2\alpha^2}\mathrm{Tr}\left[1-\sqrt{H^\mathrm{Pek}}\, \right]-D\left(\alpha^{-2}\epsilon+\alpha^{-\left(2+\frac{1}{5}\right)}\right)
$$ 
for small $\epsilon$, $|t|<\epsilon$ and large $\alpha$, and a suitable $D>0$, see \cite[Lemma B.5]{BS1}, yields
\begin{align*}
\inf_{|t|<\alpha^{-s}}\! \! \inf \sigma\left(\mathbb{Q}^{\alpha^{-w'}}_{t,\alpha^{-s}}\right)-\frac{N}{2\alpha^2}+\frac{1}{2\alpha^2}\mathrm{Tr}\left[1\! -\! \sqrt{H^\mathrm{Pek}}\, \right]\! \gtrsim -\! \left(\alpha^{-(2+s)}+\alpha^{-\left(2+\frac{1}{5}\right)}+\alpha^{-(2+w')}\right).
\end{align*}
In combination with Eq.~(\ref{Equation-MinimalBogoliubovEnergy}) we therefore obtain for a suitable $w>0$
\begin{align}
\nonumber
E_{\alpha,\Lambda}(\alpha^2p)  &\geq   e^\mathrm{Pek}_p-\frac{1}{2\alpha^2}\mathrm{Tr}\left[1-\sqrt{H^\mathrm{Pek}}\, \right]+O_{\alpha\rightarrow \infty}\Big(\alpha^{-(2+w)}\Big),
\end{align}
which concludes the proof of Eq.~(\ref{Equation-Main with cut-off}).

\appendix

\section{Auxiliary Results}
\label{Appedinx: Auxiliary Results}
\begin{lem}
\label{Lemma-IMSforGradient}
Let $g(k):=\! \chi^1\! \left(K^{-1}|k|\leq 2\right)\!k$ for $k\in \mathbb{R}$. Then there exists a constant $C>0$ such that for any bounded function $f:\mathbb{R}\rightarrow \mathbb{R}$ with $f'\in L^2\!\left(\mathbb{R}\right)$ and $K>0$, the double commutator is bounded by
\begin{align*}
\left\| \left[\left[g\left(\frac{1}{i}\frac{\mathrm{d}}{\mathrm{d}t}\right),f(t)\right],f(t)\right]\right\|_\mathrm{op} \leq C\| f'\|^2,
\end{align*}
where we write $f(t)$ for the multiplication operator with respect to the function $t\mapsto f(t)$. Furthermore we can choose the constant $C>0$ such that $\left\|[g(\frac{1}{i}\frac{\mathrm{d}}{\mathrm{d}t}),f(t)]\right\|_\mathrm{op}\leq C\sqrt{K}\|f'\|$.
\end{lem}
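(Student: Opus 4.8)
The plan is to pass to the Fourier side and reduce both estimates to Schur-test bounds for translation-invariant integral kernels. First I would record the structure of the symbol: with $\epsilon=1$ and the conventions below Eq.~(\ref{Equation-Epsilon cut-off}) one has $g(k)=\alpha\!\left(K^{-1}|k|-2\right)k=K\,\widetilde{g}(k/K)$, where $\widetilde{g}(\kappa):=\alpha(|\kappa|-2)\,\kappa$ is a fixed $C^\infty$ function supported in $\{|\kappa|\le 3\}$ that depends only on the cut-off profiles $\alpha,\beta$. In particular $g$ is bounded, $\|g\|_\infty\le 3K$, so $g\!\left(\tfrac{1}{i}\tfrac{\mathrm{d}}{\mathrm{d}t}\right)$ is a bounded convolution operator on $L^2(\mathbb{R})$ with integral kernel $G(t-t')$, where $G(x):=\tfrac{1}{2\pi}\int_{\mathbb{R}}g(k)\,e^{ikx}\,\mathrm{d}k$. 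The scaling of $g$ gives $G(x)=K^2\,\widetilde{G}(Kx)$ with $\widetilde{G}(y):=\tfrac{1}{2\pi}\int_{\mathbb{R}}\widetilde{g}(\kappa)\,e^{i\kappa y}\,\mathrm{d}\kappa$, which is a Schwartz function.

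Next I would compute the kernels of the two commutators. Since multiplication by $f$ sends an integral kernel $T(t,t')$ to $T(t,t')\,f(t')$ when composed on the right and to $f(t)\,T(t,t')$ when composed on the left, the operator $\bigl[g(\tfrac{1}{i}\tfrac{\mathrm{d}}{\mathrm{d}t}),f\bigr]$ has kernel $G(t-t')\bigl(f(t')-f(t)\bigr)$, and hence $\bigl[\bigl[g(\tfrac{1}{i}\tfrac{\mathrm{d}}{\mathrm{d}t}),f\bigr],f\bigr]$ has kernel $G(t-t')\bigl(f(t')-f(t)\bigr)^2$. The Cauchy--Schwarz inequality yields the pointwise bounds $|f(t')-f(t)|\le|t-t'|^{1/2}\,\|f'\|$ and $|f(t')-f(t)|^2\le|t-t'|\,\|f'\|^2$, which hold for every $f$ with $f'\in L^2(\mathbb{R})$.

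Finally I would invoke Schur's test: an integral operator whose kernel is dominated in absolute value by a translation-invariant kernel $h(t-t')$ with $h\in L^1(\mathbb{R})$ has operator norm at most $\|h\|_{L^1}$. Applying this with $h(x)=|G(x)|\,|x|^{1/2}$ in the first case and with $h(x)=|G(x)|\,|x|$ in the second gives
\begin{align*}
\Bigl\|\bigl[g(\tfrac{1}{i}\tfrac{\mathrm{d}}{\mathrm{d}t}),f\bigr]\Bigr\|_{\mathrm{op}}&\le\|f'\|\int_{\mathbb{R}}|G(x)|\,|x|^{1/2}\,\mathrm{d}x,\\
\Bigl\|\bigl[\bigl[g(\tfrac{1}{i}\tfrac{\mathrm{d}}{\mathrm{d}t}),f\bigr],f\bigr]\Bigr\|_{\mathrm{op}}&\le\|f'\|^2\int_{\mathbb{R}}|G(x)|\,|x|\,\mathrm{d}x.
\end{align*}
A change of variables based on $G(x)=K^2\widetilde{G}(Kx)$ turns the first integral into $K^{1/2}\int_{\mathbb{R}}|\widetilde{G}(y)|\,|y|^{1/2}\,\mathrm{d}y$ and the second into $\int_{\mathbb{R}}|\widetilde{G}(y)|\,|y|\,\mathrm{d}y$; both are finite since $\widetilde{G}$ is Schwartz, and both depend only on $\alpha,\beta$. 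Taking $C$ to be the larger of these two $\widetilde{G}$-integrals gives the asserted bound $C\sqrt{K}\,\|f'\|$ for the single commutator and $C\,\|f'\|^2$ for the double commutator.

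The identities above are formal manipulations with integral kernels, and the one point requiring care is their justification, since $f$ is merely bounded with $f'\in L^2$ and need not decay at infinity. This is not a real difficulty: $G$ is Schwartz and $f$ is bounded, so all operators in sight are bounded; it therefore suffices to check the kernel identities on a dense class such as the Schwartz functions and to obtain the operator-norm bounds by density, while the Schur estimates use only $|f(t')-f(t)|\le|t-t'|^{1/2}\|f'\|$. I expect this approximation bookkeeping, rather than any of the estimates, to be the only mildly delicate step; note also that the boundedness of the symbol $g$ (as opposed to the unbounded multiplier $k\mapsto k$) is exactly what keeps the single-commutator bound finite, consistent with the role of the truncation emphasised in the main text.
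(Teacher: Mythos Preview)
Your argument is correct and proves both estimates, but it proceeds quite differently from the paper. The paper expands $f$ on the Fourier side, uses the identity $[g(\tfrac{1}{i}\tfrac{\mathrm d}{\mathrm dt}),e^{ikt}]=e^{ikt}\bigl(g(\tfrac{1}{i}\tfrac{\mathrm d}{\mathrm dt}+k)-g(\tfrac{1}{i}\tfrac{\mathrm d}{\mathrm dt})\bigr)$, and then bounds the resulting integrals over $\widehat f(k)$ (and $\widehat f(k')$ for the double commutator) by splitting at $|k|=K$ and invoking the three separate symbol estimates $\|g\|_\infty\lesssim K$, $\|g'\|_\infty\lesssim 1$, $\|g''\|_\infty\lesssim K^{-1}$ in the various regions; an initial approximation step reduces to smooth compactly supported $f$ so that $\widehat f$ exists. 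You instead work entirely on the position side: you exploit the exact scaling $g(k)=K\widetilde g(k/K)$ to obtain $G(x)=K^2\widetilde G(Kx)$ with $\widetilde G$ a fixed Schwartz function, then combine the Sobolev bound $|f(t)-f(t')|\le |t-t'|^{1/2}\|f'\|$ with a Schur test on the explicit commutator kernels. Your route is shorter and avoids the case analysis in $k$, at the price of using that $\widetilde g\in C_c^\infty$ (which you correctly justify: $\alpha\equiv 1$ on $(-\infty,-1]$ forces $\alpha(|\kappa|-2)\equiv 1$ near $\kappa=0$); the paper's argument would still go through for symbols with merely bounded second derivative and no kernel decay.
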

\begin{proof}
Let us start by defining the sequence $f_n(t):=\chi^1\left(\frac{|t|}{n}\leq 2 \right)f(t)$, which is compactly supported and therefore $f_n\in H^1\! \left(B_R(0)\right)$ by our assumptions. Hence there exist smooth and compactly supported $\widetilde{f}_n$ such that $\|f_n-\widetilde{f}_n\|_\infty+\|(f_n)'-(\widetilde{f}_n)'\|\underset{n\rightarrow \infty}{\longrightarrow}0$. Clearly the sequence $\widetilde{f}_n$ is uniformly bounded and approximates $f(t)$ in the strong operator topology, and consequently $ \left[\left[g\left(\frac{1}{i}\frac{\mathrm{d}}{\mathrm{d}t}\right),\widetilde{f}_n(t)\right],\widetilde{f}_n(t)\right]$ approximates $ \left[\left[g\left(\frac{1}{i}\frac{\mathrm{d}}{\mathrm{d}t}\right),f(t)\right],f(t)\right]$ in the strong operator topology as well. Hence $\left\| \left[\left[g\left(\frac{1}{i}\frac{\mathrm{d}}{\mathrm{d}t}\right),f(t)\right],f(t)\right]\right\|_\mathrm{op}$ is bounded from above by $\limsup_{n\rightarrow \infty}\left\| \left[\left[g\left(\frac{1}{i}\frac{\mathrm{d}}{\mathrm{d}t}\right),\widetilde{f}_n(t)\right],\widetilde{f}_n(t)\right]\right\|_\mathrm{op}$. Together with the observation $\| f'-(\widetilde{f}_n)'\|\underset{n\rightarrow \infty}{\longrightarrow}0$, we can therefore assume w.l.o.g. that $f$ is smooth and compactly supported. 

Going to Fourier space and defining $M(k,k'):=\sup_p \big|g\left(p+k+k'\right)-g\left(p+k\right)-g\left(p+k'\right)+g\left(p\right)\big|$, we can write 
\begin{align*}
& 2\pi\left\|\left[\left[g\left(\frac{1}{i}\frac{\mathrm{d}}{\mathrm{d}t}\right),f(t)\right],f(t)\right]\right\|_\mathrm{op}= \bigg\|\int \int  \widehat{f}(k)\widehat{f}(k')e^{it(k+k')}\bigg(g\left(\frac{1}{i}\frac{\mathrm{d}}{\mathrm{d}t}+k+k'\right)\\
&\ \ -g\left(\frac{1}{i}\frac{\mathrm{d}}{\mathrm{d}t}+k\right)-g\left(\frac{1}{i}\frac{\mathrm{d}}{\mathrm{d}t}+k'\right)+g\left(\frac{1}{i}\frac{\mathrm{d}}{\mathrm{d}t}\right)\bigg)\mathrm{d}k\mathrm{d}k'\bigg\|_\mathrm{op} \! \! \! \! \!  \leq \! \!\int\! \! \int  \big|\widehat{f}(k)\widehat{f}(k')\big|M(k,k')\mathrm{d}k\mathrm{d}k'\\
&=\! \! \int_{|k'|\leq K}\!  \int_{|k|\leq K}\!   \big|k\widehat{f}(k)k'\widehat{f}(k')\big|\frac{M(k,k')}{|k k'|}\mathrm{d}k\mathrm{d}k'\! +\! 2\! \int_{|k'|\leq K}\!  \int_{|k|> K}\!   \big|\widehat{f}(k)k'\widehat{f}(k')\big|\frac{M(k,k')}{|k'|}\mathrm{d}k\mathrm{d}k'\\
&\ \ \ \ \ \ +\int_{|k'|> K} \int_{|k|> K}  \big|\widehat{f}(k)\widehat{f}(k')\big|M(k,k')\mathrm{d}k\mathrm{d}k'.
\end{align*}
Making use of the fact that $\left|\frac{M(k,k')}{k k'}\right|\leq \|g''\|_\infty\lesssim \frac{1}{K}$, $\left|\frac{M(k,k')}{k'}\right|\leq 2\|g'\|_\infty\lesssim 1$ and $\left|M(k,k')\right|\leq 4\|g\|_\infty\lesssim K$, we obtain 
\begin{align*}
&2\pi\|[[g(\frac{1}{i}\frac{\mathrm{d}}{\mathrm{d}t}),f(t)],f(t)]\|_\mathrm{op}\lesssim \frac{1}{K}\left(\int_{|k|\leq K}\big|k\widehat{f}(k)\big|\mathrm{d}k\right)^2\\
&\ \ \ \ +2\int_{|k'|\leq K}\big|k'\widehat{f}(k')\big|\mathrm{d}k'\int_{|k|> K}\frac{1}{|k|}\big|k\widehat{f}(k)\big|\mathrm{d}k+K\left(\int_{|k|> K}\frac{1}{|k|}\big|k\widehat{f}(k)\big|\mathrm{d}k\right)^2\\
&\leq \frac{2}{K}\left(\int_{|k|\leq K}\big|k\widehat{f}(k)\big|\mathrm{d}k\right)^2+2K\left(\int_{|k|> K}\frac{1}{|k|}\big|k\widehat{f}(k)\big|\mathrm{d}k\right)^2\\
&\leq \|f'\|^2\left(\frac{2}{K}\int_{|k|\leq K} \mathrm{d}k+2K\int_{|k|>K}\frac{1}{|k|^2}\mathrm{d}k\right)\leq 8\|f'\|^2.
\end{align*}

In order to estimate the operator norm of $[g(\frac{1}{i}\frac{\mathrm{d}}{\mathrm{d}t}),f(t)]$, we can assume as above that $f$ is smooth and compactly supported. We compute
\begin{align*}
\sqrt{2\pi}&\left\|[g\left(\frac{1}{i}\frac{\mathrm{d}}{\mathrm{d}t}\right),f(t)]\right\|_\mathrm{op}\leq \int \big|\widehat{f}(k)\big| \left\|g\left(\frac{1}{i}\frac{\mathrm{d}}{\mathrm{d}t}+k\right)-g\left(\frac{1}{i}\frac{\mathrm{d}}{\mathrm{d}t}\right)\right\|_\mathrm{op} \mathrm{d}k\\
&\leq \left\|g'\right\|_\infty\int_{|k|\leq K} \big|k\widehat{f}(k)\big|  \mathrm{d}k+2\left\|g\right\|_\infty \int_{|k|> K} \frac{1}{|k|}\big|k\widehat{f}(k)\big|  \mathrm{d}k\\
&\leq \sqrt{2K}\left\|g'\right\|_\infty \|f'\|+\sqrt{\frac{8}{K}}\left\|g\right\|_\infty \|f'\|.
\end{align*}
Using $\left\|g'\right\|_\infty\lesssim 1$ and $\left\|g\right\|_\infty\lesssim K$ concludes the proof.
\end{proof}

\begin{lem}
\label{Lemma: Momentum Cut-off}
For $K>0$ we have the estimate $\|\chi\left(|\nabla|> K\right)\nabla \varphi^\mathrm{Pek}\|\lesssim \frac{1}{\sqrt{K}}$.
\end{lem}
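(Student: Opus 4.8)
The plan is to pass to Fourier space, where the statement reduces to an elementary tail estimate. Writing $\widehat{\varphi^\mathrm{Pek}}$ for the Fourier transform of the (radial, real) Pekar minimizer and $\|\cdot\|$ for the $L^2$-norm of the resulting $\mathbb{R}^3$-valued function, Plancherel's theorem gives
\begin{align*}
\left\|\chi\left(|\nabla|>K\right)\nabla \varphi^\mathrm{Pek}\right\|^2=\int_{|k|>K}|k|^2\left|\widehat{\varphi^\mathrm{Pek}}(k)\right|^2\mathrm{d}k ,
\end{align*}
so the lemma is simply a quantitative decay bound for $|k|\,\widehat{\varphi^\mathrm{Pek}}(k)$ in $L^2$ of the complement of the ball $B_K(0)$.

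The only ingredient needed is the $H^2$-regularity of the Pekar minimizer, $\varphi^\mathrm{Pek}\in H^2\!\left(\mathbb{R}^3\right)$, which is classical and already invoked in the excerpt (see \cite{Li,MS}); equivalently, $\int \left(1+|k|^2\right)^2\left|\widehat{\varphi^\mathrm{Pek}}(k)\right|^2\mathrm{d}k=\|\varphi^\mathrm{Pek}\|_{H^2}^2<\infty$. On the region $\{|k|>K\}$ one has $1+|k|^2\geq 1+K^2$, hence $|k|^2\leq \frac{1+|k|^2}{1+K^2}|k|^2\leq \frac{\left(1+|k|^2\right)^2}{1+K^2}$, and therefore
\begin{align*}
\left\|\chi\left(|\nabla|>K\right)\nabla \varphi^\mathrm{Pek}\right\|^2\leq \frac{1}{1+K^2}\int \left(1+|k|^2\right)^2\left|\widehat{\varphi^\mathrm{Pek}}(k)\right|^2\mathrm{d}k=\frac{\|\varphi^\mathrm{Pek}\|_{H^2}^2}{1+K^2}.
\end{align*}
Since $\frac{1}{1+K^2}\leq \frac1K$ for every $K>0$, taking square roots yields $\left\|\chi\left(|\nabla|>K\right)\nabla \varphi^\mathrm{Pek}\right\|\lesssim K^{-1/2}$, as claimed.

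I do not expect a genuine obstacle here: the argument is routine once the regularity of $\varphi^\mathrm{Pek}$ is available. In fact the same computation gives the stronger bound $\left\|\chi\left(|\nabla|>K\right)\nabla \varphi^\mathrm{Pek}\right\|\lesssim K^{-1}$, and even faster decay using that $\varphi^\mathrm{Pek}$ is smooth; the weaker form $\lesssim K^{-1/2}$ is all that is used in the sequel. If one preferred not to cite $\varphi^\mathrm{Pek}\in H^2$, the same conclusion follows from the Euler--Lagrange equation $\varphi^\mathrm{Pek}=(-\Delta)^{-1/2}\big(\psi^\mathrm{Pek}\big)^2$, which turns the integrand into $\left|\widehat{(\psi^\mathrm{Pek})^2}(k)\right|^2$ and reduces the estimate to $H^1$-regularity of $(\psi^\mathrm{Pek})^2$; but invoking $\varphi^\mathrm{Pek}\in H^2$ directly is the shortest route.
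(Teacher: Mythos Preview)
Your proof is correct. Both your argument and the paper's proceed by passing to Fourier space and invoking regularity, but the specific input differs. The paper uses the Euler--Lagrange representation $\varphi^\mathrm{Pek}=4\sqrt{\pi}(-\Delta)^{-1/2}|\psi^\mathrm{Pek}|^2$, so that $|\widehat{\nabla\varphi^\mathrm{Pek}}(k)|=|\widehat{|\psi^\mathrm{Pek}|^2}(k)|$, and then appeals to $\psi^\mathrm{Pek}\in H^2$ to get $\||k|^2\widehat{|\psi^\mathrm{Pek}|^2}\|_\infty<\infty$, whence the tail integral is controlled by $\int_{|k|>K}|k|^{-4}\,\mathrm{d}k\lesssim K^{-1}$. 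You instead invoke $\varphi^\mathrm{Pek}\in H^2$ directly and bound the tail by $\|\varphi^\mathrm{Pek}\|_{H^2}^2/(1+K^2)$. Your route is shorter since the $H^2$-regularity of $\varphi^\mathrm{Pek}$ is already quoted elsewhere in the paper; the paper's route has the mild advantage of making explicit where the $H^2$-bound on $\varphi^\mathrm{Pek}$ ultimately comes from. You in fact anticipate the paper's approach in your closing remark, so there is no substantive divergence.
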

\begin{proof}
We can write $\varphi^\mathrm{Pek}=4\sqrt{\pi}\left(-\Delta\right)^{-\frac{1}{2}}\left|\psi^\mathrm{Pek}\right|^2$, where $\psi^\mathrm{Pek}$ is as in Definition \ref{Definition: Quadratic Approximation}. Hence the Fourier transform of $\nabla \varphi^\mathrm{Pek}$ reads $\widehat{\nabla \varphi^\mathrm{Pek}}(k)=\frac{ik}{|k|}\widehat{\left|\psi^\mathrm{Pek}\right|^2}(k)$, and therefore
\begin{align*}
\|\chi\left(|\nabla|> K\right)\nabla \varphi^\mathrm{Pek}\|^2=\int_{|k|>K}\left|\widehat{\left|\psi^\mathrm{Pek}\right|^2}(k)\right|^2\mathrm{d}k\leq \left\| |k|^2 \widehat{\left|\psi^\mathrm{Pek}\right|^2}(k)\right\|_\infty^2 \int_{|k|>K}\frac{1}{|k|^4}\mathrm{d}k\lesssim \frac{1}{K},
\end{align*}
where we used $\psi^\mathrm{Pek}\in H^2\! \left(\mathbb{R}^3\right)$ and consequently $\left\| |k|^2 \widehat{\left|\psi^\mathrm{Pek}\right|^2}(k)\right\|_\infty<\infty$. 
\end{proof}

\begin{lem}
\label{Lemma: Compatibility}
With $\Pi$  the projection defined in Definition \ref{Definition: Finite dimensional Projection}, we have
\begin{align*}
\left\|[|\nabla|,\Pi]\right\|_{\mathrm{op}}\lesssim \alpha^{-4(1+\sigma)}.
\end{align*}
\end{lem}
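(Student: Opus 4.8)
The plan is to diagonalise everything on the Fourier side. After conjugating by the Fourier transform, $|\nabla|$ becomes the (unbounded) multiplication operator $M$ by $|k|$, and $\Pi$ becomes the orthogonal projection onto $V:=\operatorname{span}\{v_z\}$, where $v_z(k):=|k|^{-1}\mathds{1}_{C_z}(k)$ and $C_z$ runs over the finitely many cubes of side $2\ell$, $\ell=\alpha^{-4(1+\sigma)}$, with $C_z\subset B_\Lambda(0)$ and $z\in 2\ell\mathbb{Z}^3\setminus\{0\}$ as in Definition \ref{Definition: Finite dimensional Projection}. Since $z\neq0$ forces at least one component $|z_i|\ge 2\ell$, one has $|k|\ge\ell$ on $C_z$; hence $\|v_z\|^2\le\ell^{-2}|C_z|<\infty$ and $\|Mv_z\|^2=|C_z|<\infty$, so each $v_z\in H^1$, $V$ is finite dimensional, and $M\Pi$, $\Pi M$ and $[M,\Pi]$ all extend to bounded operators. (This is the only place the exclusion of the origin enters.)

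The key estimate is the trivial Lipschitz bound $\bigl||k|-|k'|\bigr|\le|k-k'|$: on each cube $C_z$ the multiplier $|k|$ differs from the constant $|k_z|$ (centre of $C_z$) by at most $\sqrt3\,\ell$ (the centre-to-corner distance). I introduce the step approximation $\widetilde M:=\sum_z|k_z|\mathds{1}_{C_z}$, a bounded self-adjoint multiplication operator, and check that $[\widetilde M,\Pi]=0$: disjointness of the cubes gives $\widetilde M v_z=|k_z|v_z\in V$, so $\widetilde M V\subseteq V$, while for $w\perp V$ one has $\langle\widetilde M w,v_z\rangle=|k_z|\langle w,v_z\rangle=0$, so $\widetilde M V^\perp\subseteq V^\perp$; thus $\widetilde M$ is reduced by $V$ and $[|\nabla|,\Pi]=[M,\Pi]=[M-\widetilde M,\Pi]$.

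It then remains to bound $[M-\widetilde M,\Pi]=(M-\widetilde M)\Pi-\Pi(M-\widetilde M)$ termwise. For any $\psi$, the vector $\Pi\psi$ is supported in $\bigcup_zC_z$, where $\bigl|M-\widetilde M\bigr|=\bigl||k|-|k_z|\bigr|\le\sqrt3\,\ell$, so $\|(M-\widetilde M)\Pi\psi\|\le\sqrt3\,\ell\,\|\psi\|$, i.e. $\|(M-\widetilde M)\Pi\|_{\mathrm{op}}\le\sqrt3\,\ell$. Since $M-\widetilde M$ is self-adjoint with domain $H^1\supseteq\operatorname{ran}\Pi$ and $\Pi$ maps into $H^1$, the operator $\Pi(M-\widetilde M)$ (a priori defined on $H^1$) is a restriction of $\bigl((M-\widetilde M)\Pi\bigr)^{*}$ and therefore obeys the same bound; the triangle inequality yields $\|[|\nabla|,\Pi]\|_{\mathrm{op}}\le 2\sqrt3\,\ell=2\sqrt3\,\alpha^{-4(1+\sigma)}$.

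I do not expect a genuine obstacle here: the argument is just ``$|k|$ is almost locally constant at scale $\ell$''. The only points requiring care are purely formal --- verifying that $\Pi$ has finite-dimensional range and that the relevant products of the unbounded operator $|\nabla|$ with $\Pi$ are well defined and bounded (handled by the separation $|k|\ge\ell$), and justifying the adjoint manipulation in the last step --- rather than any real analytic difficulty.
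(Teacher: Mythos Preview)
Your proof is correct and follows essentially the same approach as the paper: both work on the Fourier side and exploit that $|k|$ varies by at most $O(\ell)$ on each cube $C_z$. Your introduction of the piecewise-constant comparison operator $\widetilde M$ is a clean repackaging of the paper's direct computation (which instead writes out $\widehat{[|\nabla|,\Pi]\varphi}$ explicitly and applies Bessel's inequality), but the underlying idea and the resulting constant are the same.
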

\begin{proof}
Using the Fourier transformation, we can write $\widehat{\Pi \varphi}(k)=\sum_{n=1}^N \braket{f_n|\widehat{\varphi}} f_n(k)$, with the help of non-negative functions $f_n$ having pairwise disjoint support, which additionally satisfy $\|f_n\|=1$ and $\mathrm{supp}\left(f_n\right)\subset B_{\sqrt{3}\alpha^{-4(1+\sigma)}}\left(z^n\right)$ for some $z^n \in \mathbb{R}^3$. Therefore 
\begin{align*}
\widehat{[|\nabla|,\Pi]\varphi}(k)\! =\! \sum_{n=1}^N\!  \left(\braket{f_n|\widehat{\varphi}}|k|\! -\! \Big\langle f_n\Big |\widehat{|\nabla|\varphi}\Big\rangle \right)\! f_n(k)\! =\! \sum_{n=1}^N \int\!  f_n(k')\widehat{\varphi}(k')\! \left(|k|\! -\! |k'|\right) \mathrm{d}k' f_n(k).
\end{align*}
Using that the functions $f_n$ have disjoint support, as well as the fact that $\left||k|-|k'|\right|\leq 2\sqrt{3}\alpha^{-4(1+\sigma)}$ for $k,k'\in \mathrm{supp}\left(f_n\right)$, we obtain furthermore
\begin{align*}
&\left\|[|\nabla|,\Pi]\varphi\right\|^2=\sum_{n=1}^N \int \left|\int\!  f_n(k')\widehat{\varphi}(k')\! \left(|k|\! -\! |k'|\right) \mathrm{d}k'\right|^2 |f_n(k)|^2\mathrm{d}k\\
&\ \ \ \ \leq 12\alpha^{-8(1+\sigma)}\sum_{n=1}^N \left|\int\!  f_n(k')\left|\widehat{\varphi}(k')\right| \mathrm{d}k'\right|^2\leq 12\alpha^{-8(1+\sigma)}\left\| |\widehat{\varphi}|\right\|^2=12\alpha^{-8(1+\sigma)}\|\varphi\|^2,
\end{align*}
where we have used that $f_n$ is an orthonormal system.
\end{proof}

\bibliographystyle{plain}

\end{document}